\pgfplotsset{compat=1.16}
\definecolor{blue}{RGB}{43,147,206}
\definecolor{GmailBlue}{RGB}{42, 93, 176}
\definecolor{red}{RGB}{221,126,0}
\definecolor{orange}{RGB}{208,106,11} 
\definecolor{green}{RGB}{0,158,115} 
\definecolor{gray}{RGB}{73, 73, 73}
\definecolor{yellow}{RGB}{230,168,0}
\definecolor{pink}{RGB}{211,0,214}
\newcommand\org@hypertarget{}
\let\org@hypertarget\hypertarget
\renewcommand\hypertarget[2]{%
	\Hy@raisedlink{\org@hypertarget{#1}{}}#2%
} \makeatother
\renewcommand*{\NAT@spacechar}{\ }
\theoremstyle{plain}
\newtheorem{theorem}{Theorem}
\newtheorem{proposition}{Proposition}
\newtheorem{lemma}{Lemma}
\newtheorem{corollary}{Corollary}
\theoremstyle{definition}
\newtheorem{definition}{Definition}
\newtheorem{remark}{Remark}
\newtheorem{assumption}{Assumption}
\let\originalleft\left
\let\originalright\right
\renewcommand{\left}{\mathopen{}\mathclose\bgroup\originalleft}
\renewcommand{\right}{\aftergroup\egroup\originalright}
\newcommand*\diff{\mathop{}\!\mathrm{d}}
\DeclareMathOperator*{\argmax}{argmax}
\title{\bf Persuading an inattentive and privately informed receiver\thanks{
		I am grateful for comments from Jacopo Bizzotto, Giacomo Calzolari, Elias Carroni, Ryan Chahrour, Vincenzo Denicol\`o, Mehmet Ekmekci, Marco Errico, Jan Knoepfle, Hideo Konishi, Stephan Lauermann, Niccol\`o Lomys, Chiara Margaria, Laurent Mathevet, Teddy Mekonnen, Marco Pagnozzi, Stefano Piasenti, Guillaume Pommey, Giacomo Rubbini, Fernando Stragliotto, Junze Sun, Utku \"Unver, Dong Wei, Bumin Yenmez, and audiences at U.~of Bologna, EUI (Micro group), EAYE, ESEM, Brown Theory/Experimental Lunch Seminar, BU reading group, and Queen Mary PhD Workshop. This paper was previously circulated with the title ``The extensive margin of Bayesian persuasion.'' Refine.ink~was used to check the paper for consistency and clarity.}}
\author{\href{https://sites.google.com/view/pietrodallara12}{\color{black}{Pietro Dall'Ara}}\thanks{CSEF and University of Naples Federico II; contact: \href{mailto:pietro.dallara@unina.it}{pietro.dallara@unina.it}.}}
\date{\specialdate\today}
\begin{document}	
	\maketitle
	\begin{abstract}
		This paper studies the persuasion of a receiver who accesses information only if she exerts costly attention effort. A sender designs an experiment to persuade the receiver to take a specific action. The experiment affects the receiver's attention effort, that is, the probability that she updates her beliefs. Persuasion has two margins: an extensive (effort) and an intensive (action). The receiver's utility exhibits a supermodularity property in information and effort. By leveraging this property, we establish an equivalence between experiments and persuasion mechanisms \`a la Kolotilin et al.~(2017). In applications, the sender's optimal strategy involves censoring favorable states.
	\end{abstract}
	\noindent {Keywords}: Persuasion; Inattention; Information Acquisition; Information Design.\\
	{JEL codes}: D82, D83, D91.
	
	\newpage
		\tableofcontents 
	\newpage
	\section{Introduction}
	In the ``information age,'' consumers evaluate whether information sources are worth their attention because learning takes effort and time \citep{simon_knowledge_1996, floridi_fourth_2014}. The persuasion literature studies how a sender, such as an advertiser or media outlet, provides information to persuade a receiver to take a specific action \citep{kamenica_bayesian_2019}. When attention is costly, the sender faces an additional problem: the receiver can be persuaded only if she pays attention. This paper studies a persuasion model in which the sender's information affects the attention effort of a receiver who privately knows the costs and benefits of information.

	The \emph{intensive} margin of persuasion refers to the sender's influence on the receiver's action, given that the receiver is attentive, whereas the \emph{extensive} margin refers to whether the receiver pays attention to the information. The study of the extensive margin is important to understand how consumers allocate attention to product advertisements and news content. The choice of attention ultimately determines the success of marketing campaigns, and the spread of information across heterogeneous audiences.

	To study the extensive and intensive margins of persuasion, we introduce an attention decision into a persuasion game with two players: Sender (he) and Receiver (she). In the first stage of the game, Sender designs a signal, a random variable that is jointly distributed with an unknown state. Receiver then chooses her attention effort after observing the signal’s distribution, but before its realization. By exerting costly effort, Receiver increases the probability of observing the realization of the signal. In the last stage of the game, Receiver takes a binary action: 1 or 0. The interests of players conflict because Receiver chooses action 1 only if she expects the state to exceed her outside option, whereas Sender wants her to choose 1 regardless of the state. Both outside option and effort cost constitute the privately known {type} of Receiver. The outside option reflects the benefits of information, because it is unlikely that a piece of information is useful if the available outside option is very attractive. Similar games are applied to study the persuasion of voters, electoral manipulation, and credit-rating agencies \citep{alonso_persuading_2016AER, gehlbach_electoral_2015, bizzotto_fees_2021}.

	The correlation between the state and the signal affects both the attention effort $e$ of Receiver (the extensive margin) and her action after observing the signal realization (the intensive margin). Specifically, Receiver updates her beliefs with probability $e$, and does not update with the remaining probability. The effort represents the acquisition of information, and its costs can be monetary, such as subscription fees, or cognitive, such as mental exertion. This attention model is less general than those with flexible information acquisition \citep*{caplin_rationally_2022, pomatto_cost_2023}, as Receiver only chooses the probability with which she uniformly observes every signal realization. This parsimonious model accommodates asymmetric information between Sender and Receiver, and a general functional form of the cost of effort.\footnote{Typical applications of flexible information acquisition rely on functional-form assumptions and define costs over belief distributions. The information cost in this model is ``experimental'' because Receiver effectively chooses mixtures of full and null information about the signal \citep*{denti_experimental_2022}.}

	In the model, the utility of Receiver is supermodular in information and effort (Corollary \ref{cor:superm}). In particular, the return from effort increases in a type-specific informativeness order, which is a completion of Blackwell's order. This property is a complementarity between information and attention effort. Complementarity is a feature of information acquisition that is likely to arise from sources like news outlets and advertisements. For instance, this feature emerges when the voters' willingness to subscribe to a newspaper increases as space dedicated to election coverage expands, and when TV audiences pay more attention to increasingly informative advertisements.
	There is empirical evidence that product awareness increases in the informative content of ads \citep*{honka_advertising_2017, tsai_informational_2021}. This paper analyzes the scope of persuasion in these settings.

	We establish the equivalence between persuasion mechanisms and signals (Theorem \ref{thm:mechanisms}). A persuasion mechanism is a menu of signals, one for every Receiver's report of her type. Under a persuasion mechanism, Receiver makes a report and chooses her effort. Specifically, Receiver chooses the probability with which she observes the signal that corresponds to her report. For every persuasion mechanism, there is a signal that induces the same action and effort of all types. The key step in the proof is to construct a signal that ``allocates'' to each type the same type-specific informativeness as the mechanism. This step establishes the equivalence with respect to effort choices. The constructed signal also replicates Receiver's optimal action; so, the equivalence in \citet*{kolotilin_persuasion_2017} arises in the special case of costless effort. As an implication, an information provider need not offer a fine collection of targeted experiments, and the analysis of the extensive margin can focus on single signals without loss of generality.

	We characterize the optimal signal in applications and show that it censors high states. An upper censorship is a signal that reveals low states and pools high states, illustrated in Figure \ref{fig:cens:a}.
		\begin{figure}[t]
		\centering 
		\subfloat[An upper censorship is a signal that reveals the state $\theta$ if $\theta$ is below a threshold $\overline \theta$ and sends a single message, \textsf{m}, if the state is above $\overline \theta$.\protect\label{fig:cens:a}]{
			\begin{tikzpicture}[scale=0.86]
				\draw [ultra thick, densely dotted, darkgray]  (0,0) -- (4,0);
				\draw [ultra thick, blue]  (4,0) -- (6,0);
				\draw [thick] (0,-.2) -- (0, .2);
				\draw [thick] (4,-.2) -- (4, .2);
				\draw [thick] (6,-.2) -- (6, .2);
				\node[align=left, below] at (2.07, .7)%
				{\footnotesize $\theta$ is revealed};
				\node[align=right, below] at (5,.7)%
				{\footnotesize \textsf{m}};
				\node [above] at (0,-.8) {\footnotesize 0};
				\node [above] at (6,-.8) {\footnotesize 1};
				\draw [very thick , decorate, decoration = {calligraphic brace,raise=1.5pt}] (4,0) --  (6,0);
				\node [above] at (4,-0.8) {\footnotesize $\overline \theta$};
			\end{tikzpicture}
		}
		\qquad
		\subfloat[A bi-upper censorship is a signal that reveals the state $\theta$ if $\theta$ is below a lower threshold $\theta_1$, sends a message \textsf{m\textsubscript{1}} if the state is between $\theta_1$ and an upper threshold $\theta_2$, and sends a different message \textsf{m\textsubscript{2}} if the state is above $\theta_2$.\protect\label{fig:cens:b}]{
			\begin{tikzpicture}[scale=0.86] 
				\draw [ultra thick, densely dotted, darkgray]  (0,0) -- (3.5,0);
				\draw [ultra thick, green]  (3.5,0) -- (5,0);
				\draw [ultra thick, red]  (5,0) -- (6,0);
				\draw [thick] (0,-.2) -- (0, .2);
				\draw [thick] (3.5,-.2) -- (3.5, .2);
				\draw  (5,-.1) -- (5, .1);
				\draw [thick] (6,-.2) -- (6, .2);
				\node[align=center, below] at (1.77,.7)
				{ \footnotesize $\theta$ is revealed};
				\node[align=right, below] at (4.33,.7)
				{ \footnotesize \textsf{m\textsubscript{1}}};
				\node[align=right, below] at (5.56,.7)
				{ \footnotesize \textsf{m\textsubscript{2}}};
				\node [above] at (0,-.8) {\footnotesize 0};
				\node [above] at (6,-.8) {\footnotesize 1};
				\draw [very thick, decorate, decoration = {calligraphic brace,raise=1.5pt}] (3.5,0) --  (5,0);
				\draw [very thick, decorate, decoration = {calligraphic brace,raise=1.5pt}] (5,0) --  (6,0);
				\node [above] at (3.5,-0.8) {\footnotesize $\theta_1$};
				\node [above] at (5,-0.8) {\footnotesize $\theta_2$};
			\end{tikzpicture}
		}
		\caption{An upper censorship (a) and a bi-upper censorship (b), for a state $\theta$ supported on $[0,1]$.}
		\protect\label{fig:cens}
	\end{figure}
	Upper censorships are optimal if the outside option follows a single-peaked distribution (Theorem \ref{thm:existence}). In the costless-attention case, the result follows directly from the shape of the noise in the Receiver's action given her posterior belief. The noise --- perceived by Sender --- is exogenous and due to asymmetric information about the outside option. Our result accounts for the endogenous noise that is due to the choice of effort.  The equilibrium upper censorship is more informative if effort costs a small amount known to Sender than if effort is costless (Proposition \ref{prop:cost}). Moreover, the equilibrium upper censorship provides more information as the effort cost increases in the reverse-hazard-rate order, if the outside option is known to Sender (Proposition \ref{prop:reverse}). We also consider an extension inspired by models of media capture. In this model, Sender values Receiver's effort directly, not only because effort ultimately affects the Receiver's action.  \citet{gehlbach_government_2014} microfound these preferences by modeling advertising revenues.\footnote{See \citet{alonso_competitive_2025} for a recent contribution and literature review.} We show that ``bi-upper censorships'' are optimal signals (Proposition \ref{prop:unique}, Figure \ref{fig:cens:b}). In the proof, the additional censorship region allows Sender to separately control the extensive and intensive margins. These results suggest that attention constraints can push information providers to supply more information.

	The model captures a form of costly attention based on overall exposure to a communication environment, rather than on a granular fine-tuning of specific learning activities. In contrast to models in which attention is allocated to specific signals or states, here the key friction is the overall burden of exposure to a stream of messages. We interpret the effort cost as related to the time and mental energy spent engaging with the Sender’s messages, which can be high even for uninformative content. For example, consider a user who scrolls through a news feed containing product advertisements, friends' personal updates, and news about current affairs. The platform controlling the news-feed algorithm wants both to generate high attention and to steer users towards advertised products and specific news outlets. The user can read the entire feed, or adopt a brief cognitive practice, such as ``snacking'' or skimming, which processes less information and is less mentally taxing.\footnote{Cognitive load, measured via pupil responses, is higher for some headlines than others, showing that processing news-like content in feeds consumes scarce cognitive resources \citep*{shi_true_2023}.} This exposure-based cost is natural if the cognitive frictions of the user are bundled with other costs --- such as platform fees, foregone outside opportunities, and fatigue --- that are incurred regardless of content in the feed. Our results are informative for these settings, in which we study equilibrium attention, persuasion mechanisms, and how higher attention costs induce more information, in Section \ref{sec:receiver}, \ref{sec:mechanisms}, and \ref{sec:application}, respectively.\footnote{The results in Section \ref{sec:receiver} and \ref{sec:mechanisms} do not depend on the preferences of Sender, and the preferences in the hypothesis of Proposition \ref{prop:unique} capture an attention-influence tradeoff.} Other ways of incorporating rational inattention in communication admit different interpretation and capture other economic environments; see Section \ref{sec:model:discussion}.

	
	\paragraph*{Related literature} 
	Existing work considers persuasion without costly information acquisition \citep[e.g.]{kamenica_bayesian_2011}. The optimality properties of upper censorships are known, and the equivalence between persuasion mechanisms and signals is shown by \citet*{kolotilin_persuasion_2017}. We generalize these results to the case of Receiver's costly effort and privately known effort cost. The key distinction is about attention costs: this model allows for costly attention, which allows for new comparative statics (Proposition \ref{prop:cost} and \ref{prop:reverse}), and new applications (Proposition \ref{prop:unique}) of the model. This model is not nested in the fruitful ``mean-measurable'' paradigm because, in equilibrium, effort is a function of the entire posterior-mean distribution, not of a single posterior mean; this observation is implied by Lemma \ref{lem:receiver} and the Sender's maximand in Lemma \ref{app:lem:uniqueness}. Hence, known techniques cannot be applied off the shelf \citep*{kolotilin_optimal_2018, dworczak_simple_2019, kleiner_extreme_2021}.
	
	The persuasion of an inattentive Receiver is studied without private information. In \citet{wei_persuasion_2021}, the attention cost is posterior separable. As a result of costly attention and symmetric information, the optimal signal is binary, and Receiver pays full attention in equilibrium. In the main model of \citet{bloedel_persuading_2021}, the attention cost is proportional to entropy reduction of the belief, and upper censorships are optimal. The connection with these approaches is described in Section \ref{sec:model:discussion}. Certain dynamic models of persuasion include costly attention \citep*{liao_2021_bayesian, au_attraction_2023, che_keeping_2023, jain_competitive_2022}, although the focus of these binary-state models is on the intertemporal flow of information; see also: \citet*{branco_too_2016, board_competitive_2018, knoepfle_dynamic_2020, hebert_engagement_2024}. The belief of a psychological Receiver arises from an optimization problem that typically occurs after the signal is realized, rather than before the signal realization as the choice of effort in this model \citep*{lipnowski_disclosure_2018, galperti_persuasion_2019, declippel_nonbayesian_2022, augias_persuading_2023}. In models of salience-based attention, an agent responds to salient states through belief distortions that bias attention; e.g., \citet*{ba_over_2025}. This behavioral mechanism is excluded from this model.
	
	Existing work studies information acquisition with different Sender's incentives and additional information sources. The literature on attention management considers Receiver's attention given a benevolent Sender, who maximizes Receiver's material payoff ignoring attention cost \citep*{lipnowski_attention_2020}. The literature on persuasion with acquisition of ``outside information'' studies extra information beyond what Sender provides \citep*{brocas_influence_2007, bizzotto_testing_2020, matyskova_bayesian_2023}. 
	
	
	\paragraph*{Outline}
	Section \ref{sec:model} describes the model, Section \ref{sec:receiver} analyzes the equilibrium attention and action of Receiver, Section \ref{sec:mechanisms} describes the equivalence between persuasion mechanisms and signals, and Section \ref{sec:application} considers upper censorships. Section \ref{sec:model:discussion} discusses alternative approaches to inattention in information design. All proofs are in Appendix \ref{app:sec:proofs}.
	
	\section{Model} \label{sec:model}
	\subsection{Players, actions, and payoffs}
	Two players, Sender (he) and Receiver (she), play the following persuasion game. Receiver chooses action $a\in\{0,1\}$ and effort $e\in[0,1]$, knowing her type $(c, \lambda)\in[0,1]^2$. The material payoff of action $a$, given state $\theta \in \Theta \coloneqq [0,1]$, is $a(\theta -c)$, and the cost of effort $e$ is $\lambda k(e)$, for a continuous function $k\colon [0,1]\to\mathbb R$ and given type $(c, \lambda)$. The \emph{outside-option type} $c$ represents the opportunity cost of taking the risky action, 1, and the \emph{attention type} $\lambda$ scales the effort cost. The utility of Receiver is her material payoff net of effort cost, and is given by
	\begin{align*}
		U_R(\theta, a, e,c, \lambda) := a(\theta -c) -\lambda k(e).
	\end{align*}

	Sender chooses a signal about the state, a measurable $\pi\colon\Theta \to \Delta M$, in which $\Delta M$ is the set of Borel probability distributions over the rich message space $M$.\footnote{In this model, letting $M=[0,1]$ is sufficient (Appendix \ref{app:sec:eq:eq}); the representation of signals as convex functions used in the rest of the paper is in Section \ref{sec:model:policies}.} The utility of Sender is given by $U_S(a):=a$. The results in Section \ref{sec:mechanisms} do not depend on the Sender's utility, and in Proposition  \ref{prop:unique} we consider a linear function of $a$ and $e$ as Sender's utility for comparative statics and applications.
	
	\begin{remark}
			The continuity of  $k$ ensures a well-defined equilibrium effort, and the main results on the extensive margin, persuasion mechanisms, and upper censorship --- Corollary \ref{cor:superm}, Lemma \ref{lem:interval}, Theorem \ref{thm:mechanisms}, and Theorem \ref{thm:existence} --- hold in this setting. Essentially, Corollary \ref{cor:superm} establishes enough structure to have these results, and does not depend on additional properties of the cost function, such as monotonicity.
			
			The characterization of the equilibrium signals and comparative statics (Proposition \ref{prop:cost}, \ref{prop:reverse}, and \ref{prop:unique}) use the linear cost given by $k(e)=e$. With linear costs, effectively, the attention choice is binary, that is, no effort or full attention. This choice captures costs that need not be of a cognitive nature, such as a market price or a fixed fee to access information. 
	\end{remark}

	\subsection{Information and timing} \label{sec:model:model}
	\paragraph*{Information}
	Let $\mathcal D$ be the set of distributions over $[0,1]$ identified by their distribution functions. The state $\theta$ is distributed according to an atomless distribution $F_0	\in \mathcal D$, the \emph{prior belief}, with mean $x_0$. The Receiver's type is independent of $\theta$ and admits a marginal distribution of the attention type $\lambda$, $G\in \mathcal D$, and a conditional distribution of the outside option $c$ given $\lambda$, $G(\cdot|\lambda)\in \mathcal D$.
	
	\paragraph*{Timing}
	First, Sender chooses a signal, without knowing either the state or the Receiver's type $(c, \lambda)$. Second, Receiver chooses effort $e$, knowing her type $(c, \lambda)$ and the signal. Third, Nature draws the state $\theta$ according to $F_0$, and the signal realization from $\pi(\theta)$. Then, with probability $e$, Receiver observes the signal realization, she updates her belief about the state using Bayes' rule and chooses an action given her posterior belief; with probability $1-e$, Receiver does not observe the signal realization and chooses an action given the prior belief. The equilibrium notion is Perfect Bayesian Equilibrium (Appendix \ref{app:sec:eq:eq}). Figure \ref{fig:timing} illustrates the timing.		\begin{figure}[t!]
		\centering
		\begin{tikzpicture}[
			node distance=1.9cm,
			every node/.style={align=center, font=\footnotesize},
			dashedarrow/.style={-{Latex[width=1.8mm,length=1.9mm]}, very thick, densely dotted},
			arrow/.style={-{Latex[width=1.8mm,length=2mm]}, very thick},
			dot/.style={circle, fill, minimum size=3pt, inner sep=0pt},
			scale=0.9,
			transform shape
			]
			\node (sender) {Sender\\chooses \\ signal};
			\node (nature1) [right=of sender] {Nature\\draws \\ type $(c, \lambda)$};
			\node (receiver) [right=of nature1] {Receiver\\chooses \\ effort $e$};
			\node (obs) [right= of receiver] {Receiver\\observes\\signal \\ realization};
			\node (action) [right=of obs] {Receiver\\chooses \\ action $a$};
			\node (noobs) [right=1.9 of action, below=1.5cm of action]{Receiver\\chooses \\ action $a$};
			\draw[arrow] (sender) -- (nature1);
			\draw[arrow] (nature1) -- (receiver);
			\draw[arrow] (obs) -- (action);
			\draw[dashedarrow] (receiver) to[out=0, in=-180]  node[above, draw=none] {with \\ probability \\ $e$} (obs);
			\draw[dashedarrow] (receiver) to[out=-90, in=-180] node[above, draw=none] {with \\ probability \\ $1-e$} (noobs);
		\end{tikzpicture}
		\caption{The timeline of the game.}
		\label{fig:timing}
	\end{figure}

	\subsection{Information policies} \label{sec:model:policies}
	Without loss, signals can be represented by the distributions of the posterior belief's mean induced on a Bayesian player who observes the signal realization.\footnote{Signals can be represented by posterior-mean distributions in persuasion games that are ``mean-measurable'' (as this model) and  in which Receiver pays attention (unlike this model.) Appendix \ref{app:sec:eq} shows that this equivalence holds in this model.} Given the presence of Receiver's effort, it pays off to represent signals by the integrals of such distributions, called ``information policies''.

	Define the \emph{information policy of $F\in \mathcal D$} as the function $I_F \colon  \mathbb R_+ \to \mathbb R_+ $ such that
	\begin{align*}
		I_F \colon x \mapsto \int_0^xF(y)\diff y,
	\end{align*}
	the set of feasible distributions $\mathcal F= \{ F \in \mathcal D  \mid  I_F(1) = I_{F_0}(1),\ I_F(x) \leq I_{F_0}(x) \ \text{for all} \ x\in \mathbb R_+ \}$,
	and the set of information policies $\mathcal I = \{ I \colon \mathbb R_+ \to \mathbb R_+ \mid   I \ \text{is convex,} \ I_{\overline F} (x)\leq I(x) \leq I_{ F_0}(x)  \ \text{for all} \ x\in \mathbb R_+ \} $,
	in which $\overline F $ is the distribution putting full mass at the prior mean. Figure \ref{fig:I} illustrates the set $\mathcal I$ and Blackwell's order on $\mathcal I$.
	\begin{figure}[t!]
		\centering
		\subfloat[An information policy takes only values in the shaded region, which illustrates $\mathcal I$,  the set of convex functions that lie between $I_{F_0}$, corresponding to a fully informative signal, and $I_{\overline F}$, corresponding to an uninformative signal.]{
			\begin{tikzpicture}[scale=0.85]
				\begin{axis}[
					axis lines = middle,
					axis line style={-latex},
					x label style={at={(axis cs:1.25, -0.125)}},
					xlabel={\scriptsize posterior mean},
					xticklabels={$0$, $x_0$, $1$}, 
					xtick={0, 0.5, 1},
					xticklabel style = {font=\scriptsize},
					ytick = \empty,
					yticklabels = \empty,
					xmax=1.1,
					ymax=0.6,
					samples=400,
					legend style={font=\footnotesize},
					legend pos=north west,
					typeset ticklabels with strut,  
					enlarge x limits=false         
					]
					\addplot[domain=0:1, name path = F, very thick, densely dotted] {0.5*x^2}; \addlegendentry{$I_{F_0}$};
					\addplot[domain=0:1, very thick, dashed, name path = N] {max(0.0004,x-0.5)}; \addlegendentry{$I_{\overline F}$};
					\addplot[color=black, fill=black, fill opacity=0.15] fill between [of=F and N,];
				\end{axis}
			\end{tikzpicture}
		}
		\quad
		\subfloat[Information policy $I$ is more informative than information policy $J$ in the Blackwell sense because $I(x)\ge J(x)$ for all $x$. Information policies $K$ and $I$ are not comparable.\protect\label{fig:I:a}]{
			\begin{tikzpicture}[scale=0.85]
				\begin{axis}[
					axis lines = middle,
					axis line style={-latex},
					x label style={at={(axis cs:1.25, -0.125)}},
					xlabel={\scriptsize posterior mean},
					xticklabels={$0$, $x_0$, $1$}, 
					xtick={0, 0.5, 1},
					xticklabel style = {font=\scriptsize},
					ytick = \empty,
					yticklabels = \empty,
					xmax=1.1,
					ymax=0.6,
					samples=400,
					legend style={font=\footnotesize},
					legend pos=north west,
					typeset ticklabels with strut,  
					enlarge x limits=false         
					]
					\addplot[domain=0:1/4, very thick, black] {0.5*x^2}; \addlegendentry{$I$};
					\addplot[domain=0:1/8, very thick, dotted, black] {0.5*x^2}; \addlegendentry{$J$};
					\addplot[domain=0:0.2, very thick, loosely dashed, black] {0.0004}; \addlegendentry{$K$};	
					\addplot[domain=1/4:3/4, very thick, black] {max(1/32 + (1/4)*(x-1/4), 9/32 + (3/4)*(x-3/4))};
					\addplot[domain=3/4:1, very thick, black] {0.5*x^2};
					\addplot[domain=1/8:7/8, very thick, dotted, black] {max(1/128 + (1/8)*(x-1/8), 49/128 + (7/8)*(x-7/8))};
					\addplot[domain=7/8:1, very thick, dotted, black] {0.5*x^2};  
					\addplot[domain=0.2:1, very thick, loosely dashed, black] {min(max(0.0004+0.4*(x-0.2) , max(0.0004,x-0.5)), 0.5*x^2) };
					\addplot[domain=0:1, name path = F, very thick, dashed, opacity=0] {0.5*x^2}; 
					\addplot[domain=0:1, name path = N, very thick, densely dotted, opacity=0] {max(0.0004,x-0.5)}; 
					\addplot[color=black, fill=black, fill opacity=0.15] fill between [of=F and N,];
				\end{axis}
			\end{tikzpicture}
		}
		\caption{Panel (a) illustrates the set of information policies, panel (b) illustrates the Blackwell order of information policies; the prior $F_0$ is a uniform distribution for all figures.\protect\label{fig:I:b}}
		\protect\label{fig:I}
	\end{figure}

	We identify signals with information policies using the results of \citet{gentzkow_rothschild-stiglitz_2016} and \citet{kolotilin_optimal_2018} (Appendix \ref{app:sec:eq:withoutloss}). We treat the right-derivative of $I$,  $I'$, as the distribution function of the posterior mean if Receiver exerts effort 1. In particular, Sender chooses $I \in \mathcal I$ in the first stage of the game, after which the posterior mean of Receiver is drawn from $I'$ with probability corresponding to her effort, and is equal to $x_0$ with the remaining probability. 
	
	\begin{definition}An \emph{equilibrium} is a tuple $\langle I^*, e(\cdot), \alpha \rangle$, in which $I^*\in \mathcal I$ is the Sender's information policy, $e(c, \lambda, I)\in[0,1]$ is the Receiver's effort given type $(c, \lambda)$ and information policy $ I$, and $\alpha(c, \lambda, x)\in[0,1]$ is the probability that Receiver chooses action 1 given type $(c, \lambda)$ and posterior mean $x$, in a Perfect Bayesian Equilibrium (Appendix \ref{app:sec:eq:eq}).
	\end{definition}

	\paragraph*{Notation}
	We let $\partial I(x)$ denote the subdifferential of $I\in \mathcal I$ at $x\in \mathbb R_+$, and use $I'(x^-)=\lim _{\varepsilon \to 0^+}I'(x-\varepsilon)$. By definition, it holds that $\partial I(x) = \{m\in\mathbb R \mid  \text{for all} \ x' \in\mathbb R_+, \ I(x') \ge I(x) + m(x'-x)\}=[I'(x^-), I'(x)]$ \citep{hiriart-urruty_fundamentals_2004}. The function $g\colon \mathbb R^2\to \mathbb R$ exhibits \emph{strictly increasing differences} if, for all $s',\,s \in \mathbb R$ with $s< s'$, the function $t\mapsto g(s', t)-g(s, t)$ is increasing.

	\section{Persuasion} \label{sec:receiver}

	\subsection{Receiver's action and effort}
	This section studies Receiver's equilibrium choices for a given type $(c, \lambda)$.
	
	Given the posterior mean $x$, Receiver chooses action 1 if $x>c$ and action 0 if $x<c$. Because $\theta \mapsto U_R(\theta, a, e;c, \lambda) $ is affine, the expected utility from choosing the action optimally given posterior mean $x$ is $	U_R(x, e, c, \lambda) = \max _{a \in \{0,1\}} U_R(x, a, e, c, \lambda)$, and we have $U_R(x, e, c, \lambda)=(x-c)_+-\lambda k (e)$.

	To characterize the equilibrium effort, we define the \emph{marginal benefit of effort given information policy $I$} as the difference in expected material payoff with and without the information in $I$; i.e., $B(I, c)=\int_{[0,1]} (x-c)_+\diff I'(x) - (x_0-c)_+$. The marginal benefit of effort given $I$ is also referred to as the value of information in the literature. The \emph{net informativeness of information policy $I$} is the difference between $I$ and the uninformative-signal information policy, $I_{\overline F}$ (Figure \ref{fig:net}). 
	\begin{figure}[t!]
		\centering 
		\subfloat[The net informativeness of $I$ at outside option $c$, $\Delta I(c)$, is found by subtracting the value of the uninformative-signal information policy at $c$, $I_{\overline F}(c)$, from $I(c)$. The function $ \Delta I$ is single peaked, the prior mean $x_0$ is the peak.\protect\label{fig:net}]{
		\begin{tikzpicture}[scale=0.87]
		\begin{axis}[
				axis lines = middle,
				axis line style={-latex},
				x label style={at={(axis cs:1.25, -0.125)}},
				xlabel={\scriptsize outside option ($c$)},
				xticklabels={$0$, $x_0$,  $1$}, 
				xtick={0, 0.5,  1},
				xticklabel style = {font=\scriptsize},
				ytick = \empty,
				yticklabels = \empty,
				xmax=1.1,
				ymax=0.6,
				samples=400,
				legend style={font=\footnotesize},
				legend pos=north west,
				typeset ticklabels with strut,  
				enlarge x limits=false         
				]
				\addplot[domain=0:0.6667, thick, solid] {0.5*x^2}; \addlegendentry{$I$};
				\addplot[domain=0:1, name path = N, very thick, dashed] {max(0.0004,x-0.5)}; \addlegendentry{$I_{\overline F}$};
				\addplot[domain=0:0.6667, very thick, densely dashdotted, black] {0.5*x^2 - max(0,x-0.5)}; \addlegendentry{$\Delta I$};
				\addplot[domain=0.6667:1, name path = C, thick, solid] {max((2/9)+(2/3)*(x-0.6667), max(0.0004,x-0.5))};
				\addplot[domain=0.6667:1, very thick, densely dashdotted, black] {max((2/9)+(2/3)*(x-0.6667), max(0.0004,x-0.5)) - max(0,x-0.5)};
				\addplot[domain=0:1, name path = F, very thick, dashed, opacity=0] {0.5*x^2}; 
				\addplot[color=black, fill=black, fill opacity=0.15] fill between [of=F and N,];
			\end{axis}
		\end{tikzpicture}
		}
		\subfloat[Given $k(e)=e$ and information policy $I$, the marginal benefit of effort equals the marginal cost $\lambda$ for outside-option types $\underline c$ and $\overline c$ (Lemma \protect\ref{lem:receiver}). Receiver chooses effort 1 if $c\in(\underline c, \overline c)$, and does not exert effort if \protect$c\in {[0,1]}\setminus { [\underline c, \overline c]}$.\protect\label{fig:ext}]{
		\begin{tikzpicture}[scale=0.87]
			\begin{axis}[
				axis lines = middle,
				axis line style={-latex},
				x label style={at={(axis cs:1.25, -0.125)}},
				xlabel={\scriptsize outside option ($c$)},
				xticklabels={$0$, $\underline c$, $x_0$, $\overline c$, $1$}, 
				xtick={0,0.245, 0.5, 0.74, 1},
				xticklabel style = {font=\scriptsize},
				ytick = {0.03},
				yticklabels={$\lambda$},
				xmax=1.1,
				ymax=0.6,
				samples=400,
				legend style={font=\footnotesize},
				legend pos=north west,
				typeset ticklabels with strut,  
				enlarge x limits=false         
				]
				\addplot[domain=0:0.6667, very thick, densely dashdotted, black] {0.5*x^2 - max(0,x-0.5)}; \addlegendentry{$\Delta I$};
				\addplot[domain=0.6667:1, very thick, densely dashdotted, black] {max((2/9)+(2/3)*(x-0.6667), max(0.0004,x-0.5)) - max(0,x-0.5)};
				\addplot [domain=0:1, thin, dashed, color = black, thick] {0.03};
			\end{axis}
		\end{tikzpicture}
		}
		\caption{Panel (a) illustrates the construction of the net informativeness, and panel (b) illustrates the subset of outside-option types that exert positive effort, given the information policy $I$ and linear $k$.\protect\label{fig:netinfo}}
	\end{figure}
	The following result shows that the marginal benefit of effort is given by the net informativeness evaluated at $c$, using the operator $\Delta\colon I \mapsto I-I_{\overline F}$ to express the net informativeness succinctly.
	\begin{lemma}[Net informativeness]\label{lem:value}
		Given information policy $I$,  the marginal benefit of effort is equal to the net informativeness evaluated at $c$, that is,
		\begin{align*}
		B(I, c)= \Delta I(c).
		\end{align*}
	\end{lemma}
	The net informativeness $\Delta I$ is single peaked, with a peak at the prior mean $x_0$, by construction, as in Figure \ref{fig:ext}. Intuitively, extreme outside-option types benefit the least from the signal because they are the most certain about the optimal action at the prior belief.

	The following result characterizes Receiver's equilibrium choices.
	\begin{lemma}[Receiver's rationality]\label{lem:receiver}
		If	 $\langle I^*, e(\cdot), \alpha \rangle$ is an equilibrium, then, for every type $(c, \lambda)$ and information policy $I$:
		\begin{enumerate}
			\item The probability of action 0 is above $I'(c^-)$ and below $I'(c)$, that is 
			\begin{align*}
				1-\int _{[0,1]} \alpha(c, \lambda, x) \diff I'(x) \in \partial I(c);
			\end{align*}
			\item The effort maximizes the expected payoff of Receiver, that is
			\begin{align*}
				e(c, \lambda, I) \in \argmax_{e \in [0,1]}e\Delta I(c) -\lambda k(e).
			\end{align*}
		\end{enumerate}
	\end{lemma}

	Part 1 simply states the equilibrium conditions that the probability of action 0 satisfies. Let $x$ be the mean of the belief of Receiver. Receiver takes action 1 with probability 1 if $x> c$, does not take action 1 if $x<c$, and may mix if $x=c$. It follows that $I'(c)$ is the probability that the action is 1 assuming Receiver chooses 1 if indifferent, that $I'(c^-)$ is the probability of 1 assuming Receiver chooses 0 if indifferent, and any element in the subdifferential of $I$ at $c$ corresponds to a tie breaking rule, which may be determined in equilibrium.\footnote{The effort of indifferent types is not relevant in equilibrium with atomless outside option (Lemma \ref{app:lem:uniqueness}).}
	
	The key implication of Lemma \ref{lem:receiver} is part 2., which identifies the net informativeness of $I$ at the outside option as a sufficient statistic for the effort decision. As an implication, the two dimensions of Receiver's type, $c$ and $\lambda$, represent her private information about, respectively, the benefit and cost of attention.

	\subsection{Interval structure of the extensive margin}
	This section studies the Receiver's choice of effort.

	The Receiver's \emph{value of information policy $I$}, given type $(c, \lambda)$ and effort $e$, is $V_\lambda(e, \Delta I(c)):=e\Delta I(c) -\lambda k(e)$. The value of $I$ exhibits strictly increasing differences in net informativeness and effort by Lemma \ref{lem:receiver}.
	\begin{corollary}[Supermodularity]\label{cor:superm}
		The Receiver's value of information policy $I$, $V_\lambda(e, \Delta I(c))$, exhibits strictly increasing differences in net informativeness $\Delta I(c)$ and effort $e$.
	\end{corollary}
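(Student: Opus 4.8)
The plan is to apply the definition of strictly increasing differences directly to $g(e, w) := V_\lambda(e, w) = e w - \lambda k(e)$, treating effort $e$ as the first argument and the net informativeness $w = \Delta I(c)$ as the second. So I fix effort levels $e, e' \in [0,1]$ with $e < e'$ and show that the map $w \mapsto g(e', w) - g(e, w)$ is increasing (indeed strictly) on $\mathbb R$.

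The key observation is that the cost term $\lambda k(e)$ does not depend on $w$ and therefore cancels up to a constant in the difference:
\begin{align*}
	g(e', w) - g(e, w) = \left( e' w - \lambda k(e') \right) - \left( e w - \lambda k(e) \right) = (e' - e)\, w - \lambda\left( k(e') - k(e) \right).
\end{align*}
The right-hand side is an affine function of $w$ with slope $e' - e$; since $e < e'$, this slope is strictly positive, so the difference is strictly increasing in $w$. The additive constant $-\lambda\left( k(e') - k(e) \right)$ plays no role in the monotonicity. This is exactly the assertion that $V_\lambda$ has strictly increasing differences in $e$ and $\Delta I(c)$, evaluated at an arbitrary cutoff $c$.

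I expect no genuine obstacle here: the supermodularity is structural, inherited from the fact that Lemma \ref{lem:receiver}, part 2., already exhibits the Receiver's value of information policy $I$ as $e\,\Delta I(c) - \lambda k(e)$, with the effort--information interaction entering only through the bilinear term $e \cdot \Delta I(c)$ (this product form is in turn a consequence of Lemma \ref{lem:value}). The one point worth flagging is that obtaining \emph{strictly} increasing differences, rather than merely weakly increasing ones, uses $e \neq e'$, which is precisely the content of the quantifier $s < s'$ in the definition of strictly increasing differences.
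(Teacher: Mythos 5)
Your proof is correct and matches the paper's (implicit) argument: the paper treats the corollary as immediate from the definition $V_\lambda(e, \Delta I(c)) = e\Delta I(c) - \lambda k(e)$, and your computation that the difference $(e'-e)w - \lambda(k(e')-k(e))$ is affine in $w$ with strictly positive slope $e'-e$ is exactly the content of that observation. Nothing is missing.
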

	By Lemma \ref{lem:receiver}, the equilibrium effort maximizes the value of the Sender's information policy, given type $(c, \lambda)$. As an implication, a more informative Sender's information policy, in the Blackwell sense, makes Receiver better off. In particular, we note that $I$ is Blackwell more informative than $J$ iff: $I(x)\ge J(x)$ for all $x\in \mathbb R_+$. If $I$ is more informative than $J$, then $I$ allocates more net informativeness to every type than $J$. Finally, by the increasing-differences property and the envelope theorem (Lemma \ref{app:lem:ET}), Receiver is better off facing $I$ than $J$.\footnote{This observation also arises as an implication of Blackwell's theorem; Corollary \ref{cor:superm} is a stronger result that we use in Section \ref{sec:mechanisms}.}

	The following result characterizes the set of types that exert positive effort.
	\begin{lemma}[Interval structure]\label{lem:interval}
		Let $\langle I^* , e(\cdot), \alpha \rangle$ be an equilibrium, and define the function $e_\lambda \colon c\mapsto e(c, \lambda,I)$, given information policy $I$ and attention type $\lambda$. The set $e^{-1}_\lambda((0, 1])$ is a nonempty interval if type $(x_0, \lambda)$ chooses positive effort, i.e., $e_\lambda (x_0)>0$, and is empty otherwise.
	\end{lemma}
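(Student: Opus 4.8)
The plan is to reduce the claim to a one-dimensional comparative-statics statement about the effort problem and then to exploit the single-peakedness of $c\mapsto\Delta I(c)$ recorded after Lemma~\ref{lem:value}. By Lemma~\ref{lem:receiver}, part 2, $e_\lambda(c)\in E^*(\Delta I(c))$, where for a scalar $d\ge 0$ I write $E^*(d):=\argmax_{e\in[0,1]}\{ed-\lambda k(e)\}$; so it suffices to understand how $E^*$ moves with $d$ and then to compose with $\Delta I$.

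First I would record two facts about $E^*$. (i) By Corollary~\ref{cor:superm} the objective has strictly increasing differences in $(e,d)$, so $E^*$ is nondecreasing in the strong set order (Topkis) and $\overline e(d):=\max E^*(d)$ is nondecreasing. (ii) At $e=0$ the objective is the constant $-\lambda k(0)$ while for every fixed $e>0$ it is strictly increasing in $d$, so $\{d\ge 0:0\in E^*(d)\}$ is a down-set; let $\overline d\in[0,+\infty]$ be its supremum (interpreting $\overline d$ as lying below $0$ when this set is empty, in which case every optimal effort is positive and the lemma is immediate). Combining (i)--(ii): for $d<\overline d$ one has $E^*(d)=\{0\}$, since a positive $e^{*}\in E^*(d)$ would --- being strictly better than $0$ at any $d'\in(d,\overline d)$ by strictly increasing differences --- contradict $0\in E^*(d')$; and for $d>\overline d$ one has $0\notin E^*(d)$, hence $E^*(d)\subseteq(0,1]$.

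Composing with $\Delta I$ via $e_\lambda(c)\in E^*(\Delta I(c))$ then yields
\[
\{c:\Delta I(c)>\overline d\}\ \subseteq\ e_\lambda^{-1}\bigl((0,1]\bigr)\ \subseteq\ \{c:\Delta I(c)\ge\overline d\}.
\]
Since $\Delta I$ is single-peaked with peak $x_0$, both bracketing sets are (possibly degenerate) intervals containing $x_0$, and they differ only on the level set $\{c:\Delta I(c)=\overline d\}$. If $e_\lambda(x_0)>0$, then $\Delta I(x_0)\ge\overline d$, both brackets are nonempty intervals around $x_0$, and $e_\lambda^{-1}((0,1])$ is squeezed between them. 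If $e_\lambda(x_0)=0$, then $0\in E^*(\Delta I(x_0))$ and, since $\Delta I(x_0)=\max_c\Delta I(c)$ by single-peakedness, $\overline d\ge\Delta I(c)$ for all $c$, so the lower bracket is empty and $e_\lambda^{-1}((0,1])$ reduces to (a subset of) the top level set of $\Delta I$.

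I expect the delicate point --- the main obstacle --- to be the value $d=\overline d$ exactly, where $0$ and some strictly positive effort can both be optimal, so the raw equilibrium selection could in principle differ across the two branches of $\Delta I$. The structure above contains this: indifference between $0$ and positive effort occurs only at the single value $\overline d$, so by single-peakedness $\{c:\Delta I(c)=\overline d\}$ abuts $\{c:\Delta I(c)>\overline d\}$ on each side of $x_0$; any resolution of the indifference therefore keeps $e_\lambda^{-1}((0,1])$ an interval, and in the ``otherwise'' case the same level set sits at the peak of $\Delta I$, forcing emptiness as soon as $E^*(\overline d)=\{0\}$. Under the model's maintained regularity (e.g.\ strict convexity of $k$, or $\Delta I$ strictly monotone off $x_0$, or a tie-breaking convention monotone in effort) there is no ambiguity at $\overline d$ and the sandwich closes directly; this is the only place where the conclusion is sensitive to how equilibrium indifferences are resolved.
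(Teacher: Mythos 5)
Your overall strategy is the same as the paper's: reduce to the scalar problem $E^*(d)=\argmax_{e\in[0,1]}\{ed-\lambda k(e)\}$ via Lemma \ref{lem:receiver}, get monotonicity in $d$ from strictly increasing differences (Corollary \ref{cor:superm} plus Topkis), and compose with the single-peaked map $c\mapsto\Delta I(c)$. Your threshold $\overline d$ and the sandwich $\{c:\Delta I(c)>\overline d\}\subseteq e_\lambda^{-1}((0,1])\subseteq\{c:\Delta I(c)\ge\overline d\}$ are a repackaging of the paper's $\underline c_\lambda,\overline c_\lambda$ construction, and that part of the argument is sound.

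The gap is exactly where you flag it, and your resolution does not work for this model. You close the sandwich by appealing to ``maintained regularity'' (strict convexity of $k$, strict monotonicity of $\Delta I$ off $x_0$, or a monotone tie-breaking rule); none of these is assumed --- $k$ is merely continuous and $\lambda$ may be $0$. Without them, two cutoffs $c\neq c'$ with $\Delta I(c)=\Delta I(c')=\overline d$ can both be indifferent between $0$ and a positive effort, and a pointwise selection $e_\lambda(c)\in E^*(\Delta I(c))$ may resolve the tie differently at the two points, destroying the interval; in the ``otherwise'' branch, emptiness can fail the same way when $E^*(\Delta I(x_0))$ contains both $0$ and positive efforts. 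Your claim that single-peakedness alone makes the level set $\{c:\Delta I(c)=\overline d\}$ ``abut'' the strict upper level set is also not the right reason: what limits that level set to at most two points when $\overline d>0$ is convexity of $\Delta I$ on each side of $x_0$, and when $\overline d=0$ the level set is a union of two intervals at the extremes of $[0,1]$, so the sandwich genuinely does not close. The paper sidesteps all of this by reading Lemma \ref{lem:receiver} as saying the equilibrium effort equals $e^*\circ\Delta I$ for a \emph{single} selection $e^*$ from the argmax correspondence --- types with equal net informativeness choose equal effort --- so that every selection being nondecreasing (by strict increasing differences) makes $c\mapsto e_\lambda(c)$ itself nondecreasing on $[0,x_0]$ and nonincreasing on $[x_0,1]$, whence its strict upper level set is an interval or empty with no further case analysis. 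That single-selection step is the one idea your write-up is missing.
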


	For intuition, assume linear effort cost, i.e., $k(e) = e$. Receiver compares the marginal cost and marginal benefit of effort. As shown in Figure \ref{fig:ext}, exerting effort 1 is optimal only if $\Delta I(c) \ge \lambda$, and no effort is optimal only if $\Delta I(c) \le \lambda$; moreover, the net informativeness of $I$ at an outside option is single peaked as a function of the outside option. Because $\Delta I$ is single peaked and the effort is nondecreasing in net informativeness, the set of outside-option types that exert positive effort is an interval. The proof of Lemma \ref{lem:interval} generalizes the first part of the argument: the optimal effort is nondecreasing in net informativeness by Corollary \ref{cor:superm}.

	\section{Persuasion mechanisms} \label{sec:mechanisms}
	This section studies the equivalence between information policies and persuasion mechanisms.

	\begin{definition}\label{def:mechanisms}
		A \emph{persuasion mechanism} $I_\bullet$ is a collection of information policies: $I_\bullet = (I_r)_{r\in R}$, in which $R$ is the support of the type of Receiver. A persuasion mechanism $I_\bullet$ is \emph{incentive compatible} (IC) if, 		for every type $(c, \lambda)$ and report $r$:
		\begin{align*}
			\max _{e\in [0,1]} V_\lambda(e, \Delta I_{(c, \lambda)}(c)) \ge \max _{e\in [0,1]} V_\lambda(e, \Delta I_{r}(c)).
		\end{align*}

	\end{definition}
	Our focus on IC mechanisms refers to an auxiliary screening game. First, Sender publicly commits to a mechanism that specifies an information policy for every possible report. Then, Receiver reports a type $r\in R$, knowing her true type $(c, \lambda)$. The rest of the game proceeds as in Section \ref{sec:model:model}: Receiver chooses effort $e$, then she observes the realization of a signal corresponding to information policy $I_r$ with probability $e$, and finally chooses an action. We focus on equilibria in which Receiver truthfully reports the type. Appendix \ref{app:sec:rev:deterministic} includes a general definition of a mechanism and a revelation-principle result, under which restricting attention to the persuasion mechanisms in Definition \ref{def:mechanisms} is without loss of generality.

	We say that a persuasion mechanism $I_\bullet$ is equivalent to an information policy $J$ if: all types chooses the same action and effort under truthful reporting given $I_\bullet$ as in some equilibrium of the subgame that starts with the choice of information policy $J$ by Sender.
	\begin{definition} \label{def:mechanisms:eq}
		An IC persuasion mechanism $I_\bullet$ is \emph{equivalent to information policy $J$} if, for every type $(c, \lambda)$:
		\begin{align*}
			\text{1.}& \ \argmax_{e \in [0,1]} V_\lambda (e, \Delta I_{(c, \lambda)}(c)) \subseteq \argmax_{e \in [0,1]} V_\lambda (e, \Delta J(c)),
			\\
			\text{2.}& \  \text{if} \ (0, 1] \cap	\argmax_{e \in [0,1]} V_\lambda (e, \Delta I_{(c, \lambda)}(c))  \ne \emptyset, \ \text{then} \ \partial I_{(c, \lambda)}(c) \subseteq \partial J(c) \ 
		\end{align*}
	\end{definition}
	 The first item requires that any effort that is optimal under truthful reporting in the mechanism $I_\bullet$ is also optimal if Sender chooses the single information policy $J$. The second item is a requirement only for types that are attentive with positive probability under the truthful equilibrium: if a type chooses action 1 with some probability under truthful reporting, then she chooses action 1 with the same probability in some equilibrium of the subgame starting with $J$. If effort is costless, then it is without loss to consider that all types choose effort 1. In this case, the nontrivial part of Definition \ref{def:mechanisms:eq} is the second one, all types choose action 1 with the same probability, which is the definition adopted in \citet*[p.\ 1954]{kolotilin_persuasion_2017}.

	 The novelty of the definition is the first item, which requires type $(c, \lambda)$ to choose the same effort under the mechanism as under the equivalent signal. Item 2 of Definition \ref{def:mechanisms:eq} does not deal with types who exert effort 0 under truthful reporting given $I_\bullet$. The reason is that the equilibrium action given the prior belief does not depend on the available information.\footnote{Formally, the reason is that the equivalence of the action decision holds as a consequence of item 1 ``for this type.'' Specifically, $\argmax_{e \in [0,1]} V_\lambda (e, \Delta I_{(c, \lambda)}(c))\allowbreak=\allowbreak \{0\}$ implies that $0\allowbreak \in \allowbreak \argmax_{e \in [0,1]} V_\lambda (e, \Delta J(c))$ by item 1, and the optimal action at the prior belief given $I_\bullet$ is the same as given $J$, possibly via equilibrium selection.}
	
	The following result shows that every IC persuasion mechanism is equivalent to a signal.
	\begin{theorem}\label{thm:mechanisms}
		For every IC persuasion mechanism $I_\bullet$, there exists an information policy $J$ such that  $I_\bullet$ is equivalent to $J$.
	\end{theorem}

	We sketch the intuition for Theorem \ref{thm:mechanisms}, which relies on the property of increasing differences of the payoff of Receiver, see Corollary \ref{cor:superm}.\footnote{This sketch leverages Corollary \ref{cor:superm}, and Appendix \ref{app:sec:mechanisms} verifies that supermodularity is key by establishing the result for general payoff functions, without using the linearity of $e\cdot \Delta I(c)$.} We claim that an IC mechanism $I_\bullet$ is equivalent to its upper envelope $J$, defined by $J\colon x\mapsto \sup _{r\in R}I_r(x)$ (Figure \ref{fig:sup:A}).  Fix a type $(c, \lambda)$ that exerts positive effort under truthful reporting in the mechanism.
	\begin{figure}[t!]
		\centering
		\subfloat[The upper envelope $J$ of the information policies in the persuasion mechanism $I_\bullet=(I, L, K)$.\protect\label{fig:sup:A}]{
		\begin{tikzpicture}[scale=0.83]
			\begin{axis}[
				axis lines = middle,
				axis line style={-latex},
				x label style={at={(axis cs:1.25, -0.125)}},
				xlabel={\scriptsize posterior mean},
				xticklabels={$0$, $x_0$, $1$}, 
				xtick={0, 0.5, 1},
				xticklabel style = {font=\scriptsize},
				ytick = \empty,
				yticklabels = \empty,
				xmax=1.1,
				ymax=0.6,
				samples=400,
				legend style={font=\footnotesize},
				legend pos=north west,
				typeset ticklabels with strut,  
				enlarge x limits=false         
				]
				\addplot[domain=0:1, thick, black] {max(0.0006,x-0.5, 0+ (1/4)*(x-1/7), 0+ (1/2)*(x-1/4), 0+ (4/5)*(x-0.42))}; \addlegendentry{$J$};
				\addplot[domain=0:1, very thick, loosely dotted, black] {max(0.0006,x-0.5, 0+ (1/4)*(x-1/7))}; \addlegendentry{$I$};
				\addplot[domain=0:1, very thick, loosely dashed, black] {max(0.0006,x-0.5, 0+ (1/2)*(x-1/4))}; \addlegendentry{$L$};
				\addplot[domain=0:1, very thick, densely dotted, black] {max(0.0006,x-0.5, 0+ (4/5)*(x-0.42))};\addlegendentry{$K$};
				\addplot[domain=0:1, name path = F, very thick, dashed, opacity=0] {0.5*x^2}; 
				\addplot[domain=0:1, name path = N, very thick, densely dotted, opacity=0] {max(0.0004,x-0.5)}; 
				\addplot[color=black, fill=black, fill opacity=0.15] fill between [of=F and N,];
			\end{axis}
		\end{tikzpicture}
		}
		\quad
			\subfloat[The $\overline \theta$ upper censorship equals the information policy $I_{F_0}$ for all $x\le \overline \theta$, has constant slope $F_0(\overline \theta)$ at $x\in(\overline \theta,  x_{\overline \theta})$, and equals $I_{\overline F}$ for all $x\ge  x_{\overline \theta}$.\protect\label{fig:sup:B}]{
		\begin{tikzpicture}[scale=0.83]
			\begin{axis}[
				axis lines = middle,
				axis line style={-latex},
				x label style={at={(axis cs:1.25, -0.125)}},
				xlabel={\scriptsize posterior mean},
				xticklabels={$0$,  $\overline \theta$, $x_0$,  $x_{\overline \theta}$, $1$}, 
				xtick={0, 0.4, 0.5, 7/10, 1},
				xticklabel style = {font=\scriptsize },
				ytick = \empty,
				yticklabels = \empty,
				xmax=1.1,
				ymax=0.6,
				samples=400,
				legend style={font=\footnotesize},
				legend pos=north west,
				typeset ticklabels with strut,  
				enlarge x limits=false         
				]
				\addplot[domain=0:0.4, thick, solid] {0.5*x^2}; \addlegendentry{$I_{\overline \theta}$};
				\addplot[domain=0:1, name path = N, very thick, dashed] {max(0.0004,x-0.5)}; \addlegendentry{$I_{\overline F}$};
				\addplot[domain=0:1, very thick, densely dotted, black] {0.5*x^2}; \addlegendentry{$ I_{F_0}$};
				\addplot[domain=0.4:1, name path = C, thick, solid] {max((2/25)+(2/5)*(x-0.4), max(0.0004,x-0.5))};
				\addplot[domain=0:1, name path = F, very thick, dashed, opacity=0] {0.5*x^2}; 
				\addplot[color=black, fill=black, fill opacity=0.15] fill between [of=F and N,];
			\end{axis}
		\end{tikzpicture}
		}
		\caption{Panel (a) illustrates the upper envelope $J$ of the information policies in a persuasion mechanism $I_\bullet$; panel (b) illustrates the $\overline \theta$ upper censorship, letting $ x_{\overline \theta}$ denote the conditional expectation of $\theta$ given that $\theta\ge\overline \theta$.
		}
		\label{fig:sup}
	\end{figure}
	A report $r$ is \emph{active} at $x$ if: $I_r(x)\ge I_{r'}(x)$ for all $r'\in R$. The first step is to show that any active report at $c$ maximizes Receiver's expected utility. By Lemma \ref{lem:receiver}, a report affects utility only through the net informativeness, i.e., $\Delta I_r(c)$. By increasing differences, an active report at $c$ makes Receiver weakly better off than any other report (Corollary \ref{cor:superm}, using the envelope theorem applied to the maximization that determines effort; see Lemma \ref{app:lem:ET}.) Hence, an active report at $c$ maximizes expected utility at the reporting stage.

	Towards the equivalence with respect to effort, we strengthen the observation: Receiver is \emph{strictly} better off with an active report than with an inactive one. This conclusion uses both the fact that Corollary \ref{cor:superm} establishes \emph{strictly} increasing differences and the hypothesis that type $(c, \lambda)$ exerts positive effort (Lemma \ref{app:lem:ET}). Hence, by incentive compatibility, a truthful report is active. Therefore, we have that $J(c)=I_{(c, \lambda)}(c)$, which implies that $\Delta J(c)=\Delta I_{(c, \lambda)}(c)$. To complete the argument, we apply  of Lemma \ref{lem:receiver}: the net informativeness is the only component of the information policy that affects the effort decision. We conclude that the Receiver has the same set of optimal efforts under the truthful equilibrium of mechanism $I_\bullet$ as under the signal $J$. The proof uses a continuity argument to cover the case of zero effort.

	The equivalence with respect to action decisions follows from a simple observation. To show that a given type $(c, \lambda )$ chooses the same action under $J$ as under $I_r$, it suffices to show that $\partial I_r(c)\subseteq \partial J(c)$, using Definition \ref{def:mechanisms:eq}. The inclusion holds if report $r$ is active at $c$, because the upper envelope expands the subdifferential of any active information policy, see Figure \ref{fig:sup:A}. Finally, the preceding discussion shows that the truthful report $(c, \lambda)$ is active.

	The result guarantees that the characterization of the extensive margin of persuasion in Section \ref{sec:receiver} holds in more general environments, including applications in which multiple information structures are available to decision-makers.   
	
		\begin{remark}
		As mentioned in the introduction, we can order information policies according to the type-specific relation $\le_c$, defined by $\hat I\le_c \hat J$ if $\Delta  \hat I(c)\le \Delta \hat J(c)$. The order $\le_c$ is a completion of Blackwell's order and ranks the entries in mechanism $I_\bullet$ according to Receiver's expected utility. By the IC property of the mechanism $I_\bullet$, the policy $I_r$ maximizes $\le_c$ on $I_\bullet$ only if $\Delta I_r(c)=\Delta I_{(c, \lambda)}(c)$. Blackwell's theorem does not suffice for this conclusion, which uses (i) Corollary \ref{cor:superm}, (ii) the envelope theorem (Appendix, Lemma \ref{app:lem:ET}), and (iii) completeness of $\le_c$. Hence, $J(c) =  I_{(c, \lambda)}(c)\ge I_r(c)$, for every report $r$.
	\end{remark}
	
	\paragraph{On the connection with \citet*{kolotilin_persuasion_2017}}
	\citeauthor*{kolotilin_persuasion_2017} (KMZL) establish Theorem \ref{thm:mechanisms} in the case of costless attention; i.e.,  \(\lambda\) only takes value \(0\) and Receiver chooses effort \(e=1\). In this case, the only relevant item of Definition \ref{def:mechanisms:eq} is the second: for every outside option \(c\), the probability of action \(0\) induced by the information policy \(I_{(c, 0)}\) lies in the subdifferential of the information policy \(J\). Theorem \ref{thm:mechanisms} extends this equivalence to a setting with costly and privately known attention cost.

	The main step in the proof of KMZL is a characterization of the interim utilities that are implementable by a mechanism --- functions of the form $\overline U \colon c \mapsto \int _{[0,1]}(x-c)_+\diff I'(x)$, in which \(I \) is the information policy $I_{(c, 0)}$ for some IC mechanism \(I_\bullet \). Implementable utilities are closely related to the marginal benefit of effort in IC mechanisms, as formalized by Lemma \ref{lem:value}. Specifically, if \(\overline U\) is implementable and \(\underline  U(c)\) is the utility of type \((c,0)\) from the uninformative information policy, then \(\overline U(c)-\underline  U(c) = \Delta I_{(c, 0)} (c)\), which is the marginal benefit of effort generated by \(I_{(c, 0)}\) at outside option \(c\), for the ``inducing'' IC mechanism $I_\bullet$. Thus, the implementable utility $\overline U(c)$ corresponds to $\Delta I_{(c, 0)}(c)$ up to the addition of $I_{\overline F}(c) +x_0-c$, a term that is constant in $I$. In fact, this observation proves Lemma \ref{lem:value}.

	The key step in the proof of Theorem \ref{thm:mechanisms}  is to show that the upper envelope of a mechanism induces the same effort choice of all types, not a characterization of implementable utility functions. Implementable utility functions in our setup are functions of the form  $(c, \lambda)\mapsto \max _{e\in[0,1]}V_{\lambda}(e, \Delta I_{(c, \lambda)}(c))$. The primary role of information policies in the proof Theorem \ref{thm:mechanisms} is to discipline effort via Lemma \ref{lem:value} and Corollary \ref{cor:superm}: the complementarity between information and attention implies that replacing a mechanism with its upper envelope increases the marginal benefit of effort (Figure \ref{fig:sup:A}). Hence, this change preserves the optimal effort of truthful revelation in an IC mechanism, as described in the preceding paragraphs. This step is specific to environments with costly attention, and is absent in KMZL.

	Theorem~\ref{thm:mechanisms} is stated for a model with two important features: (i) the state-dependent component of Receiver's payoff, the material payoff, depends on a \emph{binary action}; (ii) every type updates beliefs in the same way, i.e., $(c, \lambda)$ and $\theta$ are mutually independent, referred to as \emph{type-state independence}. These restrictions are essential for the result, and common in the applications cited in Section \ref{sec:application}. KMZL show that persuasion mechanisms can implement a strictly larger set of utilities than signals if Receiver chooses among multiple actions, even with type-state independence  and costless attention (see their Appendix~A). Our proof uses the same binary-action setup of KMZL. Thus, an extension of our model with richer action choices does not admit Theorem \ref{thm:mechanisms} in its form. Example 2 in \citet{guo_interval_2019} shows that, in general, the equivalence does not hold without type-state independence, with binary action and costless attention.  Proposition 5.1 in \citet{guo_interval_2019} provides a sufficient condition for the equivalence that allows for correlation between the type and the state. Correlation is important because it allows to capture settings with private information of Receiver about the state. The study of conditions for the equivalence in more general models with costly attention, multiple actions, and stochastic dependence between the state and the type of Receiver is an open research avenue.

	\section{Optimality properties of upper censorships} \label{sec:application}
	
	This section discusses the properties of the following class of information policies, which are illustrated in Figure \ref{fig:cens:a} and \ref{fig:sup:B}.
	
	\begin{definition}
		Given a state $\overline \theta \in [0, 1]$, the $\overline \theta$ \emph{upper censorship} is the unique information policy $I_{\overline \theta }\in \mathcal I$ such that, for all $x\in\mathbb R_+$,
		\begin{align*}
			I_{\overline \theta}(x) = \begin{cases}
				I_{F_0}(x), & \text{if} \ x \in [0,\overline  \theta], \\
				\max\{ I_{F_0}(\overline \theta) +F_0(\overline \theta) (x-\overline \theta) , I_{\overline F}(x)\}, & \text{if} \ x \in (\overline \theta, \infty ).
			\end{cases}
		\end{align*}
	\end{definition}
	
		Under the following assumption, we restrict attention to type distributions that are single-peaked in outside option and with the same peak for all attention costs.\footnote{The assumption does not state directly that the marginal distribution of the outside option admits a single-peaked density. \citet{quah_aggregating_2012} provide general conditions guaranteeing that property; Remark \ref{app:rem:singlepeakedness} contains more details about the role of Assumption \ref{ass:1} in the proof of Theorem \ref{thm:existence}.}

	\begin{assumption}\label{ass:1}
		For all $\lambda$, the conditional distribution of the outside-option type given attention type $\lambda$ admits a density function $g(\cdot |\lambda)$ that is absolutely continuous. Moreover, there exists an outside option $p\in(0,1)$ such that: for all $\lambda$, $g(\cdot |\lambda) $ is nondecreasing on $[0,p]$ and nonincreasing on $[p,1]$.
	\end{assumption}

		Single-peaked distributions of outside option are relevant for applications \citep*{romanyuk_cream_2019, kolotilin_censorship_2022, gitmez_informational_2023, augias_persuading_2023, shishkin_evidence_2023, sun_publicly_2024}. The class of single-peaked distributions includes the standard uniform and the truncated normal. The continuity restriction rules out the symmetric-information benchmark, i.e., the case in which Sender knows the type, treated in Appendix \ref{app:sec:KG}. The restriction of a constant peak in the attention cost helps with tractability to obtain the comparative statics in Proposition \ref{prop:cost} and \ref{prop:reverse}. If the density has a peak that varies in $\lambda$, then the construction that we outline in what follows does not identify an upper censorship that improves upon any information policy, which is the key step for Theorem \ref{thm:existence}.
	
	We first establish that an equilibrium exists in which the Sender uses an upper censorship, and that the expected utility of Sender in equilibrium is unique.
	\begin{theorem} \label{thm:existence}
		Let Assumption \ref{ass:1} hold. There exists an equilibrium in which the Sender's information policy is an upper censorship. Moreover, the Sender's expected utility is the same in every equilibrium.
	\end{theorem}

	First, we describe the intuition for the first part of the result: existence of an equilibrium with an upper censorship. In the case of costless attention and Sender-optimal equilibria, the argument for Theorem \ref{thm:existence} relies on the shape of the exogenous noise in the action given a posterior belief, from the point of view of Sender. The Sender's conditional expected utility given posterior mean $x$ is $H(x)$, letting $H$ denote the distribution of the outside option. By single-peakedness, $H$ is ``S shaped.'' So, Sender is risk loving conditionally on low posterior means, i.e., $x<p$, and he is risk averse around high means. In particular, a mean-preserving spread around a low posterior mean increases his expected utility. Second-order dominance is related to the informativeness of the signal, because $F\in \mathcal F$ is a mean-preserving spread of $\hat F\in \mathcal F$ iff $I_F$ is more Blackwell informative than $I_{\hat F}$. Moreover, the upper censorship $I_{\overline \theta}$ induces either full information conditionally on the state being lower than the threshold state $\overline \theta$, or no information except that the state exceeds the threshold, that is $\theta > \overline \theta$. Hence, intuitively, upper censorships induce mean distributions that align with the interests of Sender.
	
	We now adjust the intuition to account for endogenous effort. In this case, the relevant information policy from the point of view of Sender includes the effort $e$, and is given  by $x\mapsto e I(x) + (1-e) I_{\overline F}(x)$. We claim that, in equilibrium, effort is affected by the informativeness of the Sender's information policy in a way that aligns with the interests of Sender. Suppose that Sender increases the net informativeness of posterior mean $x$: $\Delta I (x)$. This change induces outside option $x$ to pay more attention, by Lemma \ref{lem:receiver}.  If a type with outside option $x$ increases her effort, then she gathers more information, because the policy $x\mapsto e I(x) + (1-e) I_{\overline F}(x)$ increases along the Blackwell's order in $e$. Therefore, Sender spreads out the Receiver's posterior-mean distribution around $x$ by increasing the net informativeness, even in case of endogenous attention. The argument, however, is ``local.'' Consider, starting from an information policy, an increase in the informativeness at a point. The modified function fails to be an information policy due to the convexity constraint in $\mathcal I$. The proof deals with this issue by constructing an upper censorship that improves upon $I$, for arbitrary $I$.

	In the Appendix, we establish that continuity of the outside option distribution ensures that Sender is indifferent among all best responses of Receiver. Intuitively, identifying an equilibrium amounts to solving a maximization problem faced by Sender. Multiplicity arises only if a positive mass of types break ties in ways that lead to multiple objective functions. Ties between the two actions occur with probability 0, as in models with costless attention \citep*{lipnowski2024perfect}. We describe why ties among effort levels are irrelevant, considering a given type that is indifferent among effort levels. There are two cases: (1) the net informativeness at this type's outside option, $\Delta I(c)$ is positive; (2) her net informativeness is 0, which implies that the her equilibrium action is the same as without information. Consider case (1). Observe that net informativeness is constant only if it takes value 0, by construction; see Figure \ref{fig:ext}.  Hence, a small change in outside option changes the marginal benefit of effort, via Lemma \ref{lem:value}, so this case has zero probability by Assumption \ref{ass:1}. Consider case (2): all information policies lead to the same action, which implies that every ``equilibrium'' objective function has the same among these information policies. Therefore, there are not multiple equilibrium expected payoffs of Sender.

	Given Theorem \ref{thm:mechanisms}, Theorem \ref{thm:existence} shows that the extensive margin of an optimal persuasion mechanism can be studied via an upper censorship. Moreover, Theorem \ref{thm:existence} reduces the Sender's optimization to the uni-dimensional problem of identifying an optimal threshold state. Finally, a convenient property of upper censorship is that the informativeness of the underlying signal is fully determined by the censorship state. As an implication, any two upper censorships can be ranked in the Blackwell order. This observation allows for a natural comparative statics, which is described for the case of linear cost function $k$ in the following results. We assume that the state admits a density, and the effort cost is linear for the following results.
	\begin{assumption}\label{ass:linear}
	The distribution $F_0$ admits a density, and, moreover, it holds that $k(e)=e$, for every effort $e$.
	\end{assumption}
	We say that \emph{Sender knows the attention cost $\lambda^*$} if the distribution of the attention cost puts full mass at some value $\lambda^*\in[0,1]$, and \emph{Sender knows the outside option $c^*$} if the distribution of the outside option is an atom at some $c^*\in[0,1]$. We say that \emph{strict single-peakedness} holds if Assumption \ref{ass:1} holds, and, for all $\lambda$: $g(\cdot |\lambda) $ is increasing on $[0,p]$ and decreasing on $[p,1]$. The information policy $I$ is \emph{optimal} if there exists an equilibrium in which Sender chooses $I$.

	The following result shows that Sender provides more information as Receiver's attention cost increases, for small attention costs. Recall that, under strict single-peakedness and costless attention, there exists a unique $\theta_{0}$ such that the $\theta_0$ upper censorship is optimal \citep*{kolotilin_censorship_2022}.
		
	\begin{proposition}\label{prop:cost}
		Let Sender know the attention cost $\lambda^*$, and let Assumption \ref{ass:linear} and strict single-peakedness hold. For all sufficiently small $\lambda^*>0$, there exists a unique $\theta_{\lambda^*}$ such that the $\theta_{\lambda ^*}$ upper censorship is optimal; moreover, it holds that $\theta_{\lambda ^*}> \theta_0$.
	\end{proposition}
	

	We describe the intuition in the symmetric-information benchmark, for $c>x_0$ and $\lambda>0$. Sender maximizes the probability that Receiver chooses action 1. Due to symmetric information, the action is deterministic, unless the posterior mean is precisely $c$. Due to costly effort, Sender solves the problem with the additional constraint that Receiver exerts effort 1 --- if Receiver finds it optimal to exert a positive effort, then she finds it optimal to exert effort 1, given that $k$ is linear.\footnote{Formally, focusing on the Sender-preferred equilibria, we consider that Sender solves
		\begin{align*}
		\max _{I\in \mathcal I} 1- I'(c^-) \	\text{subject to:} \  \Delta I (c) \geq \lambda .
	\end{align*}
	which is analysed in the Appendix. The feasible set is empty if $\lambda$ is sufficiently high, in which case any signal is optimal. The argument in the text shows that: if the constraint does not bind at the solution, then $\Delta I (c) =0$, which contradicts $\lambda>0$ and equilibrium effort choice (Lemma \ref{lem:value}). This result is proved in the Appendix, see Lemma \ref{app:lem:knownzeta}; the problem without the constraint appears in \citet{gentzkow_costly_2014}.  For intuition, it is sufficient to consider effort 1. Consider an equilibrium with interior effort given information policy $I$ with $I(c)< I_{F_0}(c)$. There exists an information policy $J$ that ``increases'' $I(c)$ by a small amount and changes $I'(c)$ by a small amount; such $J$ can be constructed to be an upper censorship using Lemma \ref{app:lem:knownzeta}. If, instead, $I(c)\ge I_{F_0}(c)$, then there exists an equilibrium in which Sender chooses the fully informative $I_{F_0}$.} This additional constraint is not present when effort is costless, and we refer to it as the ``participation constraint''. We claim that the participation constraint binds. If the constraint does not bind, then Sender increases the probability of a posterior mean $x$ with $x\ge c$ as much as possible, subject to Bayes' rule. Specifically, he induces the mean $x=c$. Hence, given the Sender's optimal signal, Receiver faces two contingencies: either she is indifferent between the actions or she finds it optimal to go for the riskless action. Therefore, information brings no value, and Receiver pays the positive cost $\lambda$ for a useless signal, a contradiction. So, the constraint binds. As an implication, Sender provides more information as $\lambda$ increases.

	Proposition \ref{prop:cost} shows that the insight generalizes to private information about the outside option, for a sufficiently small attention cost that is known by Sender. A change in the censorship state affects the extensive margin because of the outside-option distribution. However, only the upper bound of the extensive margin --- denoted by $\overline c$ in Figure \ref{fig:ext} --- is affected by small changes in the censorships state around $\theta_0$, because a nontrivial upper censorship is uniquely optimal with costless effort; not the lower bound denoted by $\underline c$ in Figure \ref{fig:ext}. Specifically, for every $\overline \theta$, the net informativeness of $I_{\overline \theta}$ is 0 at any outside option greater than the conditional expectation of $\theta$ given $\theta \ge \overline \theta$. Hence, by increasing the censorship state, Sender is countervailing the decrease in the upper bound of the extensive margin that occurs as the attention cost increases. This argument leads to Proposition \ref{prop:cost}. Qualitatively, this comparative statics holds also in \citet[Proposition 7]{wei_persuasion_2021}.

	The following result shows that Sender provides more information as the distribution of the attention cost increases in the reverse hazard rate order. In particular, in the class of distribution functions $G$ with a well-defined and nonincreasing reverse hazard rate $\lambda \mapsto G'(\lambda)/G(\lambda)$, we say that distribution $G_1$ dominates distribution $G_2$ in the RHR order if, for all $\lambda$, it holds that $G_1'(\lambda)/G_1(\lambda)\ge G_2'(\lambda)/G_2(\lambda)$. Dominance in RHR order implies first-order dominance, so an increase along the RHR order is consistent with the idea that the attention cost stochastically increases.
	
	\begin{proposition}\label{prop:reverse}
		Let Sender know the outside option $c^*$, with $c^*>x_0$, Assumption \ref{ass:linear} hold, and the distribution of the attention cost admit a nondecreasing reverse hazard rate. There exists a unique state $\overline \theta$ such that the $\overline \theta$ upper censorship is optimal. Moreover, the state $\overline \theta$ is weakly higher if the attention-cost distribution increases in the RHR order.
	\end{proposition}

	Proposition \ref{prop:cost} and \ref{prop:reverse} show that Sender is more willing to inform Receiver as the attention cost increases, in each of the two cases in which one dimension of private information is absent.

	In applications to media capture, Sender cares directly about attention. In this case, Sender is a dictator and owns the media of a state, so he collects advertising revenues. The utility function $U_G$ captures this application, defined by $U_G(\theta, a, e, c, \lambda) = a+\gamma e$ for $\gamma \ge0$.	This model is introduced by \citet{gehlbach_government_2014}, who assume binary state and Sender's signal. The case of $\gamma=0$ is studied by \citet*{kolotilin_censorship_2022}, who show that upper censorships are optimal with costless attention.

	The following result shows that an extension of the class of upper censorships contains an optimal information policy in these applications.\footnote{Equilibrium existence is not established for this extension. The difficulty lies in establishing continuity of the extensive margin --- i.e., continuity of $F\mapsto \underline c _\lambda(\Delta I_F)$ and $F\mapsto \overline  c_\lambda (\Delta I_F)$, defined in the proof of Lemma \ref{lem:interval} --- when $\mathcal F$ is endowed with the $L^1$-norm topology.} A \emph{bi-upper censorship} is an information policy $I$ such that
	\begin{align*}
		I_{}(x) =
		\begin{cases}
			I_{F_0}(x), & x \in [0, \theta_1], \\
			I_{F_0}(\theta_1) + F_0(\theta_1)(x-\theta_1),& x\in (\theta_1, x_1], \\
			I_{\overline F}(x_2) -m (x_2-x), & x \in (x_1,  x_2],\\
			I_{\overline F}(x) , & x\in(x_2, \infty).
		\end{cases}
	\end{align*}
	for $m =\frac{I_{\overline F}(x_2) - [I_{F_0}(\theta_1) + F_0(\theta_1)(x_1-\theta_1)]}{x_2-x_1}$ and $0\le \theta_1\le x_1 \le x_2\le  1$.	Bi-upper censorships are defined by two threshold states, as in Figure \ref{fig:cens}.
	\begin{proposition}\label{prop:unique}
		Let Sender know the attention cost $\lambda^*$, Assumption \ref{ass:1} hold with $p\ge x_0$, Assumption \ref{ass:linear} hold, and the utility of Sender be given by $U_G$. For every equilibrium $\langle I, e(\cdot), \alpha \rangle$, there exists a bi-upper censorship with a weakly higher Sender's expected utility, given $e(\cdot)$ and $\alpha$, than the Sender's expected utility in equilibrium.
	\end{proposition}
	
	The proof constructs a bi-upper censorship that improves upon an arbitrary information policy, both in terms of higher expected action and attention. The requirement that the peak satisfies $p\ge x_0$ represents sufficient ex-ante disagreement between Sender and Receiver, as in \citet{shishkin_evidence_2023}.

	The proof clarifies the role of the highest threshold state in a bi-upper censorship. Specifically, the highest threshold state increases the marginal benefit of effort of certain types in case an upper censorships induces fewer outside-option types to exert effort than the optimal extensive margin.  First, we construct an upper censorship $I_{\overline \theta}$ that improves upon a given $I$ for $\gamma=0$, thanks to the same intuition as for Theorem \ref{thm:existence}. Second, we take into account the preferences of Sender over the extensive margin: we modify $I_{\overline \theta}$ in a way that replicates the extensive margin of $I$ by censoring extreme states on either sides of the state space. Censoring ``very high'' states allows to separate them from ``moderately high'' states, in order to increase the upper bound of the extensive margin of $I_{\overline \theta}$, if needed; additionally censoring low states increases the lower bound of the extensive margin of $I_{\overline \theta}$. Hence, given the choice of these additional two censoring regions, we obtain a candidate ``improving'' policy that is not a bi-upper censorship, because it involves three censoring regions. Lastly, we leverage single-peakedness to note that decreasing the lower bound of the extensive margin is beneficial for Sender, as described following Proposition \ref{prop:cost}. The lower bound is minimized by choosing to fully reveal low states, so the argument returns a bi-upper censorship that improves upon an arbitrary information policy.

	\section{Discussion and interpretation} \label{sec:model:discussion}
	
	In this section, we interpret $e$ as representing attention effort, and consider the effort-choice stage of the game for a nondecreasing $k$. An increase in attention effort results in a more informed Receiver in the Blackwell's sense and higher costs. The functional form of effort cost is unrestricted in the model to capture a range of attention- and non-attention-related phenomena. Examples of costly attention include cognitive difficulties and memory limits. In contrast, the opportunity cost of being attentive is relevant when evaluating media subscription or exposure.
	
%

	In \citet*{lipnowski_attention_2020} (LMW), the attention cost is proportional to the reduction in the uncertainty about the state. Receiver incurs a cost for what she learns about the state. LMW is a model of \emph{delegated} learning \citep{bloedel_persuading_2021}, and fits applications in which a separate agent researches about $\theta$, and Receiver learns through that research. As an illustration, LMW captures the problem of a firm (Receiver) that processes data provided by a research agency (Sender). \citet{wei_persuasion_2021} applies this paradigm to study state-independent Sender's preferences.

	In the main model of \citet{bloedel_persuading_2021} (BS), the attention cost is proportional to the entropy reduction of the belief about the Sender's message.\footnote{We describe the main model of BS with state-independent Sender's preferences and entropy-based cost, even though the paper includes other preferences and costs. \citeauthor{bloedel_persuading_2021} also consider our symmetric-information benchmark, as an alternative to their model. Under symmetric information, Sender chooses a binary signal without loss of optimality by standard arguments; see ``Lemma 3'' in that article.} Receiver incurs a cost for what she learns about the Sender's talk. BS is a model of learning \emph{from communication}, fitting applications in which communication is costly to process. As an illustration, BS captures the problem of a social-media user (Receiver) who learns from the advertisement of an influencer (Sender) at a cost, which, importantly, is related to deciphering words and situations portrayed in the advertisement. The optimal Sender's signal is an upper censorship in BS, although for a different reason than in this model. In particular, Sender perceives Receiver's action as random given a realized message in BS, because of the attention strategy, whereas randomness arises due to both effort and asymmetric information in this model (as discussed in Section \ref{sec:application}.)

	In this model, the attention cost is independent of the information provided by Sender, and fully flexible in this class. Receiver incurs a cost that is related to her exposure to information. We model learning \emph{via exposure}, fitting applications in which the Receiver's strategy has a cost irrespective of the information provided by Sender. Paying full attention to a communication that turns out uninformative is allowed to have any cost here, whereas this strategy is costless in LMW. Additionally, paying full attention to the uninformative communication of a single possible message, constant in the state, is allowed to have any cost in this model, whereas this strategy is costless in BS. As an illustration, this model captures the problem of a platform user (Receiver) who devotes a share of her mental energy to learning about current affairs from her news feed, which is engineered by a platform (Sender). If the feed contains only friends' updates and product advertisement, then searching for news is both costly and fruitless. This separation between costs and information is well suited for applications in which cognitive costs are thought of as  less granularly than in BS-LMW. As cognitive costs are aggregated with costs of a different nature, such as platform fees, or outside opportunities that accrue with the time spent learning, the total costs become less dependent on the Sender's communication strategy.

	Our model builds upon BS and LMW by constraining the attention strategies to mixtures of full and null information about the Sender's message. In the rational-inattention tradition \citep{sims_implications_2003}, the Receiver of BS-LMW flexibly allocates her cognitive resources, because she can learn in any conceivable way. However, in these papers, allowing such flexibility comes with tracking multiple signal structures, and using extensions of entropy-based costs. By insisting on a single attention variable, effort, our framework preserves the essential tradeoff of rational inattention. We also analyze questions related to screening and the shape of the extensive margin, complementing the current literature. Moreover, departures from flexibility reflect real-life psychological and technological constraints. For instance, a consumer may only choose the time and mental energy to spend in front of the TV, and a voter may only choose how many articles to sample randomly, and from which to learn fully, in a newspaper. Lastly, rational inattention is not the only explanation for costly effort, which could refer to the opportunity cost of learning, or a transfer paid to ``infomediaries'' in applications, --- including the Sender, as in the hypothesis of Proposition \ref{prop:unique}.

	In \cite{matyskova_bayesian_2023}, Receiver pays a cost to access additional information beyond what the Sender provides. In \citet{dworczak_preparing_2022}, Receiver may have access to extra information sources than just the Sender's one. These models target a fundamentally different strategic context than ours and fit applications in which the Sender's communication is costless to understand. The Sender's tradeoff involves (i) inducing favorable actions and (ii) preventing the access to external information that may hinder (i). In the model with binary state and action of \citeauthor{matyskova_bayesian_2023}, Sender provides more information as the  cost of extra information increases, similarly to Proposition \ref{prop:cost}. However, the channel is different: in  \cite{matyskova_bayesian_2023}, Sender provides more information so as to disincentivize (extra) attention, whereas here Sender does so to incentivize attention.

	\citet{bilnancini_signaling_2018} study Receiver's attention cost in a model with a different communication protocol: Sender chooses a payoff-relevant message after learning about the state. Their model differs from a persuasion game \`a la \citet{kamenica_bayesian_2011}, in which Sender chooses the distribution of a payoff-irrelevant message without incentive-compatibility constraints.
	The authors identify a strategic complementarity: if the different types of Sender pool on the same message, then Receiver does not acquire information about this message, even if she pays attention in a separating equilibrium. According to this logic, the lower the information in the strategy of Sender, the lower the incentives of Receiver to acquire information. This strategic complementarity of the signaling game reveals certain properties of the pooling equilibrium in which the message that is sent is the least costly for Sender in all states, and resonates with the supermodularity of the objective function in the Receiver's choice of effort (Corollary \ref{cor:superm}). In fact, Receiver chooses her attention taking an experiment about the state as given in both games. However, the experiment is observed by Receiver in this model, whereas Sender takes the attention of Receiver as given in \citet{bilnancini_signaling_2018}. Therefore, the models capture different economic environments. The ``covert communication'' of \citeauthor{bilnancini_signaling_2018} captures a communication game between two parties, in which communication and attention are chosen simultaneously, as in social interactions. The ``overt communication'' of this model captures settings in which the communication strategy is known prior to the attention decisions, such as when a user searches for information on a social media platform. When the user decides to spend time and cognitive resources on the associated app, she naturally has a belief about the usefulness of the available information, based on past use of the app and her network.

	\section{Conclusion}
	
	This paper proposes a model of inattention within a persuasion game that underscores the complementarity between information and attention effort. This complementarity leads to the equivalence of persuasion mechanisms and experiments. The optimization problem of the sender is solved by censoring favorable states. The same logic applies in contexts in which the sender directly values attention, such as in applications to media capture.
	
	In general, complementarity may not hold for all audiences and informational environments, for instance due to information overload and psychological constraints. A study of the extensive margin of persuasion that incorporates these distinctions offers an open avenue for future research.


%

	\newpage 
	\appendix
			
		\label{app:appendix}

		\section{Preliminaries, equilibrium definition, and revelation principle} \label{app:sec:eq}

		\subsection{Preliminaries} \label{app:sec:eq:withoutloss}
		We show that the Sender's signal affects the decisions and payoffs of both Sender and Receiver only through the distribution of the posterior mean of a Bayesian agent who always observes the signal realization.
		
		The optimal action of type $t=(c, \lambda)$, given posterior belief $\mu\in \mathcal D$, depends on the belief only through its mean $\overline x_\mu:=\int _{[0,1]} \theta\diff \mu (\theta)$. The Receiver's expected material payoff given belief $\mu$ is given by
		\begin{align*}
			v_t(\mu) &:= \begin{cases} \int_{[0,1]} (\theta-c) \diff \mu (\theta), & \ \text{if} \ \overline x_\mu \geq c, \\
				0, & \ \text{if} \  \overline x_\mu < c.
			\end{cases}
		\end{align*}
		We note that $v_t(\mu) $ depends on the belief $\mu$ only through $x_\mu$. If the Sender's signal induces the Bayes-plausible  \citep{kamenica_bayesian_2011} distribution over posterior beliefs $p$, then type $t$ chooses $e\in[0,1]$ to maximize
		\begin{align*}
			e\int_{\mathcal D} v_t(\mu) \diff p(\mu) + (1-e) v_t(F_0) -\lambda k(e).
		\end{align*}
		Hence, Receiver's action, effort, and her payoff depend on the Sender's signal only via the distribution of the posterior mean; i.e., the distribution of $x_\mu$ implied by $p$.) This conclusion follows from the Sender's payoff function, which depends on the signal only via the Receiver's choice of action. The same conclusion holds under the hypothesis of Proposition \ref{prop:unique}.

		\subsection{Equivalence between signals and information policies}		
		\begin{lemma}\label{lem:policies}
			The following hold:
			\begin{enumerate}
				\item If $F\in \mathcal F$, then $I_F\in \mathcal I$;
				\item If $I\in\mathcal I$, then $I'\in \mathcal F$, extending $I$ to take value $0$ at every $x<0$.
			\end{enumerate}
		\end{lemma}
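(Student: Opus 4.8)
The plan is to verify the two inclusions by direct computation, using only two standard facts: a function is convex on an interval iff it is the indefinite integral of a nondecreasing function, and a convex function that is finite on a closed interval is absolutely continuous there, hence equals the integral of its right derivative. Before either part I would record that, since $F_0$ is atomless with mean $x_0$, integration by parts gives $I_{F_0}(1)=\int_0^1 F_0(y)\diff y=1-x_0$; that $F_0$ and $\overline F$ both equal $1$ on $[1,\infty)$, so $I_{F_0}(x)=I_{\overline F}(x)=x-x_0$ for $x\ge 1$; and that $I_{\overline F}(x)=\max\{0,\,x-x_0\}$ for $x\in[0,1]$.

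For part 1, fix $F\in\mathcal F$. Then $I_F\ge 0$ and $I_F$ is convex because it is the indefinite integral of the nondecreasing $F$; the bound $I_F\le I_{F_0}$ is part of the definition of $\mathcal F$. The remaining bound $I_F\ge I_{\overline F}$ I would check on three ranges: on $[0,x_0]$ it holds since $I_{\overline F}=0$ there; on $[x_0,1]$, using $F\le 1$ and $I_F(1)=I_{F_0}(1)=1-x_0$, one gets $I_F(x)=I_F(1)-\int_x^1 F(y)\diff y\ge(1-x_0)-(1-x)=x-x_0=I_{\overline F}(x)$; and for $x\ge 1$, $I_F(x)=I_F(1)+(x-1)=x-x_0=I_{\overline F}(x)$. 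Hence $I_F\in\mathcal I$.

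For part 2, fix $I\in\mathcal I$, extended by $0$ on $(-\infty,0)$. First $I(0)=0$, since $0=I_{\overline F}(0)\le I(0)\le I_{F_0}(0)=0$; moreover $I'(0^+)=\lim_{x\downarrow 0}I(x)/x\ge 0$ because $I\ge 0$, so the extension stays convex on $\mathbb R$ and its right derivative $I'$ is nondecreasing, right-continuous, and vanishes on $(-\infty,0)$. If $I'(x_1)<0$ for some $x_1>0$, then $I'\le I'(x_1)<0$ on $[0,x_1]$, forcing $I(x_1)=\int_0^{x_1}I'(y)\diff y<0$, contradicting $I\ge I_{\overline F}\ge 0$; thus $I'\ge 0$. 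Since $I_{\overline F}(x)=I_{F_0}(x)=x-x_0$ for $x\ge 1$ squeezes $I(x)=x-x_0$ there, we get $I'(x)=1$ for $x\ge 1$, and monotonicity then gives $0\le I'\le 1$ everywhere, so $I'$ is a distribution function on $[0,1]$, i.e. $I'\in\mathcal D$. Finally, absolute continuity of $I$ on $[0,1]$ gives $\int_0^x I'(y)\diff y=I(x)-I(0)=I(x)$, i.e. $I_{I'}=I$ on $\mathbb R_+$; hence $I_{I'}(1)=I(1)=1-x_0=I_{F_0}(1)$ and $I_{I'}(x)=I(x)\le I_{F_0}(x)$, so $I'\in\mathcal F$.

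The computations are routine; the points I would take care over are the endpoint behavior --- keeping the zero-extension of $I$ convex at $0$, and pinning $I'$ down to $1$ on $[1,\infty)$ so that $I'$ is genuinely a distribution on $[0,1]$ --- and the appeal to absolute continuity of convex functions to identify $I_{I'}$ with $I$, for which I would simply cite a standard reference rather than reprove it.
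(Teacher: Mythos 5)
Your proof is correct. The paper itself does not prove this lemma --- it simply cites \citet{gentzkow_rothschild-stiglitz_2016} and \citet{kolotilin_optimal_2018} --- so your contribution is a self-contained elementary verification of what the paper treats as a known black box. The key computations all check out: the identity $I_{F_0}(1)=1-x_0$ from integration by parts and atomlessness, the three-range verification of $I_F\ge I_{\overline F}$ in part 1 (with the $[x_0,1]$ case correctly exploiting $F\le 1$ together with the mean constraint $I_F(1)=I_{F_0}(1)$), and in part 2 the squeeze $I_{\overline F}\le I\le I_{F_0}$ pinning down $I(x)=x-x_0$ on $[1,\infty)$ so that $I'$ genuinely becomes a distribution function supported on $[0,1]$. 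Two small points worth making explicit if you wrote this out in full: (i) continuity of $I$ at the left endpoint $0$ is not automatic for a convex function on a closed half-line, but it follows here from the same squeeze ($0=I_{\overline F}(0)\le\liminf_{x\downarrow 0}I(x)$ together with $\limsup_{x\downarrow 0}I(x)\le I(0)=0$ from convexity), which is what licenses both your formula $I'(0^+)=\lim_{x\downarrow 0}I(x)/x$ and the identity $I(x)=\int_0^x I'(y)\,\mathrm{d}y$; and (ii) the observation that $I'$, as the right derivative of a convex function, is automatically right-continuous, which you state but is the one property of a CDF that a generic "nondecreasing selection from the subdifferential" would not have. Neither is a gap --- your argument as written already contains what is needed.
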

		\begin{proof}
			See \citet{gentzkow_rothschild-stiglitz_2016} and \citet{kolotilin_optimal_2018}.
		\end{proof}

		\subsection{Equilibrium definition} \label{app:sec:eq:eq}
		We define a Perfect Bayesian Equilibrium in which Sender directly chooses an experiment $F \in \mathcal F$. This approach is without loss due to the observations in Section \ref{app:sec:eq:withoutloss}. The equilibrium notion is essentially the same as in the text due to Lemma \ref{lem:policies}. Let $T$ denote the support of Receiver's type. Given $F \in \mathcal F$ and effort $\varepsilon \in[0,1]$, we define $\varepsilon \odot F =  \varepsilon F + (1- \varepsilon )\overline F$, and note that $\varepsilon \odot F\in\mathcal F$. An equilibrium is a tuple $\langle F, e, \alpha \rangle$, in which $F\in \mathcal F$, $e(\cdot, \hat F) \colon T\to [0,1]$ is measurable for all $\hat F \in \mathcal F$, $\alpha(\cdot, x) \colon T\to [0,1]$ is measurable for all $x\in[0,1]$, and $\alpha(c, \lambda, \cdot) \colon [0,1]\to [0,1]$ is measurable for all $(c, \lambda)\in T$, such that:
		\begin{enumerate}
			\item $\alpha$ satisfies $a$ Opt: 	for all $x\in[0,1],~(c, \lambda) \in T$,
			\begin{align*}
				\alpha(c, \lambda, x) \in  \argmax _{b \in [0,1]} b(x-c);
			\end{align*}
		
			\item $e$ satisfies $e$ Opt: 	for all $(c, \lambda) \in T,~\hat F\in \mathcal F$,
			\begin{align*}
				e(c, \lambda, \hat F) \in \argmax_{e \in [0,1]} \int _{[0,1]}\max_{a\in\{0,1\}}U_R(x,a, e, c, \lambda) \diff (e\odot \hat F)(x);
			\end{align*}
		
			\item $F$ is rational for Sender given $(\alpha, e)$, that is: $F$ maximizes 
			\begin{align*}
				\hat W(\cdot, e, \alpha) \colon \hat F\mapsto  \int_{ [0,1]}  \int_{[0,1]} \int _{[0,1]}\alpha (c, \lambda, x) \diff   (e(c, \lambda, \hat F) \odot \hat F)(x)\diff G(c|\lambda)\diff G (\lambda)
			\end{align*}
			on $\mathcal F$.
		\end{enumerate}
		The set of maximizers in the maximization in (2.)\ is nonempty because the function $e\mapsto U_R(x, a, e, c, \lambda)$ is continuous for all $x, a, c, \lambda$. Lemmata \ref{app:lem:continuity} and \ref{app:lem:uniqueness} establish that the maximization in (3.)\ is well-defined, because, in the maximization in (3.), the relevant pair $(\alpha, e)$ satisfies items (1.)\ and (2.).

		\subsection{Revelation principle for persuasion mechanisms}\label{app:sec:rev}
		
		In this section, we provide two results that establish that the IC persuasion mechanisms that are considered in Section \ref{sec:mechanisms} are without loss of generality. We establish a revelation principle for two classes of ``indirect'' mechanisms. In Section \ref{app:sec:rev:deterministic}, we show that the usual revelation principle for deterministic mechanisms with a single agent holds, using the standard approach. In Section \ref{app:sec:rev:deterministic}, we generalize the setup and allow Sender to choose a mechanism that allocates a random experiment to every message of Receiver, and adapt the standard approach. The model in Section \ref{app:sec:rev:deterministic} nests the one in Section \ref{app:sec:rev:deterministic}, the sections are separate for the sake of transparency of the analysis of stochastic mechanisms.
		
		\subsubsection{Revelation principle for deterministic mechanisms} \label{app:sec:rev:deterministic}
		
		Let $\Pi$ denote the set of signals, represented as distributions over Bayes-plausible elements of $\mathcal D$, and $U(\pi, t)$ the expected payoff from signal $\pi$ of a receiver with type $t$. It holds that, for all $\pi\in \Pi$ and $(c, \lambda)\in T$, 
		\begin{align*}
			U(\pi, (c, \lambda)) = \max_{e\in E} ed(\pi, c)+(x_0-c)_+-\lambda k(e),
		\end{align*}
		using $d(\pi, c)$ for the difference in expected utility with and without the information contained in $\pi$, i.e., $d(\pi, c)\coloneqq  \int _{\mathcal D} (\int_{\Theta } \theta \diff \mu(\theta)-c)_+\diff p(\mu)-(x_0-c)_+$, letting $p$ be the belief distribution induced by $\pi$, and $T$ the support of the type. We use $\operatorname{id}$ for the identity function $x\mapsto x$ on $T$.
		
		A \emph{deterministic mechanism} is a pair $(R, g)$, in which $R$ is an abstract set of reports and $g\colon R\to\Pi$. A \emph{deterministic social choice function} is a function $f\colon T\to \Pi$. A deterministic mechanism $(R, g)$ \emph{implements the deterministic social choice function $f$} if: there exists a selection $r^*$ from $t\mapsto \argmax_{r\in R}U(g(r), t)$ such that, for all $t\in T$, $g(r^*(t))=f(t)$. A deterministic mechanism $(R, g)$ is \emph{direct} if $R=T$; we denote the deterministic and direct mechanism $(R, g)$ by $g$. A direct and deterministic mechanism $g$ is \emph{incentive-compatible}  if $		U(g(t), t)\ge 		U(g(t'), t)$, for all $(t, t')\in T^2$.
		
		\begin{proposition}[Revelation principle for deterministic mechanisms]
			If the deterministic mechanism $(R, g)$ implements the deterministic social choice function $f$, then there exists a direct and deterministic mechanism $g'$ such that: $g'$ is incentive compatible and, for all $t\in T$, it holds that $g'(\operatorname{id}(t))=f(t)$.
		\end{proposition}
		\begin{proof}
			Let the deterministic mechanism $(R, g)$ implement the deterministic social choice function $f$. Let $r^*\colon T\to R$ be a selection from $t\mapsto \argmax_{r\in R}U(g(r), t)$ such that, for all $t\in T$, we have $g(r^*(t))=f(t)$. We observe that $g\circ r^*$ is a direct and deterministic mechanism that is incentive compatible.
		\end{proof}
		The result has two implications. First, it is without loss of generality to restrict attention to direct and incentive-compatible mechanisms; specifically, note that the last part of the statement says that the direct and deterministic mechanism $g'$ implements $f$. Second, if the deterministic mechanism $(R, g)$ induces a certain action and effort choice of a type, then the direct mechanism $g'$ induces the same choices; in particular, the fact that the two mechanisms implements the same SCF is a stronger statement than the fact that they induce the same distributions. We state the second implication formally in the following result.

		A deterministic mechanism $(R, g)$ \emph{induces the effort distribution $\Xi \colon T\to E$ and the action distribution $A \colon T\to [0,1]$} if: there exists a selection $r^*$ from $t\mapsto \argmax_{r\in R}U(g(r), t)$ such that the following conditions hold:
		
		\begin{enumerate}
			\item for all $t\in T$, it holds that
			\begin{align*}
				\Xi(t)\in   \argmax_{e\in E} ed(g(r^*(t)), c)+(x_0-c)_+-\lambda k(e);
			\end{align*}
			\item for all $t\in T$ such that $ (0, 1]\cap   \argmax_{e\in E} ed(g(r^*(t)), c)+(x_0-c)_+-\lambda k(e)\ne \emptyset$, it holds that
			\begin{align*}
				A(t)=\int_\mathcal D \Big [\int _{\Theta }\theta \diff \mu(\theta )\ge c\Big ] \diff g(r^*(t))(\mu),
			\end{align*}
			in which we identify the signal $g(r^*(t))$ with the belief distribution induced by the signal.\footnote{Without abusing notation, we denote the space of messages of signal $g(r^*(t))$ by the set $S^t$, the posterior belief given $s\in S^t$ --- i.e., the belief of a Bayesian agent who observes $s$, and knows $S^t$, $g(r^*(t))$, and $F_0$ --- by $\mu^s\in\mathcal D$, and express item (2.)\ as:
				\begin{align*}
					A(t) = \int_\Theta \int _{S^t}\Big  [\int _{\Theta}\theta \diff \mu^s(\theta )\ge c\Big ] \diff g(r^*(t))(s|\theta)\diff F_0(\theta ), 
			\end{align*}}
		\end{enumerate}
		Note that, unlike in the rest of the paper, we restrict attention to sender-preferred tie-breaking rule the indifferences at the stage of the choice of action, because of how we define the action distribution induced by a mechanism. The purpose is to reduce notational density.
		
		\begin{proposition}
			If the deterministic mechanism $(R, g)$ induces the effort distribution $\Xi$ and the action distribution $A$, then there exists a direct and deterministic mechanism $g'$ such that: $g'$ is incentive compatible and induces the effort distribution $\Xi$ and the action distribution $A$.
		\end{proposition}
		\begin{proof}
			Let the deterministic mechanism $(R, g)$ induce the effort distribution $\Xi$ and the action distribution $A$. Let's fix a selection $r^*$ from $t\mapsto \argmax_{r\in R}U(g(r), t)$ such that, for all $t$, $\Xi(t)\in \argmax_{e\in E} ed(g(r^*(t)), c)+(x_0-c)_+-\lambda k(e)$, and: $(0, 1]\cap \argmax_{e\in E} ed(g(r^*(t)), c)+(x_0-c)_+-\lambda k(e)\ne \emptyset$ only if $A(t)=\int_\mathcal D [\int _{\Theta }\theta \diff \mu(\theta )\ge c] \diff g(r^*(t))(\mu)$. We observe that $g'\coloneqq g\circ r^*$ is a direct and deterministic mechanism that is incentive compatible. Moreover, by definition, $g'$ assigns the same signal to every type as $g$, so, for every $t$, $g'(t)$ leads to the same set of optimal effort levels as $g(r^*(t))$, and the same optimal action given any signal realization; formally, it holds that, for every $t$,
			\begin{align*}
				\argmax_{e\in E} ed(g'(t), c)+(x_0-c)_+-\lambda k(e)=\argmax_{e\in E} ed(g(r^*(t)), c)+(x_0-c)_+-\lambda k(e);
			\end{align*}
			and
			\begin{align*}
				\int_\mathcal D \Big [\int _{\Theta }\theta \diff \mu(\theta )\ge c\Big  ] \diff g'(t)(\mu)=\int_\mathcal D\Big   [\int _{\Theta }\theta \diff \mu(\theta )\ge c\Big  ] \diff g(r^*(t))(\mu). 
			\end{align*}
		\end{proof}
		
		\subsubsection{Revelation principle for stochastic mechanisms} \label{app:sec:rev:stochastic}

		A \emph{stochastic mechanism} is a pair $(R, h)$, in which $R$ is an abstract set of reports and $h\colon R\to\Delta \Pi$.
		
		\begin{remark}
			In order to avoid technical discussions, in this appendix we assume that, for every stochastic mechanism $(h, R)$, type $t\coloneqq(c, \lambda)\in T$, and report $r\in R$, the following conditions expectations exist: $\int _{\Pi} d(\pi, c)\diff h(r)(\pi)$, and, for every $S\subseteq \mathcal D$, the expectation $ \int _\Pi \int _{\mathcal D}[\mu \in S]\diff \pi (\mu)\diff h(r)(\pi)$.
		\end{remark}

		A stochastic mechanism $(R, h)$ is \emph{direct} if $R=T$; we denote the stochastic and direct mechanism $(R, h)$ by $h$. We consider the following utility function over a distribution of experiments $y$, such as $h(r)$, for all $(c, \lambda)\in T$, 
		\begin{align*}
			\hat U(y, (c, \lambda)) = \max_{e\in E} e\int_{\Pi}d(\pi, c)\diff y(\pi)+(x_0-c)_+-\lambda k(e).
		\end{align*}
		A direct and stochastic mechanism $h$ is \emph{incentive-compatible}  if $\hat U(h(t))   \ge \hat U(h(t'))  $, for all $(t, t')\in T^2$. A stochastic mechanism $(R, h)$ \emph{induces the effort distribution $\Xi \colon T\to E$ and the action distribution $A \colon T\to [0,1]$} if: there exists a selection $r^*$ from $t\mapsto \argmax_{r\in R} \hat U(h(r))$ such that the following conditions hold:
		\begin{enumerate}
			\item For all $t\in T$, it holds that $\Xi(t)\in   \argmax_{e\in E} e\int _{\Pi} d(\pi, c)\diff h(r^*(t))(\pi)+(x_0-c)_+-\lambda k(e)$;
			\item For all  $t\in T$ such that $ (0, 1]\cap   \argmax_{e\in E} e\int _{\Pi} d(\pi, c)\diff h(r^*(t))(\pi)+(x_0-c)_+-\lambda k(e)\ne \emptyset$, it holds that $A(t)=\int_\Pi\int _\mathcal D [\int _{\Theta }\theta \diff \mu(\theta )\ge c] \diff \pi(\mu)\diff h(r^*(t))(\pi)$.\footnote{As in the previous section, we identify the belief distribution induced by the experiment $\pi$ by $\pi$.}
		\end{enumerate}

		\begin{proposition}
			If the stochastic mechanism $(R, h)$ induces the effort distribution $\Xi$ and the action distribution $A$, then there exists a direct and deterministic mechanism $g'$ such that: $g'$ is incentive compatible and induces the effort distribution $\Xi$ and the action distribution $A$.
		\end{proposition}
		\begin{proof}
			Let the stochastic mechanism $(R, h)$ induce the effort distribution $\Xi$ and the action distribution $A$. Let's fix  a selection $r^*$ from $t\mapsto \argmax_{r\in R} \hat U(h(r))$ such that, for all $t\in T$, $\Xi(t)\in   \argmax_{e\in E} e\int _{\Pi} d(\pi, c)\diff h(r^*(t))(\pi)+(x_0-c)_+-\lambda k(e)$, and: $ (0, 1]\cap   \argmax_{e\in E} e\int _{\Pi} d(\pi, c)\diff h(r^*(t))(\pi)+(x_0-c)_+-\lambda k(e)\ne \emptyset$ only if $A(t)=\int_\Pi\int _\mathcal D [\int _{\Theta }\theta \diff \mu(\theta )\ge c] \diff \pi(\mu)\diff h(r^*(t))(\pi)$. We observe that $h'\coloneqq h\circ r^*$ is a direct and stochastic mechanism that is incentive compatible. Moreover, $h'$ associates the same random signal to every type as $h$.
			
			Note that the maximand in the effort-choice problem of type $t$ depends on $h(r^*(t))$ only through the induced distribution of the Bayesian belief. Fix $t$ and define the posterior distribution $p_t\in\Delta \mathcal D$ such that, for every $S\subseteq \Delta [0,1]$,
			\begin{align*}
				p_t(S) = \int _\Pi \int _{\mathcal D}[\mu \in S]\diff \pi (\mu)\diff h(r^*(t))(\pi);
			\end{align*}
			note that we denote the belief distribution induced by $\pi$ as $\pi$.
			The distribution $p_t$ is Bayes plausible, because every distribution in the support of $h(r^*(t))$ is Bayes plausible. Morevoer, a type $t$ that observes a signal that induces $p_t$ has the same set of optimal effort levels as if the signal is distributed according to $h(r^*(t))$, and the same action distribution; formally, letting $t=(c, \lambda)$, it holds that
			\begin{align*}
				\argmax_{e\in E} ed(p_t, c)+(x_0-c)_+-\lambda k(e)=\argmax_{e\in E} e\int _{\mathcal D}d(p, c)\diff h(r^*(t))(p)+(x_0-c)_+-\lambda k(e);
			\end{align*}
			and
			\begin{align*}
				\int_\mathcal D \Big [\int _{\Theta }\theta \diff \mu(\theta )\ge c\Big ] \diff p_t(\mu)=\int_\Pi\int _\mathcal D \Big [\int _{\Theta }\theta \diff \mu(\theta )\ge c \Big ]\diff p(\mu)\diff h(r^*(t))(p).
			\end{align*}
			As an implication, the deterministic and direct mechanism $g\colon t\mapsto p_t$ is incentive compatible and induces the effort distribution $\Xi$ and the action distribution $A$.
		\end{proof}

		\section{Proofs} \label{app:sec:proofs}

		We endow $\mathcal F$ with the $L^1$ norm, which metrizes weak convergence \cite[Lemma 1]{machina_expected_1982}.	We endow $\mathcal I$ with the pointwise order, denoted by $\le$. We define the functions
		\begin{align*}
			W_\lambda \colon&F\mapsto \int _{[0,1]} V_\lambda (\Delta I_{ F} (c)) \frac{\partial g_{}}{\partial c}(c|\lambda)\diff c
		\end{align*}
		and $W\colon F\mapsto  \int _{[0,1]} W_\lambda(F) \diff G (\lambda)$. The function $g\colon \mathbb R^2\to \mathbb R$ exhibits \emph{increasing differences} if $t\mapsto g(s', t)-g(s, t)$ is nondecreasing for all $s',~s \in \mathbb R$ with $s< s'$.
		
		Proofs that are mainly technical or follow from known arguments are relegated to Appendix \ref{app:sec:supp}.

		\begin{definition}\label{app:def:equilibriumexperiment}
					The experiment $F$ is $W$ \emph{maximal} if $F$ maximizes $W$ on $\mathcal F$. The experiment $\hat F\in \mathcal F$ is an \emph{equilibrium experiment} if there exists an equilibrium $\langle F, e, \alpha \rangle$ with $\hat F(x) = F(x)$ for all $x\in \mathbb R$. The Receiver's \emph{value of $F\in \mathcal F$} is $V_\lambda (\Delta I_F(c)):=\max_{e \in [0,1]} V _\lambda(e, \Delta I_F(c))$. There are \emph{multiple Sender's payoffs} if there exist equilibria $\langle F, e, \alpha \rangle$ and $\langle \tilde F, \tilde  e, \tilde \alpha \rangle$ such that $\hat W(F, e, \alpha)\ne \hat W( \tilde F, \tilde e, \tilde \alpha)$.
		\end{definition}

		\begin{remark}\label{app:rem:extensive}
		Let's fix an equilibrium $\langle F, e(\cdot), \alpha \rangle$. We have $e(c, \lambda, I)=e_\lambda^*\circ \Delta I (c)$ for some selection  $e_\lambda ^*$ from $\Delta J(c) \mapsto \argmax_{e \in [0,1]} V_\lambda(e, \Delta J(c))$ by $e$ Opt. We define $\underline c _\lambda (\Delta I)= \sup  \{c\in[0,x_0] \mid  e_\lambda^*\circ \Delta I (c)=0\}$, if $\{c\in[0,x_0] \mid e_\lambda^*\circ \Delta I (c)= 0\}\ne \emptyset$, and $\underline c _\lambda (\Delta I)=0$ otherwise. We define $	\overline c _\lambda (\Delta I) = \inf \{c\in[x_0, 1] \mid e_\lambda^*\circ \Delta I (c)= 0\}$, if $\{c\in[x_0, 1] \mid e_\lambda^*\circ \Delta I (c)= 0\}\ne\emptyset$, and $\overline   c _\lambda (\Delta I)=1$ otherwise. 
		\end{remark}

		\subsection{Proofs for Section \ref{sec:receiver}}

			\begin{proof}[Proof of Lemma \ref{lem:value}]
				Let's fix Receiver's type $(c, \lambda)$ and $I\in \mathcal I$. By definition of $U_R$, letting $\alpha(c, x)$ be any probability measure over $\{0,1\}$ such that $\alpha (c, x)\big ( \argmax _{a \in \{0,1\}}a(x-c) \big ) =1$ for all $x\in[0,1]$, we have
			\begin{align*}
				\int _{[0,1]} U_R(x, e, c, \lambda ) \diff I'(x) +\lambda k(e) &= \int _{[c, 1]} x-c\diff I'(x)\\
				&\qquad -\big (1-\alpha(c, c)(\{1\})\big )\big (I'(c)-I'(c^-)\big)(c-c),\\
				&= \int _{[c, 1]} x-c\diff I'(x).
			\end{align*}
			Moreover, 
			\begin{align*}
					\int _{[0,1]} U_R(x, e, c, \lambda ) \diff I'(x) +\lambda k(e)  &= (1-c) - \int _{[c, 1]}I'(x)\diff x,\\
					&=	x_0-c +I(c).
			\end{align*}
			in which the first equality is due to Riemann--Stieltjes integration by parts \citep[Lemma 2]{machina_expected_1982} and the second one is due to absolute continuity of $I$. It follows that
			\begin{align*}
				\int _{[0,1]} U_R(x, e, c, \lambda ) \diff I'(x) - \int_{[0,1]} U_R(x, e, c, \lambda) \diff \overline F(x) = \Delta I(c).
			\end{align*}			
		\end{proof}

		\begin{proof}[Proof of Lemma \ref{lem:receiver}]
		Part 1.\ follows from the definition of information policies and the equilibrium properties of $\alpha$, part 2.\ follows from Lemma \ref{lem:value} and the equilibrium properties of $e$.
	\end{proof}

		\begin{proof}[Proof of Lemma \ref{lem:interval}]
		Let	 $\langle  I^*, e(\cdot), \alpha \rangle$ be an equilibrium, and let's fix $\lambda \in[0,1]$ and $I\in\mathcal I$. We start with three preliminary observations. First, $e(c, \lambda, I)$ equals $e^*\circ \Delta I (c)$ for some selection  $e^*$  from $\Delta J(c) \mapsto \argmax_{e \in [0,1]} V_\lambda(e, \Delta J(c))$, via Lemma \ref{lem:receiver}. Second, every selection from $\Delta J(c) \mapsto \argmax_{e \in [0,1]} V_\lambda(e, \Delta J(c))$ is nondecreasing, because $V_\lambda$ satisfies strictly increasing differences, via Corollary \ref{cor:superm} and known results \citep[Theorem 6.3]{topkis_minimizing_1978}. From these observations, it follows that $e^*\circ \Delta I $ is nondecreasing on $[0, x_0]$ and nonincreasing on $[x_0, 1]$ because $\Delta I$ is nondecreasing on $[0,x_0]$ and $\Delta I$ is nonincreasing on $[x_0, 1]$.
		
		If $e^* (\Delta I(x_0))=0$, then every cutoff $c$ has $e^* (\Delta I(c))=0$, by the above observations. Let's suppose that $e^* (\Delta I(x_0))>0$. We define $\underline c_\lambda(\Delta I) = \sup  \{c\in[0,x_0] :  e^*\circ \Delta I (c)=0\}$, if $\{c\in[0,x_0] :  e^*\circ \Delta I (c)= 0\}\ne \emptyset$, and $\underline  c_\lambda(\Delta I)=0$ otherwise. We define $	\overline  c_\lambda(\Delta I) = \inf \{c\in[x_0, 1] : e^*\circ \Delta I (c)= 0\}$, if $\{c\in[x_0, 1] : e^*\circ \Delta I (c)= 0\}\ne\emptyset$, and $\overline  c_\lambda(\Delta I) =1$ otherwise. First, we note that $e^*\circ \Delta I (c)> 0$ only if: $c\in [\underline c, \overline c]$; second, $c\in (\underline c, \overline c)$ only if: $e^*\circ \Delta I (c)> 0$. Thus, for all $\lambda$ we have that: either no type $(c,\lambda)$ chooses positive effort or $e^{-1}_\lambda((0, 1])$ is an interval. The result follows from the fact that $\Delta I(x_0)\ge\Delta I(c)$ for all $c\in[0,1]$.
	\end{proof}
	
	\subsection{Proofs for Section \ref{sec:mechanisms}} \label{app:sec:mechanisms}
		Theorem \ref{thm:mechanisms} is implied by Proposition \ref{app:prop:mechanisms}. We state and prove the result using a more general model, in which $V_\lambda(e, \Delta I(c))$ is replaced an increasing differences $f(e, \Delta I(c))$. In this way, we show that the increasing-differences property is key, not linearity.

		For this section, we fix a function $f\colon [0,1]\times [ 0,1] \to \mathbb R$ that satisfies strictly increasing differences, and such that: $f(\cdot, a)$ is continuous for all $a\in[0,1]$, $f(e, \cdot)$ is nondecreasing for all $e\in[0,1]$, the derivative with respect to the variable $a$, $\frac{\partial f}{\partial a} (e, \cdot)$, exists, is nonnegative, and bounded for all $e\in[0,1]$, and $f(e, \cdot)$ is increasing for all $e\in(0,1]$. We maintain the notation and definitions in the main text, with the exceptions that the following definitions replace the corresponding ones in the main text: The \emph{value of $I\in\mathcal I$}, given type $(c, \lambda)$ and effort $e$, is $V_\lambda(e, \Delta I(c)):=f(e, \Delta I(c)) -K(e, \lambda)$, and the cost of effort $e\in[0,1]$ is $K(e, \lambda)$, for a continuous function $K(\cdot, \lambda)$. We use the shorthand $t=(c_t, \lambda_t)$, we define the set of optimal effort levels $	E_{\lambda_t} (\Delta I(c_t))\coloneqq \argmax_{e \in [0,1]}V_{\lambda_t}(e, \Delta I(c_t))$, 
		and the corresponding value function $V_{\lambda_t}(\Delta I(c_t)):=\max_{e \in [0,1]}V_{\lambda_t}(e, \Delta I(c_t))$, for $I\in \mathcal I$. A persuasion mechanism  $I_\bullet$ is \emph{incentive compatible} (IC) if: 
		\begin{align*}
			t\in \argmax_{r\in R} V_{\lambda_t}(\Delta I_r(c_t)) ,  \text{for all types $t\in T$.}
		\end{align*}

		\begin{definition}
			An IC persuasion mechanism $I_\bullet$ is \emph{equivalent to an experiment} if there exists information policy $I$ such that, for all $t\in T$:
			\begin{align*}
				\text{1.}\  & E_{\lambda_t} (\Delta I_t(c_t)) \subseteq  		E_{\lambda_t} (\Delta I(c_t)),\\
				\text{2.}\ & \partial I_t(c_t) \subseteq \partial I(c_t) \  \text{if} \ (0, 1] \cap	E_{\lambda_t} (\Delta I_t(c_t)) \ne \emptyset.
			\end{align*}
		\end{definition}
		
		\begin{proposition}\label{app:prop:mechanisms}
			Every IC persuasion mechanism is equivalent to an experiment.
		\end{proposition}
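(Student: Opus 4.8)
The natural candidate is the upper envelope of the menu, $J\colon x\mapsto\sup_{r\in R}I_r(x)$, as announced just before the statement. First I would verify that $J\in\mathcal I$: a pointwise supremum of convex functions is convex, and $I_{\overline F}\le I_r\le I_{F_0}$ for every $r\in R$ forces $I_{\overline F}\le J\le I_{F_0}$, so $J$ is a legitimate information policy. Since $\Delta$ subtracts the fixed function $I_{\overline F}$, it commutes with the supremum, hence $\Delta J(c)=\sup_{r\in R}\Delta I_r(c)$ for every $c$. The plan is then to check, for each type $t=(c_t,\lambda_t)$, the two inclusions defining ``equivalent to an experiment'' with $I=J$, treating separately the types with $(0,1]\cap E_{\lambda_t}(\Delta I_t(c_t))\neq\emptyset$ and those with $E_{\lambda_t}(\Delta I_t(c_t))=\{0\}$.

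For a type $t$ with $e^{*}\in E_{\lambda_t}(\Delta I_t(c_t))$ and $e^{*}>0$, the key observation is a strict monotonicity of the indirect value $V_{\lambda}(a):=\max_{e\in[0,1]}\bigl(f(e,a)-K(e,\lambda)\bigr)$: it is nondecreasing in $a$ because $f(e,\cdot)$ is nondecreasing for all $e$, and for $a'>a$ one has $V_{\lambda}(a')\ge f(e^{*},a')-K(e^{*},\lambda)>f(e^{*},a)-K(e^{*},\lambda)=V_{\lambda}(a)$, using that $f(e^{*},\cdot)$ is \emph{increasing} because $e^{*}>0$. (This is the concrete content of Corollary \ref{cor:superm}; alternatively one may quote it together with the envelope theorem, Lemma \ref{app:lem:ET}.) If some report $r$ had $\Delta I_r(c_t)>\Delta I_t(c_t)$, this strict monotonicity would give $V_{\lambda_t}(\Delta I_r(c_t))>V_{\lambda_t}(\Delta I_t(c_t))$, contradicting incentive compatibility; hence $\Delta I_r(c_t)\le\Delta I_t(c_t)$ for all $r$, and since $r=t$ attains the bound, $\Delta J(c_t)=\Delta I_t(c_t)$, i.e.\ $J(c_t)=I_t(c_t)$. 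Inclusion~1 is then the identity $E_{\lambda_t}(\Delta I_t(c_t))=E_{\lambda_t}(\Delta J(c_t))$, and inclusion~2 is immediate: if $p\in\partial I_t(c_t)$ then $J(x)\ge I_t(x)\ge I_t(c_t)+p(x-c_t)=J(c_t)+p(x-c_t)$ for all $x\in\mathbb R_+$, so $p\in\partial J(c_t)$.

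For a type $t$ with $E_{\lambda_t}(\Delta I_t(c_t))=\{0\}$, inclusion~2 is vacuous and inclusion~1 reduces to $0\in E_{\lambda_t}(\Delta J(c_t))$. Suppose not: some $e^{**}\in(0,1]$ satisfies $f(e^{**},\Delta J(c_t))-K(e^{**},\lambda_t)>f(0,\Delta J(c_t))-K(0,\lambda_t)$, say by $\delta>0$. Using continuity of $f(e^{**},\cdot)$ and $\Delta J(c_t)=\sup_{r}\Delta I_r(c_t)$, choose $r$ with $\Delta I_r(c_t)$ close enough to $\Delta J(c_t)$ that $f(e^{**},\Delta I_r(c_t))>f(e^{**},\Delta J(c_t))-\delta/2$. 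Then
\[
V_{\lambda_t}(\Delta I_r(c_t))\ge f(e^{**},\Delta I_r(c_t))-K(e^{**},\lambda_t)>f(0,\Delta J(c_t))-K(0,\lambda_t)+\tfrac{\delta}{2}\ge V_{\lambda_t}(\Delta I_t(c_t))+\tfrac{\delta}{2},
\]
where the last step uses $\Delta I_t(c_t)\le\Delta J(c_t)$, that $f(0,\cdot)$ is nondecreasing, and $V_{\lambda_t}(\Delta I_t(c_t))=f(0,\Delta I_t(c_t))-K(0,\lambda_t)$. This contradicts incentive compatibility, so $0\in E_{\lambda_t}(\Delta J(c_t))$. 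Having verified inclusions~1 and~2 for $J$ and all $t$, the proposition is proved; in the application to Theorem \ref{thm:mechanisms} these inclusions translate, via Lemma \ref{lem:receiver}, into the coincidence of every truthful type's action and effort under $I_\bullet$ and under $J$.

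The main obstacle is the zero-effort case: $\Delta J(c_t)$ can strictly exceed $\Delta I_t(c_t)$ and the supremum defining $J(c_t)$ need not be attained, so one must rule out that raising the net informativeness to $\Delta J(c_t)$ tips the Receiver into acquiring information --- this is precisely what the continuity/approximation argument above does. By contrast, in the positive-effort case incentive compatibility together with the strict monotonicity of $V_{\lambda_t}$ pins $J(c_t)$ to $I_t(c_t)$ exactly, after which the action side is elementary convex analysis and no further appeal to supermodularity is needed.
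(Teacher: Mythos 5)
Your proof is correct, and it follows the same architecture as the paper's: the same candidate $J=\sup_{r}I_r$, the same case split between types with positive and with zero optimal effort, the same pinning-down of $J(c_t)=I_t(c_t)$ via strict monotonicity of the indirect value in the positive-effort case, and the same one-line subgradient argument for actions. Where you differ is in the supporting machinery, and in both places your version is more elementary: in the positive-effort case the paper derives strict monotonicity of $V_{\lambda_t}$ from the envelope theorem (Lemma \ref{app:lem:ET}, part 4), whereas you get it directly from $V_{\lambda_t}(a')\ge f(e^*,a')-K(e^*,\lambda_t)>f(e^*,a)-K(e^*,\lambda_t)=V_{\lambda_t}(a)$ for an optimal $e^*>0$; and in the zero-effort case the paper invokes Berge's Maximum Theorem and upper hemicontinuity of $E_{\lambda_t}$ to locate an intermediate point $\overline a\in(\Delta I_t(c_t),\Delta J(c_t))$ with positive optimal effort, whereas your $\delta/2$ approximation via a near-supremal report $r$ reaches the IC contradiction directly. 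Your zero-effort argument has the additional merit of explicitly producing the deviating report, a step the paper leaves implicit since $\Delta J(c_t)$ is a supremum that no report need attain; the paper's route, in exchange, yields the upper hemicontinuity of $E_{\lambda_t}$ as a reusable fact. Either way the proposition stands.
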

		\begin{proof}
			Let's fix an IC persuasion mechanism $I_\bullet$. The proof has three steps: (1) we define an information policy $J$, (2) we show that $J$ induces the same effort and (3) action as $I_\bullet$.

			\emph{Step 1: Definition of information policy $J$.}             
			Let's define the function $I\colon[0,1]\to [0,1]$ as
			\begin{align*}
				I(c) := \sup_{r\in R}  I_r(c), \ c\in [0,1].
			\end{align*}
			$I(c)$ is well defined because $0\le I_r(c)\le I_{F_0}(c)\le 1-x_0$, $c\in [0, 1]$. $I$ is the pointwise supremum of a family of convex functions, so $I$ is convex. We have $I_{\overline F}(c)\le I(c) \le I_{F_0}(c)$, $c\in [0,1]$, because $I_r\in\mathcal I, r \in R$. We extend $I$ on $(1, \infty)$, so that the resulting extended function $J\colon \mathbb R_+ \to \mathbb R_+$ is an information policy, by defining $J(c)=I_{F_0}(c)$, $c\in (1, \infty)$, and $J(c)=I(c)$, $c\in[0,1]$. Thus, $J\in \mathcal I$.			
			
			\emph{Step 2: Effort distribution} There are two cases.
			\begin{enumerate}
				\item[1.]$ E_{\lambda_t} (\Delta I_t(c_t))\cap (0, 1] \ne \emptyset$.
				\item[2.] $ E _{\lambda_t} (\Delta I_t(c_t) ) = \{0\}$.
			\end{enumerate}
			First, we consider case (1.). By the envelope theorem (Lemma \ref{app:lem:ET}), we have: 
			\begin{align*}
				V_{\lambda_t}(a) - V_{\lambda_t}(\Delta I_{t}(c_t)    ) = \int _{\Delta I_{t}(c_t) } ^a \frac{\partial f}{\partial a} (e(\tilde  a), \tilde a)\diff \tilde a,
			\end{align*}
			for a selection $e$ of $E_{\lambda_t}$. Because $f$ exhibits strictly increasing differences, $e(\tilde a) \ge e(\Delta I_{t}(c_t) )$ if $\tilde a \ge \Delta I_{t}(c_t) $.  By the assumption that $\frac{\partial f}{\partial a} (\cdot , \tilde a)>0$ on $(0, 1]$ for all  $\tilde  a$
			\begin{align*}
				V_{\lambda_t}(a ) - V_{\lambda_t}(\Delta I_{t}(c_t) )>0, \ \text{for all} \ a > \Delta I_{t}(c_t).
			\end{align*}
			Thus, in case (1.), IC implies that
			\begin{align*}
				\sup_{r\in R} \Delta I_r (c_t) = \Delta I_t(c_t).
			\end{align*}	
			Let's consider case (2.), and, towards a contradiction, let's suppose $0\notin E_{\lambda_t} (\Delta J(c_t) ) $. By Berge's Maximum Theorem \citep[Theorem 17.31]{aliprantis_infinite_2006}, $E_{\lambda_t}$ is upper hemi-continuous and has compact values. Hence, by the sequential characterization of upper hemi-continuity of compact-valued correspondences \citep[Theorem 17.16]{aliprantis_infinite_2006}, there exists $\overline a\in (\Delta I_t(c_t), \Delta J(c_t))$ and $\eta>0$ such that $ \eta \in E_{\lambda_t} (\overline a) $ (else, define $a_n:=\frac{1}{n}\Delta I_t(c_t) + \left (1-\frac{1}{n} \right )\Delta J(c_t),~n\in \mathbb N$, to get: $a_n\to \Delta J(c_t)$ as $n\to \infty$, $E_{\lambda_t}(a_n)=\{0\},~n\in \mathbb N$, and $0\notin E_{\lambda_t}(\Delta J(c_t))$, which contradicts upper hemi-continuity of $E_{\lambda_t}$.) By the assumption that $\frac{\partial f}{\partial e} (\tilde a, \cdot )>0$ on $(0, 1]$ for all  $\tilde  a$
			\begin{align*}
				V_{\lambda_t} (\Delta J(c_t) ) - V_{\lambda_t} (\overline a  )>0.
			\end{align*}
			The above inequality and the envelope theorem imply that
			\begin{align*}
				V_{\lambda_t} (\Delta J(c_t) ) - V_{\lambda_t} (\Delta I_t(c_t) )>0.
			\end{align*}
			Hence, IC does not hold, which is a contradiction.  Thus, $0\in E_{\lambda_t} (\Delta J(c_t) ) $.

			\emph{Step 3:  Action distribution} Let's suppose that $d\in \partial I_s(c_s)$ and $d\notin  \partial  J(c_s)$ for some type $s\in T$. We have that $d$ is a subgradient of $I_s$ at $c_s$, and $d$ is not subgradient of $J$ at $c_s$.  From the fact that $J(c_s)=I_s(c_s)$ --- established above ---, there exists $x\in \mathbb R$ such that
			\begin{align*}
				I_s(x) \ge I_s(c_s) +d(x-c_s) >J(x),
			\end{align*}
			which implies $I_s(x)>J(x)$. The last inequality contradicts the definition of $J$. 
		\end{proof}
		
		\subsection{Proofs for Section \ref{sec:application}}
		 \subsubsection{Proof of Theorem \ref{thm:existence}}
		 
		In this section, we maintain the assumption that: the conditional density of the cutoff type given the attention type $\lambda$, $g(\cdot | \lambda)$, is absolutely continuous for all $\lambda$.
		
		The first part of Theorem \ref{thm:existence}, existence of an equilibrium with an upper censorship as Sender's information policy, is implied by Proposition \ref{app:prop:upper}, given that Assumption \ref{ass:1} contains the continuity requirements assumed in this section. Proposition \ref{app:prop:upper} follows mainly from two results: Lemma  \ref{app:lem:uniqueness}, and the known property of upper censorship in Lemma \ref{app:lem:knownzeta}; a version of the property is in the working paper \citealp[Appendix A.5]{lipnowski_persuasion_2021}; \citet[Theorem 2]{kolotilin_persuasion_2017} and \citet[Theorem 2]{romanyuk_cream_2019} establish similar results. The second part, same utility across equilibria, is established in Lemma \ref{app:lem:uniqueness}.

		\begin{lemma}\label{app:lem:continuity}
			The function $W$	is continuous on $\mathcal F$.
		\end{lemma}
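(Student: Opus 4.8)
The plan is to prove the stronger statement that $W$ is Lipschitz on $(\mathcal F,\|\cdot\|_{L^1})$; continuity then follows a fortiori. I would build the estimate in three layers: from $F$ to its information policy $I_F$, from $I_F$ to the value function evaluated along it, and then through the two integrals that define $W$.

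First I would note that $F\mapsto I_F$ is $1$-Lipschitz into $(C[0,1],\|\cdot\|_\infty)$: for any $F,\hat F\in\mathcal F$ and any $x\in[0,1]$,
\[
\big|I_F(x)-I_{\hat F}(x)\big| = \left|\int_0^x\big(F(y)-\hat F(y)\big)\diff y\right| \le \|F-\hat F\|_{L^1},
\]
hence $\|\Delta I_F-\Delta I_{\hat F}\|_\infty=\|I_F-I_{\hat F}\|_\infty\le\|F-\hat F\|_{L^1}$. Second, the value function $a\mapsto V_\lambda(a)=\max_{e\in[0,1]}\big(ea-\lambda k(e)\big)$ is a supremum of affine functions whose slopes all lie in $[0,1]$, so it is convex and $1$-Lipschitz, with a Lipschitz constant that does not depend on $\lambda$. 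Composing the two bounds, $\big\|V_\lambda\circ\Delta I_F-V_\lambda\circ\Delta I_{\hat F}\big\|_\infty\le\|F-\hat F\|_{L^1}$ for every $\lambda$.

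Third I would push this through the integrals. Since $g(\cdot|\lambda)$ is absolutely continuous, $\partial_c g(\cdot|\lambda)\in L^1[0,1]$, so
\[
\big|W_\lambda(F)-W_\lambda(\hat F)\big| \le \big\|V_\lambda\circ\Delta I_F-V_\lambda\circ\Delta I_{\hat F}\big\|_\infty\,\big\|\partial_c g(\cdot|\lambda)\big\|_{L^1} \le \|F-\hat F\|_{L^1}\,\big\|\partial_c g(\cdot|\lambda)\big\|_{L^1}.
\]
Integrating against $G$ gives $\big|W(F)-W(\hat F)\big|\le M\,\|F-\hat F\|_{L^1}$, where $M:=\int_{[0,1]}\|\partial_c g(\cdot|\lambda)\|_{L^1}\diff G(\lambda)$; this constant is finite, since finiteness of $M$ is precisely what makes $W$ real-valued (and under Assumption \ref{ass:1} the single-peaked shape of $g(\cdot|\lambda)$ even gives the closed form $\|\partial_c g(\cdot|\lambda)\|_{L^1}=2g(p|\lambda)-g(0|\lambda)-g(1|\lambda)$). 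Thus $W$ is $M$-Lipschitz, hence continuous.

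I expect no serious obstacle: the content is just that an $L^1$-perturbation of $F$ moves $I_F$ by no more in sup norm, combined with the uniform-in-$\lambda$ Lipschitz property of the envelope $V_\lambda$. The one point worth stating carefully is that both the Lipschitz constant of $V_\lambda$ and the weight $\|\partial_c g(\cdot|\lambda)\|_{L^1}$ are controlled uniformly in $\lambda$, so that integrating over $\lambda$ preserves the bound; no compactness of $\mathcal F$ is invoked.
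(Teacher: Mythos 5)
Your proposal is correct and follows essentially the same route as the paper: the $1$-Lipschitz bound $\|I_F-I_{\hat F}\|_\infty\le\|F-\hat F\|_{L^1}$, the $1$-Lipschitz property of $V_\lambda$ uniformly in $\lambda$ (which the paper derives via the envelope theorem, Lemma \ref{app:lem:ET}, rather than as a supremum of affine maps with slopes in $[0,1]$ --- the two observations are equivalent here), and then integration against $\left|\frac{\partial g}{\partial c}(\cdot|\lambda)\right|$ and $G(\lambda)$. If anything, your explicit global constant $M$ is slightly more careful than the paper's $\varepsilon$--$\delta$ argument, whose $\delta=\varepsilon/p_\lambda$ depends on $\lambda$, so that the final integration over $\lambda$ implicitly relies on exactly the uniformity you make explicit.
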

		
		\begin{lemma}\label{app:lem:meas}
			There exists a measurable selection from $(c, \lambda, x) \mapsto \argmax _{a \in \{0,1\}} a(\theta -c)$, for all $e\in [0,1]$, and there exists a measurable selection from $(c, \lambda) \mapsto \argmax_{e \in [0,1]}e\Delta I_F(c) -\lambda k(e)$, for all $F\in \mathcal F$.
		\end{lemma}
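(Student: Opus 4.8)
The plan is to treat the two selection claims separately: the first by writing down an explicit selection, the second by appealing to a standard measurable-maximum theorem after verifying the relevant Carathéodory conditions.

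For the action correspondence, I would observe that, at posterior mean $x$ and cutoff $c$, the set $\argmax_{a\in\{0,1\}}a(x-c)$ equals $\{1\}$ when $x>c$, equals $\{0\}$ when $x<c$, and equals $\{0,1\}$ when $x=c$; in particular it depends on neither $\lambda$ nor $e$. Consequently the map $(c,\lambda,x)\mapsto\mathbf 1\{x\ge c\}$ is a selection from this correspondence, valid for every $e\in[0,1]$. Its measurability is immediate, since $\{(c,\lambda,x)\in[0,1]^3:x\ge c\}$ is closed, hence Borel, so its indicator is Borel measurable. (The value assigned on the diagonal $\{x=c\}$ is irrelevant in equilibrium because the cutoff distribution is atomless.)

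For the effort correspondence, fix $F\in\mathcal F$ and consider the objective $\phi\colon(c,\lambda,e)\mapsto e\,\Delta I_F(c)-\lambda k(e)$ on $[0,1]^2\times[0,1]$. I would check that $\phi$ is a Carathéodory function of $e$ given $(c,\lambda)$: it is continuous in $e$ because $k$ is continuous, and it is continuous --- hence Borel measurable --- in $(c,\lambda)$ because $\Delta I_F=I_F-I_{\overline F}$ is Lipschitz on $[0,1]$ (both $I_F$ and $I_{\overline F}$ are indefinite integrals of distribution functions, so each is $1$-Lipschitz) while $\lambda\mapsto\lambda k(e)$ is affine. Since the choice set $[0,1]$ is compact, the correspondence $(c,\lambda)\rightrightarrows\argmax_{e\in[0,1]}\phi(c,\lambda,e)$ is nonempty- and compact-valued, and the Measurable Maximum Theorem \citep[Theorem 18.19]{aliprantis_infinite_2006} delivers its measurability together with a measurable selection. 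This holds for each $F\in\mathcal F$, which is what the statement asserts.

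I do not expect a genuine obstacle here. The only delicate point worth flagging is that the action argmax jumps across the diagonal $x=c$, so Berge's theorem cannot be used to extract a \emph{continuous} selection; but only measurability is needed, and the explicit indicator selection supplies it. Beyond that, the work reduces to verifying the Carathéodory hypotheses for $\phi$, which is routine given the Lipschitz continuity of $\Delta I_F$.
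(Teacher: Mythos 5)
Your proposal is correct and follows essentially the same route as the paper: the paper also dismisses the action selection as the trivial part and obtains the effort selection by noting the maximand is continuous in $(c,\lambda,e)$ and invoking the Measurable Maximum Theorem \citep[Theorem 18.19]{aliprantis_infinite_2006}. Your additional details --- the explicit indicator selection $\mathbf 1\{x\ge c\}$ and the verification that $\Delta I_F$ is Lipschitz --- merely flesh out what the paper leaves implicit.
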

		\begin{proof}
			The nontrivial part is the second one. The maximand is a real-valued function that is continuous in $c$, $\lambda$, and $e$. So, the result follows from the Measurable Maximum Theorem \citep[Theorem 18.19]{aliprantis_infinite_2006}.
		\end{proof}
		The next result establishes that the Sender's expected utility given $I\in\mathcal I$ the same in every equilibrium adopting a slightly stronger uniqueness condition than in Definition \ref{app:def:equilibriumexperiment}. The comparison holds for two reasons. First, Definition \ref{app:def:equilibriumexperiment} compares Sender's expected utility given the \emph{equilibrium information policy} across equilibria, whereas the proof compares Sender's expected utility given an arbitrary and fixed information policy across equilibria. Second, the proof looks at the conditional expected utility given $\lambda$.
		\begin{lemma}\label{app:lem:uniqueness}
			The experiment $F$ is an equilibrium experiment if, and only if: $F$ is $W$ maximal. Moreover, there are not multiple Sender's payoffs.
		\end{lemma}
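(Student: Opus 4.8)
\emph{Strategy.} The plan is to prove the single fact that implies both claims: along \emph{any} equilibrium $\langle F, e, \alpha\rangle$, the map $\hat W(\cdot, e, \alpha)$ agrees on $\mathcal F$ with $W(\cdot) + C$ for a constant $C$ that depends on neither the candidate experiment nor the best responses selected. Granting this, the forward direction is immediate — $F$ maximizes $\hat W(\cdot, e, \alpha)$ over $\mathcal F$ (Sender rationality) iff $F$ maximizes $W$ over $\mathcal F$ — and ``no multiple Sender's payoffs'' follows because every equilibrium payoff then equals $\max_{\mathcal F} W + C$ (the maximum is attained since $W$ is $L^1$-continuous by Lemma \ref{app:lem:continuity} and $\mathcal F$ is $L^1$-compact). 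For the converse, given a $W$-maximal $F$, take $\alpha(c,\lambda,x) := \mathbf 1\{x \ge c\}$, which satisfies $a$ Opt, and, for each $\hat F$, a measurable selection $e(\cdot, \hat F)$ from $(c,\lambda) \mapsto \argmax_{e}[e\,\Delta I_{\hat F}(c) - \lambda k(e)]$, which exists and satisfies $e$ Opt by Lemma \ref{app:lem:meas} (the $e$-Opt argmax coincides with this one by Lemmata \ref{lem:value} and \ref{lem:receiver}); then $\hat W(\cdot, e, \alpha) = W(\cdot) + C$ makes $F$ Sender-rational, so $\langle F, e, \alpha\rangle$ is an equilibrium.

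\emph{Key identity.} Fix an equilibrium and a candidate $\hat F \in \mathcal F$, and write $e(c,\lambda)$ for $e(c,\lambda,\hat F)$. Since $e(c,\lambda) \odot \hat F = e(c,\lambda)\hat F + (1-e(c,\lambda))\overline F$ with $\overline F$ the point mass at $x_0$, and $a$ Opt forces $\alpha(c,\lambda,x) = \mathbf 1\{x > c\}$ for $x \ne c$, one gets, for Lebesgue-a.e. cutoff $c$ (hence $g(\cdot|\lambda)\,dc$-a.e., as $g(\cdot|\lambda)$ is a density; here we use that $\hat F$ has at most countably many atoms and that $c = x_0$ is a single point),
\[
\int_{[0,1]} \alpha(c,\lambda,x)\, d\bigl(e(c,\lambda)\odot\hat F\bigr)(x) = e(c,\lambda)\bigl(1-\hat F(c)\bigr) + \bigl(1-e(c,\lambda)\bigr)\mathbf 1\{x_0 > c\} = \mathbf 1\{x_0 > c\} - e(c,\lambda)\,\Delta I_{\hat F}'(c),
\]
using $\Delta I_{\hat F}'(c) = \hat F(c) - \mathbf 1\{c \ge x_0\}$. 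By $e$ Opt (Lemma \ref{lem:receiver}), $e(c,\lambda) = e_\lambda^\star(\Delta I_{\hat F}(c))$ for a selection $e_\lambda^\star$ of $a \mapsto \argmax_e V_\lambda(e,a)$. The interim value $V_\lambda(a) := \max_e V_\lambda(e,a)$ is convex (a pointwise maximum of affine functions of $a$), hence differentiable off a countable set, and by the envelope theorem (Lemma \ref{app:lem:ET}) its derivative there is a selection of the same argmax. The two selections can disagree only where $\Delta I_{\hat F}(c)$ lies in that countable set, and there $\Delta I_{\hat F}'(c) \ne 0$ only on a Lebesgue-null set of $c$'s (as $\Delta I_{\hat F}$ is absolutely continuous), while on its complement $\Delta I_{\hat F}'(c) = 0$ annihilates the ambiguous factor; hence $e(c,\lambda)\,\Delta I_{\hat F}'(c) = \tfrac{d}{dc}\bigl[V_\lambda(\Delta I_{\hat F}(c))\bigr]$ for a.e. $c$.

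\emph{Conclusion.} Integrating the displayed expression against $g(\cdot|\lambda)$, integrating by parts (legitimate since $c \mapsto V_\lambda(\Delta I_{\hat F}(c))$ and $g(\cdot|\lambda)$ are absolutely continuous), and then integrating over $\lambda$ gives
\[
\hat W(\hat F, e, \alpha) = \int \Bigl( W_\lambda(\hat F) + G(x_0|\lambda) - V_\lambda(0)\bigl(g(1|\lambda) - g(0|\lambda)\bigr) \Bigr)\, dG(\lambda) = W(\hat F) + C,
\]
and $C$ is independent of $\hat F$ because $\Delta I_{\hat F}(0) = \Delta I_{\hat F}(1) = 0$ for every $\hat F \in \mathcal F$ (indeed $I_{F_0}(1) = I_{\overline F}(1) = 1 - x_0$), so $V_\lambda(\Delta I_{\hat F}(0)) = V_\lambda(\Delta I_{\hat F}(1)) = V_\lambda(0) = -\lambda \min_e k(e)$. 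This is the asserted identity, and the lemma follows.

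\emph{Main obstacle.} The delicate point is exactly that the Sender's payoff must be shown independent of \emph{which} Receiver best responses are played: the Receiver's optimal effort may be genuinely multi-valued on the ``flat'' portions of $\Delta I_{\hat F}$, and her optimal action is free at posterior means equal to her cutoff. Both ambiguities are tamed by the same mechanism above — the ambiguous effort enters the Sender's payoff only through the product $e(c,\lambda)\,\Delta I_{\hat F}'(c)$, which vanishes on the flats, and the ambiguous action lives on a $g(\cdot|\lambda)\,dc$-null set by atomlessness of the cutoff distribution — but making this airtight, together with the chain-rule and integration-by-parts identity for the Lipschitz composition $V_\lambda \circ \Delta I_{\hat F}$, is where the real care lies.
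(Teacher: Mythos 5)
Your proof is correct and takes essentially the same route as the paper's: both reduce the lemma to the identity $\hat W(\hat F, e, \alpha) = W(\hat F) + C$ with $C$ independent of $\hat F$ and of the Receiver's tie-breaking, obtained by writing the Sender's conditional payoff as $\mathbf 1\{x_0>c\} - e(c,\lambda)\,\Delta I_{\hat F}'(c)$, integrating by parts against $g(\cdot|\lambda)$, and using $\Delta I_{\hat F}(0)=\Delta I_{\hat F}(1)=0$. The only difference is the technical device for the step $e(c,\lambda)\,\Delta I_{\hat F}'(c)=\tfrac{d}{dc}V_\lambda(\Delta I_{\hat F}(c))$ a.e.: you use an almost-everywhere chain rule for the Lipschitz composition $V_\lambda\circ\Delta I_{\hat F}$ (killing the ambiguity on the nondifferentiability set of $V_\lambda$ via the ``derivative vanishes a.e.\ on preimages of null sets'' fact), whereas the paper applies subdifferential calculus (Fact \ref{app:fact:convex}) separately on $[0,x_0]$ and $[x_0,1]$, where $\Delta I_{\hat F}$ is convex, and then integrates the resulting subgradient selection.
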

		\begin{proof}
			We first show that: $F$ is $W$ maximal if, and only if: $F$ is rational for Sender given $(\alpha, e)$, $\alpha$ satisfies $a$ Opt, and $e$ satisfies $e$ Opt. It suffices to show that the function 
			\begin{align*}
				D_\lambda (\cdot, \alpha, e) \colon F \mapsto \int_{[0,1]} \int _{[0,1]}\alpha (x, c, \lambda) \diff   (e(c, \lambda, F) \odot F)(x)\diff G_{}(c|\lambda)-W_\lambda(F)
			\end{align*}
			is constant for all $\lambda$. First, let's express the Sender's equilibrium--conditional-expected utility given $\lambda$ as 
			\begin{align*}
				\hat	W_\lambda(F) :=	 \int_{[0,1]} \int _{[0,1]} e_\lambda ^*(\Delta I_F(c)) \left (\alpha (x, c, \lambda)- \alpha (x_0, c, \lambda) \right ) \diff F(x)\diff G_{}(c|\lambda) \\
				+\int _{[0,1]}  \alpha (x_0, c, \lambda) \diff G_{}(c|\lambda),
			\end{align*}
			for a selection $e_\lambda^*$ from $a\mapsto \argmax_{e \in [0,1]}V_\lambda (e, a)$, via Remark \ref{app:rem:extensive}. By Lemma \ref{lem:receiver}, there exists a selection $d_I^1$ from the subdifferential of $\Delta I_F $ on $[0, x_0]$ and a selection $d_I^2$ from the subdifferential of $\Delta I_F $ on $(x_0, 1]$ such that:
			\begin{align*}
				-	( \hat  W_\lambda(F) - \hat W_\lambda(\overline F)   )=  \int_{[0,x_0]}  e_\lambda ^*(\Delta I_F(c)) d_I^1(c) \diff G_{}(c|\lambda) +  \int_{(x_0, 1]} e_\lambda ^*(\Delta I_F(c))d_I^2(c)\diff G_{}(c|\lambda)
			\end{align*}
			By the envelope theorem (Lemma \ref{app:lem:ET}), $e_\lambda ^*$ is a selection from the subdifferential of the convex and nondecreasing function $V_\lambda $. By construction, $\Delta I_F$ is: (i) convex on $[0, x_0]$, and (ii) convex on $(x_0, 1]$. Hence: by the rules of subdifferential calculus (Fact \ref{app:fact:convex}), there exists a selection $d$ from the subdifferential of $V_\lambda \circ \Delta I_F$ such that:  $d(c)= e_\lambda ^*(\Delta I_F(c)) d_I^1(c) $, for all $c\in[0, x_0]$, and $d(c) = e_\lambda ^*(\Delta I_F(c)) d_I^2(c) $, for all $c\in( x_0, 1]$. Hence:
			\begin{align*}
				-	( \hat W_\lambda(F) - \hat W_\lambda(\overline F)   )  & =  \int_{[0,x_0]}  d(c) \diff G_{}(c|\lambda)+ \int_{(x_0, 1]}  d(c) \diff G_{}(c|\lambda) \\
				& =  \int_{[0,x_0]}  d(c) \diff G_{}(c|\lambda)+ \int_{[x_0, 1]}  d(c) \diff G_{}(c|\lambda),
			\end{align*}
			in which the second equality uses absolute continuity of $G_{}(\cdot |\lambda)$.
			By Fact \ref{app:fact:convex}, the composition $V_\lambda \circ \Delta I_F$ is a convex function on $[0,x_0]$, so $V_\lambda \circ \Delta I_F$ is the integral of any selection from the its subdifferential on $[0, x_0]$ \citep[Corollary 24.2.1.]{rockafellar_convex_1970} Similarly, $V_\lambda \circ \Delta I_F$ is a convex function on $[x_0, 1]$.  By absolute continuity of $g_{}(\cdot |\lambda)$, we integrate by parts to obtain
			\begin{align*}
				-	( \hat W_\lambda(F) - \hat W_\lambda(\overline F)   )   = V_\lambda \circ \Delta I_F (1)g_{}(1|\lambda) - V_\lambda \circ \Delta I_F (0)g_{}(0|\lambda)  \\ - \int_{[0,1]} V_\lambda \circ \Delta I_F (c) \frac{\partial g_{}}{\partial c}(c|\lambda) \diff c.
			\end{align*}
			The fact that $\Delta I_F(1)=\Delta I_F(0)=0$ implies
			\begin{align*}
				-	( \hat W_\lambda(F) - \hat W_\lambda(\overline F)   )   =\left ( g_{}(1|\lambda) -g_{}(0|\lambda) \right )  V_\lambda (0)  - \int_{[0,1]} V_\lambda \circ \Delta I_F (c) \frac{\partial g_{}}{\partial c}(c|\lambda) \diff c.
			\end{align*}
			Hence, we have
			\begin{align*}
				\hat	W_\lambda(F) = W_\lambda (F)+ \hat  W_\lambda(\overline F) -  ( g_{}(1|\lambda) -g_{}(0|\lambda)  )  V_\lambda (0).
			\end{align*}
			Therefore, we have
			\begin{align*}
				D_\lambda (F, \alpha, e)  = \int _{[0,1]}  \alpha (x_0, c, \lambda) \diff G_{}(c|\lambda) - ( g_{}(1|\lambda) -g_{}(0|\lambda)  )  V_\lambda (0).
			\end{align*}
			As a result, $D_\lambda (\cdot, \alpha, e) $ is constant on $\mathcal F$. We conclude that: $F$ is $W$ maximal if and only if: $F$ is rational for Sender, given $(\alpha, e)$, $\alpha$ satisfies $a$ Opt, and $e$ satisfies $e$ Opt.
			
			From the above equivalence, it follows that: if $\langle \hat F, e, \alpha \rangle$ is an equilibrium, then $\hat F$ is $W$ maximal. For the other direction, let $F$ be $W$ maximal. By Lemma \ref{app:lem:meas}, there exist $e$ and $\alpha$ that satisfy the equilibrium measurability conditions, $a$ Opt, and $e$ Opt, given $F$. Because $F$ is $W$ maximal,  $F$ is rational for Sender, given $(\alpha, e)$, by the above equivalence. Thus, $\langle  F, e, \alpha \rangle$ is an equilibrium. As an implication, there are not multiple Sender's payoffs.
		\end{proof}
		
		\begin{proposition} \label{app:prop:existence}
			There exists an equilibrium.
		\end{proposition}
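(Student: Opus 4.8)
The plan is to reduce the claim to a Weierstrass extreme-value argument, using the two preceding lemmas. By Lemma \ref{app:lem:uniqueness}, an experiment is an equilibrium experiment if, and only if, it is $W$ maximal; so it suffices to produce an $F^{*}\in\mathcal F$ that maximizes $W$ on $\mathcal F$. By Lemma \ref{app:lem:continuity}, $W$ is continuous on $\mathcal F$. Hence the entire problem collapses to showing that $\mathcal F$ is compact in the $L^{1}$ topology, which on $\mathcal F$ coincides with the topology of weak convergence of distributions (as noted in the paper, following \citet{machina_expected_1982}). Given compactness, a continuous real-valued function on a nonempty compact set attains its maximum, and the ``if'' direction of Lemma \ref{app:lem:uniqueness} --- together with the measurable selections of Lemma \ref{app:lem:meas} --- supplies the accompanying strategies $e$ and $\alpha$.

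So the first step is to verify that $\mathcal F$ is compact. Recall that $\mathcal D$, the set of distribution functions on $[0,1]$, is sequentially compact for weak convergence: $[0,1]$ is compact, so every sequence in $\mathcal D$ has a weakly convergent subsequence whose limit is again a distribution function on $[0,1]$ (Helly's selection theorem / Prokhorov). It remains to check that $\mathcal F$ is a closed subset. Fix $x\in\mathbb R_{+}$ and let $F_{n}\to F$ weakly with $F_{n}\in\mathcal F$. Then $F_{n}(y)\to F(y)$ at every continuity point of $F$, hence Lebesgue-almost everywhere; since $0\le F_{n}\le 1$, bounded convergence gives $I_{F_{n}}(x)=\int_{0}^{x}F_{n}(y)\diff y\to\int_{0}^{x}F(y)\diff y=I_{F}(x)$. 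Passing to the limit in $I_{F_{n}}(1)=I_{F_{0}}(1)$ and $I_{F_{n}}(x)\le I_{F_{0}}(x)$ yields the same identity and inequality for $F$; since $x$ was arbitrary, $F\in\mathcal F$. Thus $\mathcal F$ is a closed subset of a compact set, hence compact, and it is nonempty because $F_{0}\in\mathcal F$.

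The conclusion is then immediate: $W$ is continuous and real-valued on the nonempty compact set $\mathcal F$, so it attains a maximizer $F^{*}\in\mathcal F$, which is $W$ maximal. By Lemma \ref{app:lem:uniqueness}, $F^{*}$ is an equilibrium experiment, i.e., there exists an equilibrium $\langle F^{*}, e, \alpha\rangle$. I expect the main obstacle to be the bookkeeping in the compactness step --- specifically, checking that the continuum of pointwise constraints $I_{F}(x)\le I_{F_{0}}(x)$ is stable under weak limits (handled above by bounded convergence applied to the cumulative integrals $I_{F_{n}}$) and that weak and $L^{1}$ convergence agree on $\mathcal F$, so that metrizability legitimizes the passage from sequential compactness to compactness. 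The genuinely delicate analytic content --- rewriting Sender's equilibrium payoff in terms of the smooth functional $W$ via the envelope theorem --- has already been discharged in Lemma \ref{app:lem:uniqueness}, so nothing more than Weierstrass is needed here.
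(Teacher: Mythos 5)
Your proof is correct and follows essentially the same route as the paper: reduce to $W$-maximality via Lemma \ref{app:lem:uniqueness}, invoke continuity of $W$ from Lemma \ref{app:lem:continuity}, and apply Weierstrass on the compact set $\mathcal F$. The only difference is that you establish compactness of $\mathcal F$ directly (Helly's selection theorem plus a closedness check under weak limits), whereas the paper simply cites \citet[Proposition 1]{kleiner_extreme_2021} for this fact; your verification is sound.
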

		\begin{proof}
			The set $\mathcal F$ is compact in the topology induced by the $L^1$ norm \citep[Proposition 1.]{kleiner_extreme_2021} The result follows from Lemma \ref{app:lem:uniqueness} via upper semi continuity of the Sender's maximand in the definition of $W$ maximality (Lemma \ref{app:lem:continuity}).
		\end{proof}

		\begin{lemma} \label{app:lem:knownzeta}
			Let $I\in \mathcal I$ and $\zeta \in [0,1]$. There exists $\theta \in[0,\zeta]$ such that:
			\begin{enumerate}
				\item[(1.)] $I_\theta (\zeta) = I(\zeta)$;
				\item[(2.)] $I_\theta'(\zeta^-) \leq  I'(\zeta^-)$, and
				\begin{align*}
					I_\theta(x) - I(x) \geq 0 &, \ \text{for all} \  x\in [0,\zeta],\\
					I_\theta (x) - I(x) \leq 0 &, \ \text{for all} \  x\in [\zeta,\infty).
				\end{align*}. 
			\end{enumerate}
		\end{lemma}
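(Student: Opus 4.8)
The plan is to produce the threshold $\theta$ by an intermediate-value argument and then to verify items (1) and (2) from the convexity of $I$. Consider the map $\phi\colon[0,\zeta]\to\mathbb R$ given by $\phi(\theta)=I_\theta(\zeta)$. Since $F_0$ is atomless, $F_0$ and $I_{F_0}$ are continuous, so $\phi$ is continuous; its endpoint values are $\phi(0)=I_{\overline F}(\zeta)$ and $\phi(\zeta)=I_{F_0}(\zeta)$, and $I\in\mathcal I$ gives $I_{\overline F}(\zeta)\le I(\zeta)\le I_{F_0}(\zeta)$. Hence the set $S:=\{t\in[0,\zeta]\colon\phi(t)=I(\zeta)\}$ is nonempty, and it is compact, so it has a largest element; I take $\theta:=\max S$. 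This delivers item (1) at once.

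The key step --- and the one I expect to be the main obstacle --- is to show that this \emph{maximal} $\theta$ satisfies $\zeta\le\mu(\theta)$, where $\mu(\theta)$ denotes the posterior mean of the states pooled by $I_\theta$, equivalently the abscissa at which the affine piece $L_\theta(x):=I_{F_0}(\theta)+F_0(\theta)(x-\theta)$ of $I_\theta$ rejoins $I_{\overline F}$; note that $\mu$ is continuous and $\mu(\theta)\ge\theta$. Were $\zeta>\mu(\theta)$, then $I_\theta(\zeta)=I_{\overline F}(\zeta)$; if $\theta<\zeta$, continuity of $\mu$ yields $\theta'\in(\theta,\zeta]$ with $\mu(\theta')<\zeta$, so $I_{\theta'}(\zeta)=I_{\overline F}(\zeta)=I(\zeta)$, contradicting $\theta=\max S$; and if $\theta=\zeta$ then $\mu(\zeta)=\mathbb E_{F_0}[\theta'\mid\theta'\ge\zeta]\ge\zeta$, a contradiction. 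The point of this step is that the naive intermediate-value choice of $\theta$ can violate the single-crossing requirement of item (2) exactly when $\zeta$ falls in the pooled region of $I_\theta$, and passing to the largest admissible threshold forces $\zeta$ into the region where $I_\theta$ coincides with the supporting line $L_\theta$.

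With $\zeta\in[\theta,\mu(\theta)]$ in hand, the rest is routine convex analysis. On $[0,\theta]$, $I_\theta=I_{F_0}\ge I$. On $[\theta,\zeta]$, $I_\theta=L_\theta$ is affine with $L_\theta(\theta)=I_{F_0}(\theta)\ge I(\theta)$ and $L_\theta(\zeta)=I_\theta(\zeta)=I(\zeta)$, so convexity of $I$ gives $L_\theta\ge I$ on $[\theta,\zeta]$; combined with the previous line, $I_\theta\ge I$ on $[0,\zeta]$. Because $L_\theta-I\ge0$ on $[\theta,\zeta]$ with equality at $\zeta$, the left derivative of $L_\theta-I$ at $\zeta$ is $\le0$, i.e., $I_\theta'(\zeta^-)=F_0(\theta)\le I'(\zeta^-)$ --- the degenerate case $\theta=\zeta$ being handled identically from $I\le I_{F_0}$ with equality at $\zeta$. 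Finally, for $x\ge\zeta$ convexity of $I$ and this derivative inequality give $I(x)\ge I(\zeta)+I'(\zeta^-)(x-\zeta)\ge L_\theta(\zeta)+F_0(\theta)(x-\zeta)=L_\theta(x)$, and $I(x)\ge I_{\overline F}(x)$ always, so $I(x)\ge\max\{L_\theta(x),I_{\overline F}(x)\}=I_\theta(x)$ on $[\zeta,\mu(\theta)]$, while $I_\theta(x)=I_{\overline F}(x)\le I(x)$ for $x\ge\mu(\theta)$. This establishes the two inequalities of item (2) and completes the argument.
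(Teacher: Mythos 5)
Your proof is correct, and it ultimately produces the same object as the paper --- the upper censorship whose linear segment is the tangent to $I_{F_0}$ through the point $(\zeta, I(\zeta))$ --- but by a genuinely different construction. The paper works in the dual: it defines $M$ as the set of slopes $m \le I'(\zeta^-)$ for which the line through $(\zeta, I(\zeta))$ with slope $m$ stays weakly below $I_{F_0}$ on $[0,\zeta]$, takes $m:=\min M$, shows that this minimal line must touch (hence be tangent to) $I_{F_0}$ at some $\theta\in[0,\zeta]$, and then assembles the candidate policy from $I_{F_0}$, that line, and $I_{\overline F}$, verifying membership in $\mathcal I$ by hand. You work in the primal: intermediate value applied to $\theta\mapsto I_\theta(\zeta)$, plus maximality of the solution. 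The trade-off is which nondegeneracy fact requires work: the paper must prove that the minimal-slope line actually meets $I_{F_0}$ on $[0,\zeta]$ (its claims on $m$ and $\theta$), whereas you get tangency for free but must prove your key step --- that the maximal IVT solution places $\zeta$ weakly below the pooling mean $\mu(\theta)$, so that item (1) pins down the linear piece of $I_\theta$ rather than merely recording $I(\zeta)=I_{\overline F}(\zeta)$. Your identification of that step as the crux is exactly right: without it the IVT choice of $\theta$ can violate the derivative inequality and the single crossing in item (2), and your maximality-plus-continuity-of-$\mu$ argument closes the gap cleanly (the only unstated caveat is the degenerate region where $F_0(\theta)=1$, in which case $I_\theta=I_{F_0}$ and maximality forces $\theta=\zeta$ anyway). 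The remaining convexity verifications --- chord above the function on $[\theta,\zeta]$, subgradient inequality at $\zeta$ for $x\ge\zeta$ --- coincide with what the paper's terse ``by construction'' is implicitly invoking. A small advantage of your route is that you never need to check separately that the constructed function lies in $\mathcal I$, since you work with $I_\theta$ itself rather than with a piecewise candidate.
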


		\begin{proposition}\label{app:prop:upper}
			Under Assumption \ref{ass:1}, there exists an equilibrium in which the Sender's information policy is an upper censorship.
		\end{proposition}
		\begin{proof}[Proof of Proposition \ref{app:prop:upper}]
				By Lemma \ref{app:lem:uniqueness}, if $F^*\in\mathcal F$ maximizes
				\begin{align*}
					W\colon F \mapsto	\int _{[0,1]}\int _{[0,p]} V_\lambda (\Delta I_{ F} (c)) \frac{\partial g_{}}{\partial c}(c|\lambda)\diff c 
					+
					\int _{[p, 1]} V_\lambda (\Delta I_{ F} (c)) \frac{\partial g_{}}{\partial c}(c|\lambda)\diff c \diff G (\lambda),
				\end{align*}
				then there exists an equilibrium in which $F^*$ is the Sender's experiment. Suppose two experiments $F, H\in \mathcal F$ such that $I_F(x) \ge I_H(x)$ for all $x\in [0,p]$ and $I_F(x) \le I_H(x)$ for all $x\in [p, 1]$. Because (i) $V_\lambda$ is nondecreasing, (ii) $\frac{\partial g_{}}{\partial c}(\cdot |\lambda)$ is nonnegative on $[0,p]$ and nonpositive on $[p,1]$, it follows that $W(F)\ge W(H)$. Hence, the result follows from Lemma \ref{app:lem:knownzeta}. In particular, by Proposition \ref{app:prop:existence}, there exists an equilibrium experiment $\hat F$, and by  Lemma \ref{app:lem:knownzeta} there exists $F^*$ such that $F^*$ weakly improves upon $\hat F$ in terms of $W$ and $I_{F^*}$ is an upper censorship.
	\end{proof}

	\begin{remark}\label{app:rem:singlepeakedness}
		The proof of Proposition \ref{app:prop:upper} uses the fact that Assumption \ref{ass:1} implies a common peak of the single-peaked $g(\cdot|\lambda)$ for all $\lambda $. The main part of the argument shows that: for all $\lambda$, an upper censorship policy --- which does not vary with $\lambda$, and is constructed as in Lemma \ref{app:lem:knownzeta} --- improves upon any given information policy. In the proof, we do not use the fact that the marginal distribution of the outside option ($\int _{[0,1]} g(\cdot |\lambda)\diff G(\lambda)$) is single-peaked.
	\end{remark}
			
	\subsubsection{Proof of Proposition \ref{prop:cost}}
		\begin{proof}[Proof of Proposition \ref{prop:cost}]
		The proof has four steps. First, we establish a single-crossing property of the derivative of the Sender's payoff given $I_\theta$ with respect to $\theta$, in three claims. Second, we establish a monotonicity property of the Sender's payoff given $I_\theta$ with respect to $\theta$ given certain conditions, in two claims. The third step verifies that the optimality properties and the hypotheses in the statement of the Proposition imply the aforementioned conditions. The final step completes the argument.

		Let's fix an equilibrium $\langle F, e, \alpha \rangle$. 
			\emph{Step 1.}  Let strict single-peakedness hold. We claim that the function $(\theta, \zeta)  \mapsto \int_{[\theta ,  \zeta]} (c-\theta) \frac{\partial }{\partial c} g(c|\lambda) \diff c$ crosses zero at most once and from above, that is:
			\begin{align*}
				 & \int_{[\theta ,  \zeta]} (c-\theta) \frac{\partial }{\partial c} g(c|\lambda) \diff c \le  0 \ \text{only if} \  \int_{[\theta ',  \zeta']} (c-\theta') \frac{\partial }{\partial c} g(c|\lambda) \diff c < 0,
			\end{align*}
			for all $\theta \le \theta'$ and $\zeta \le \zeta '$, with $\theta'<\zeta', \ \theta<\zeta$. If $p\le \theta'$, the result holds.  If $\int_{[\theta ,  \zeta]} (c-\theta) \frac{\partial }{\partial c} g(c|\lambda) \diff c \le  0$, then $p<\zeta$. We have
			\begin{align*}
				\int_{[\theta ,  \zeta]} (c-\theta) \frac{\partial }{\partial c} g(c|\lambda) \diff c = \int_{[\theta ,  \theta ')} (c-\theta) \frac{\partial }{\partial c} g(c|\lambda) \diff c + \int_{[ \theta ', p)} (c-\theta) \frac{\partial }{\partial c} g(c|\lambda) \diff c  \\ + \int_{[p ,  \zeta ]} (c-\theta) \frac{\partial }{\partial c} g(c|\lambda) \diff c.
			\end{align*}
			Let $\int_{[\theta ,  \zeta]} (c-\theta) \frac{\partial }{\partial c} g(c|\lambda) \diff c \le 0$. Then:
			\begin{align*}
				\int_{[\theta ,  \theta ')} (c-\theta) \frac{\partial }{\partial c} g(c|\lambda) \diff c + \int_{[\theta ', p)} (c-\theta) \frac{\partial }{\partial c} g(c|\lambda) \diff c  \le 
				- \int_{[p ,  \zeta ]} (c-\theta) \frac{\partial }{\partial c} g(c|\lambda) \diff c,
			\end{align*}
			which implies, by $\theta'<p$:
			\begin{align*}
				 \int_{[\theta ', p)} (c-\theta) \frac{\partial }{\partial c} g(c|\lambda) \diff c  < - \int_{[p ,  \zeta ]} (c-\theta) \frac{\partial }{\partial c} g(c|\lambda) \diff c.
			\end{align*}
			From the above inequality and $p<\zeta$, we have:
			\begin{align*}
				 \int_{[\theta ', p)} (c-\theta) \frac{\partial }{\partial c} g(c|\lambda) \diff c  + \int_{[p ,  \zeta ]} (c-\theta) \frac{\partial }{\partial c} g(c|\lambda) \diff c 
				+ \int_{(\zeta, \zeta']} (c-\theta) \frac{\partial }{\partial c} g(c|\lambda) \diff c <0,
			\end{align*}
			so the claim follows.

						\emph{Step 2.}   Let strict single-peakedness hold. $\int_{[\theta ,  \overline c ]} (c-\theta) \frac{\partial }{\partial c} g(c|\lambda) \diff c$ is increasing in $\lambda$ if $p\le \overline c $, for $\overline c := \overline c_\lambda  (\Delta  I_\theta)$ and $\overline c \in (x_0, 1)$. The claim holds because $\lambda \mapsto \overline c_\lambda  (\Delta  I_\theta)$ is decreasing under our hypotheses.
	
			\emph{Step 3.}  Let Assumption \ref{ass:1} hold. We claim that $ \overline c_\lambda  (\Delta  I_\theta)>\theta$, $ \underline p\le \overline c_\lambda  (\Delta  I_\theta)$, and $\overline c_\lambda  (\Delta  I_\theta) \in (x_0, 1)$, if: $I_\theta'$ maximizes $W$ on $\mathcal F$ and $F_0, 
			~\overline F$ do not maximize $W$ on $\mathcal F$.  If $\overline c_\lambda  (\Delta  I_\theta)\le \theta$, then $F_0$ maximizes $W$ on $\mathcal F$. If $\overline c_\lambda  (\Delta  I_\theta)< \underline p$, then $F_0$ maximizes $W$ on $\mathcal F$. The rest of the claim follows from similar arguments.

				\emph{Step 4.} Let $x_{\overline \theta} := \frac{\int _{\overline \theta}^1 \theta \diff F_0(\theta)}{1-F_0(\overline \theta)}$, for threshold state $\overline \theta\in[0,1]$. By Lemma \ref{app:lem:uniqueness}, we compute the derivative of the Sender's expected utility, given information policy $I_ {\overline \theta}$, with respect to $\overline \theta$, which is:
			\begin{align*}
				\frac{\partial  }{\partial \overline \theta } W (I'_{\overline \theta}) =
				\begin{cases}
					\frac{\partial F_0}{\partial \theta} (\overline \theta ) \int_{[\max \{\overline \theta, \underline c_\lambda(\Delta I_{\overline \theta}) \} , \overline c_\lambda(\Delta I_{\overline \theta})] } (x-\overline \theta)\frac{\partial g_{}}{\partial c}(x|\lambda) \diff x, & \ \text{if}  \  \theta < \overline c_\lambda  (\Delta  I_\theta)  \\
					0, & \ \text{if}  \  \theta > \overline c_\lambda  (\Delta  I_\theta).
					\end{cases}
			\end{align*}
			As claimed above, under our hypotheses, $  \theta_{\varepsilon} < \overline c_\lambda  (\Delta  I_{ \theta_{\varepsilon}})  $. Moreover, by strict single-peakedness, there exists a unique optimal upper censorship $I_\eta$ if $\lambda =0$, with $\eta \in(0, 1)$ \citep[Lemma 7.]{kolotilin_censorship_2022}

			Let's complete the proof. First, claim 1.\ implies that $\overline \theta \mapsto W (I'_{\overline \theta}) $ crosses zero only once and from above: at $\theta_\varepsilon$. By claims 2.\ and 3., $\theta_\varepsilon>\eta$, for $\varepsilon>0$.
		\end{proof}

		\subsubsection{Proof of Proposition \ref{prop:reverse}}
		
		Proposition \ref{prop:reverse} is implied by Proposition \ref{app:prop:RHR}. We assume that $k$ is linear. We assume that $F_0$ and $K$ are continuously differentialble, the reverse hzard rate $K'/K$ is nonincreasing.
		
		\begin{lemma}
			Let the outside option put full mass at $\zeta\in (x_0, 1)$, the distributions of $\lambda$ and $\theta$ admit a continuous PDF, the support of $\lambda $ include $[0, \Delta I_1(\zeta)]$, and the reverse hazard rate $f_\lambda (\lambda)/F_\lambda(\lambda)$ of the attention cost be nonincreasing in $\lambda$. There exists a unique $\theta$ such that $\theta<\zeta<\overline x_\theta$ and
			\begin{align*}
				(1-F_0( \theta)) (\zeta - \theta) = \frac{F_\lambda (\Delta I_{ \theta }(\zeta ))}{f_\lambda(\Delta I_{ \theta}(\zeta))}.
			\end{align*}
		\end{lemma}
		
		\begin{proof}
			Let the outside option put full mass at $\zeta\in (x_0, 1)$. We note that the state $\theta$ is such that $\theta\le \zeta\le \overline x_\theta$  if and only if $\overline x_\theta\in [\zeta, \overline x_\zeta]$. In addition, because $\theta \mapsto \overline x_\theta$ is increasing, there exists $a\in [0, \zeta]$ such that:  $\theta\le \zeta\le \overline x_\theta$ if and only if $\theta\in[a, \zeta]$. In the rest of this proof, we show that the result holds by the intermediate-value theorem.
			
			We consider the function $v\colon \theta\mapsto 	(1-F_0( \theta)) (\zeta - \theta) - \frac{F_\lambda (\Delta I_{ \theta }(\zeta ))}{f_\lambda(\Delta I_{ \theta}(\zeta))}$. It holds that $v$ is decreasing and continuous on $[a, \zeta]$, because: if $\theta<\zeta<\overline x_\theta$, then $\Delta I _\theta(\zeta )$ is increasing in $\theta$. Moreover, $v(\zeta)=- \frac{F_\lambda (\Delta I_{ \theta }(\zeta ))}{f_\lambda(\Delta I_{ \theta}(\zeta))}<0$, because $\theta>\zeta>x_0>0$.
			
			It suffices to show that $v(a)> 0$. We note that $\theta=a$ if and only if $\overline x_\theta = \zeta $. Hence, $\Delta I_a(\zeta)=0$. We conclude that $v(a) =	(1-F_0( a)) (\zeta - a) >0 $, because $a>0$ by $\zeta > x_0$.
		\end{proof}
		\begin{proposition}\label{app:prop:RHR}
			Let the outside option put full mass at $\zeta\in (x_0, 1)$, the distributions of $\lambda$ and $\theta$ admit a continuous PDF, the support of $\lambda $ include $[0, \Delta I_1(\zeta)]$, the reverse hazard rate $f_\lambda (\lambda)/F_\lambda(\lambda)$ of the attention cost be nonincreasing in $\lambda$, and let $\overline \theta $ be the unique state $\theta$ such that: $\theta<\zeta<\overline x_\theta$, and
			\begin{align*}
				(1-F_0( \theta)) (\zeta - \theta) = \frac{F_\lambda (\Delta I_{ \theta }(\zeta ))}{f_\lambda(\Delta I_{ \theta}(\zeta))}.
			\end{align*}
			The following conditions hold:
			\begin{enumerate}
				\item There exists an equilibrium in which the information policy is a $\overline \theta $ upper censorship. 
				\item If there exist $\theta\in\Theta$ and an equilibrium in which the information policy is the $\theta $ upper censorship, then $\theta=\overline \theta$.
			\end{enumerate}
		\end{proposition}
		\begin{proof}
			The first part is implied by the following two steps.
			
			\emph{Step 1.} By Lemma , there exists a state $ \theta^\star  \in\Theta $ such that: there exists an equilibrium in which the information policy is a $\theta^\star $ upper censorship. By Lemma, the state $\theta^\star $ solves $	\max _{\theta\in\Theta } v(\theta)$, letting $v\colon \theta\mapsto \left ( 1- I'_\theta (\zeta^-)\right) F_\lambda (\Delta I _\theta (\zeta))$. Note that every state $\theta$ such that $\theta\ge \zeta$ induces $v(\theta)=v(\zeta) = v(1)=(1-F_0(\zeta))F_\lambda(\Delta I_1(\zeta))$. Moreover every state $\theta$ such that $\zeta\ge \overline x_\theta$ induces $v(\theta)=v(0)=0$. Hence, we have that, for all $\theta\in [0, a]\cup [\zeta, 1]$, $v(\zeta )\ge v(\theta)$.
			
			\emph{Step 2.} As shown in the proof of Lemma , there exists $a\in (0, \zeta)$ such that: the state $\theta$ is such that $\theta\le \zeta\le \overline x_\theta$  if and only if  $\theta\in[a, \zeta]$. Moreover, if $\theta< \zeta< \overline x_\theta$, then $I_\theta(\zeta) = I_1(\theta) + F_0(\theta)(\zeta  -\theta)$. Hence,  if $\theta< \zeta< \overline x_\theta$, then by the implicit function theorem $v$ is continuously differentiable at $\theta$, and
			\begin{align*}
				v'(\theta) = f_0(\theta) ((1-F_0(\theta))(\zeta-\theta)f_\lambda(\Delta I_\theta(\zeta))-F_\lambda(\Delta I_\theta(\zeta))).
			\end{align*}
			Hence, we have that: for all $\theta \in (a, \zeta) $, $v(\overline \theta) > v(\theta)$.

			\emph{Step 3.} By the observations in Step 1, we have that $v(\zeta) = \lim _{\varepsilon\to 0^+}v(\zeta+\varepsilon)$. For $\varepsilon \in (0,  \theta -a )$, it holds that:
			\begin{align*}
				v(\theta-\varepsilon) =(1-F_0(\theta-\varepsilon))F_\lambda(I_1(\theta-\varepsilon) + F_0(\theta-\varepsilon)(\zeta - \theta+\varepsilon)).
			\end{align*}
			Hence, $v$ is continuous at $\theta$. Therefore, $v(\overline \theta)> v(\zeta)$.

			We conclude that $\theta^\star=\overline \theta$.
		\end{proof}

		\subsubsection{Proof of Proposition \ref{prop:unique}} 
		The proof of Proposition \ref{prop:unique} has two steps. The first and main step has the same structure as that of Proposition \ref{app:prop:upper}. In particular, Lemma \ref{app:lem:twosidedpart1} generalizes the construction of Lemma \ref{app:lem:knownzeta} to construct: an information policy $I^\star$ that preserves the extensive margin and improves upon an arbitrary information policy $I$, for large $p$. $I^\star$ induces two censorship regions, separated by a full-revelation region. The second step of the proof: (1) adds a second censorship region at the top to include the general case of $p>x_0$, and (2) verifies that eliminating the bottom censorship region improves upon Sender's payoff. 	For the rest of this section, we omit reference to $\lambda$ and we fix an equilibrium $\langle F, e(\cdot), \alpha \rangle$.		
		\begin{lemma}\label{app:lem:twosidedpart1}
			Let $I \in \mathcal I$ and define $c^*:= \overline c(\Delta I)$. There exists an information policy $I^\star$ that satisfies the following properties:
			\begin{enumerate}
				\item (FEAS) $I^\star$ is feasible, i.e., $I^\star \in \mathcal I$;
				\item (EM) $I^\star$ produces the same extensive margin as $I$, i.e., $\overline c(\Delta I^\star)= c^*$ and $\underline c(\Delta I^\star)= \underline c(\Delta I) $;
				\item (IMPR) $	\Delta I^\star (x)  \geq 0$, for all $ x\in [\underline c(\Delta I),c^*]$;
				\item (CENS)  There exist $x_\ell, \theta_\ell, \theta_m, x_m$ such that $0\le x_\ell \le  \theta_\ell \le  \theta_m \le  x_m \le 1$ and
				\begin{align*}
					I^\star (x) = \begin{cases}
						I_{\overline F}(x ) &, x \in [0, x_\ell] \\
						I_{F_0} (\theta_\ell) + F_0(\theta_\ell)(x-\theta_\ell) &, x \in ( x_\ell, \theta_\ell] \\
						I_{F_0}(x) &, x \in ( \theta_\ell, \theta_m]  \\
						I_{F_0} (\theta_m) + F_0(\theta_m)(x-\theta_m) &, x \in (\theta_m, x_m] \\
						I_{\overline F} (x)  &, x \in (x_m, \infty ).
					\end{cases}
				\end{align*}
			\end{enumerate}
		\end{lemma}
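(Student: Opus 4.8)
The statement to be proved is a two-sided strengthening of Lemma~\ref{app:lem:knownzeta}: where that lemma pins down a single point and the informativeness on either side of it, Lemma~\ref{app:lem:twosidedpart1} must in addition pin down \emph{both} endpoints of the extensive margin, $\underline c(\Delta I)$ and $\overline c(\Delta I)=:c^*$, while keeping the pool--reveal--pool shape of CENS. The plan is to iron $I$ twice, once from above and once from below, and then read off FEAS, EM, IMPR and CENS.

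\textbf{Step 1 (fix the top).} Apply Lemma~\ref{app:lem:knownzeta} with $\zeta=c^*$. This produces an upper censorship $I_{\theta_m}$ --- a threshold state $\theta_m\le c^*$ together with the point $x_m$ at which its tangent meets $I_{\overline F}$ --- such that $I_{\theta_m}(c^*)=I(c^*)$, hence $\Delta I_{\theta_m}(c^*)=\Delta I(c^*)$, with $I_{\theta_m}\ge I$ on $[0,c^*]$, $I_{\theta_m}\le I$ on $[c^*,\infty)$, and $I_{\theta_m}'(c^{*-})\le I'(c^{*-})$. Since $\Delta I_{\theta_m}$ is single-peaked with peak at $x_0\le c^*$, agrees with $\Delta I$ at $c^*$, and lies below $\Delta I$ to the right of $c^*$, the upper endpoint is preserved: $\overline c(\Delta I_{\theta_m})=c^*$. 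Moreover $\Delta I_{\theta_m}\ge\Delta I\ge 0$ on $[0,c^*]$, which starts IMPR; but the extra informativeness at the bottom typically pushes $\underline c(\Delta I_{\theta_m})$ strictly below $\underline c(\Delta I)$, which Step~2 corrects.

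\textbf{Step 2 (fix the bottom).} For $\theta_\ell\in[0,\theta_m]$ let $I^{(\theta_\ell)}$ be given by the CENS formula with upper data $(\theta_m,x_m)$ fixed and $x_\ell$ the unique point at which the tangent to $I_{F_0}$ at $\theta_\ell$ meets $I_{\overline F}$; one checks $x_\ell=\mathbb E[\theta\mid\theta\le\theta_\ell]$, and that $I^{(\theta_\ell)}$ is exactly $I_{\theta_m}$ with all states below $\theta_\ell$ pooled. Thus $I^{(\theta_\ell)}$ coincides with $I_{\theta_m}$ on $[\theta_\ell,\infty)$ (so $\overline c$ stays at $c^*$), equals $I_{\overline F}$ on $[0,x_\ell]$, and is pointwise nonincreasing in $\theta_\ell$; hence $\theta_\ell\mapsto\underline c(\Delta I^{(\theta_\ell)})$ is continuous and nondecreasing, with value $\underline c(\Delta I_{\theta_m})\le\underline c(\Delta I)$ at $\theta_\ell=0$. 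If its value at $\theta_\ell=\theta_m$ (the binary policy splitting the state at $\theta_m$) is at least $\underline c(\Delta I)$, the intermediate value theorem gives $\theta_\ell^\star$ with $\underline c(\Delta I^{(\theta_\ell^\star)})=\underline c(\Delta I)$; put $I^\star:=I^{(\theta_\ell^\star)}$. Then FEAS is routine --- the right-slopes $0\le F_0(\theta_\ell^\star)\le F_0(\cdot)\le F_0(\theta_m)\le 1$ give convexity, and each affine piece is a tangent to $I_{F_0}$ (hence $\le I_{F_0}$) truncated where it meets $I_{\overline F}$ (hence $\ge I_{\overline F}$), so $I_{\overline F}\le I^\star\le I_{F_0}$; CENS is the construction; EM combines Steps~1--2; and IMPR is automatic once FEAS holds.

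\textbf{Main obstacle.} The delicate point is the reachability claim underlying the intermediate value step: pooling the bottom only as far as $\theta_m$ --- which is all that is allowed without disturbing the top, and hence $\overline c=c^*$ --- must already raise $\underline c$ to at least $\underline c(\Delta I)$. This reduces to showing that the tangent to $I_{F_0}$ at $\theta_m$, which equals $I$ at $c^*$ and has slope at most $I'(c^{*-})$, evaluates to at most $I\big(\underline c(\Delta I)\big)$ at the point $\underline c(\Delta I)$; the tools are the slope inequality from Lemma~\ref{app:lem:knownzeta}, convexity of $I$, single-peakedness of $\Delta I$, and the fact that $I_{\theta_m}$ coincides with that tangent on $[\theta_m,x_m]\ni c^*$ while lying below $I$ there. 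If the direct bound does not close, one separates the trivial case $\underline c(\Delta I)\le\mathbb E[\theta\mid\theta\le\theta_m]$ (where $\Delta I^{(\theta_m)}$ vanishes at $\underline c(\Delta I)$) from the complementary case handled by the above comparison. The remaining ingredients --- continuity and measurability of $\theta_\ell\mapsto(x_\ell,\underline c(\Delta I^{(\theta_\ell)}))$ and of $\overline c$ --- are routine given the monotone comparative statics of the effort problem (Corollary~\ref{cor:superm}, Lemma~\ref{lem:interval}).
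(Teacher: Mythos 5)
Your two-stage plan --- first iron the top with Lemma \ref{app:lem:knownzeta} at $\zeta=c^*$, then pool the bottom up to some $\theta_\ell\le\theta_m$ and invoke the intermediate value theorem --- founders on exactly the reachability claim you flag as the main obstacle, and that claim is false. The problem is that Lemma \ref{app:lem:knownzeta} can return a tangency point $\theta_m$ strictly \emph{below} $\underline c(\Delta I)$ while making the pooling line $T_{\theta_m}\colon x\mapsto I_{F_0}(\theta_m)+F_0(\theta_m)(x-\theta_m)$ lie strictly above $I$ at $\underline c(\Delta I)$. Every policy in your family coincides with $I_{\theta_m}$ on $[\theta_m,\infty)$, so no choice of $\theta_\ell\in[0,\theta_m]$ can reduce the net informativeness delivered to cutoffs in $(\theta_m,\underline c(\Delta I))$, and $\underline c$ stays strictly below its target for the entire family. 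Concretely: take $F_0$ uniform on $[0,1]$, $k(e)=e$, $\lambda=0.05$, and $I(x)=\max\{I_{\overline F}(x),\,0.2(x-0.2),\,0.8(x-0.42)\}$, which lies in $\mathcal I$. Then $\underline c(\Delta I)=0.45$ and $c^*=0.57$ with $I(c^*)=0.12$; the flattest line through $(0.57,0.12)$ under $I_{F_0}=x^2/2$ is tangent at $\theta_m\approx 0.279$, so $T_{\theta_m}(x)\approx 0.279x-0.039$ and $T_{\theta_m}(0.45)\approx 0.087>0.05=I(0.45)$. Hence every $I^{(\theta_\ell)}$ gives cutoffs in $(0.319,0.45)$ a net informativeness above $0.05$, forcing $\underline c(\Delta I^{(\theta_\ell)})\approx 0.319<0.45$ for all $\theta_\ell$. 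Your proposed repair (comparing $T_{\theta_m}$ at $\underline c(\Delta I)$ with $I(\underline c(\Delta I))$, or falling back on the case $\underline c(\Delta I)\le\mathbb E[\theta\mid\theta\le\theta_m]$) does not apply here: the comparison goes the wrong way and $\underline c(\Delta I)=0.45>\mathbb E[\theta\mid\theta\le\theta_m]\approx 0.14$.

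The paper avoids this by treating the two margins symmetrically rather than sequentially: it anchors \emph{both} affine pieces at the margin points, taking the steepest line through $(\underline c,I(\underline c))$ that stays below $I_{F_0}$ on $[\underline c,\infty)$ (tangent at some $\theta_\ell\ge\underline c$, i.e., to the \emph{right} of $\underline c$) and the flattest line through $(\overline c,I(\overline c))$ that stays below $I_{F_0}$ on $[0,\overline c]$ (tangent at some $\theta_m\le\overline c$), then shows $\ell\le\lambda^*\le m$ so that $\theta_\ell\le\theta_m$ and the two pieces can be joined by $I_{F_0}$ in between (with a separate, easier first case when the chord between the two margin points is itself feasible). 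Because $I^\star$ passes through $(\underline c,I(\underline c))$ and $(\overline c,I(\overline c))$ and satisfies $I^\star\le I$ outside $[\underline c,\overline c]$ and $I^\star\ge I$ inside, EM holds by construction and no intermediate value argument is needed. If you want to salvage a sequential argument, the lower pooling region must be allowed to extend to a tangency point to the right of $\underline c(\Delta I)$, which your parametrization (capped at $\theta_m$ from Step 1) rules out.
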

		\begin{proof}
			We use the notation: $\overline c(\Delta I)=: \overline c $, $\underline c(\Delta I)=: \underline c$.
			In the first step, we prove the result for the case in which there is a feasible information policy that is a straight line between the points $\underline p := (\underline c, I(\underline c))$ and $\overline p:= (\overline c, I(\overline c))$. In the second step we analyze the other case.
			
			\emph{First Step}. Let's define the line $i$ such that $x\mapsto I(\underline c) + \lambda^\ast (x -\underline c)$, with slope $\lambda^* := \frac{I(\overline c) - I(\underline c)}{\overline c - \underline c}$. We claim that $i^\star (x):= \max \{i(x), I_{\overline F}(x)\}$ satisfies all properties. $i^\star$ is FEAS by hypothesis. $i^\star$ is EXT because $i(\underline c) = I(\underline c)$ and $i(\overline c) = I(\overline c)$. $i^\star$ is IMPR because $I$ is convex and $i^\star$ is EXT. $i^\star$ is CENS with $\theta_\ell=\theta_m=x_m$, because: (i) EXT of $i^\star$ and convexity of $I$ imply that $i^\star$ is affine on  $[\underline  c, \overline c]$, (ii) $\lambda ^* \in [0,1]$ and EXT imply, with $I\in \mathcal I$, that there are intersection points $\widetilde x_1, \widetilde x_2$, with $\widetilde x_1\leq \underline c \leq \overline c\leq \widetilde x_2$, such that: $i^\star(x)=I(x)$ if $x\in [0, \widetilde x_1]\cup [\widetilde x_2, 1]$.
			
			\emph{Second Step}. In this case, $i^\star$ is not FEAS. Because $i^\star$ satisfies FEAS at $x$ if $x\leq \underline c$ and if $x\geq\overline c$, there exists a point $x^*\in (\underline c, \overline c)$ such that: $i(x^*) > I_{F_0}(x^*)$. Let's define:
			\begin{align*}
				L &: = \{ \lambda \in [ I'(\underline c), 1] : I(\underline c) + \lambda (x -\underline c)
				\leq I_{F_0}(x) \text{ for all } x \in [\underline c, \infty) \}, \\
				M &:=  \{ \lambda \in [0, I'(\overline c)] : I(\overline c) + \lambda (x -\overline  c) \leq  I_{F_0}(x) \ \text{for all } x \in [0, \bar c]\},
			\end{align*}
			$\ell := \max L$, $m := \min M$, and the lines
			\begin{align*}
				y_\ell : x &\mapsto I (\underline c) + \ell (x - \underline c), \\
				y_m : x& \mapsto I ( \bar c)+ m (x - \bar c). 	
			\end{align*}
			
			As part of the rest of the proof, we establish some lemmata.
			\begin{lemma} It holds that $\ell$ and $m$ are well-defined.
			\end{lemma}
			\begin{proof}
				$L$ is nonempty because $I'(\underline c)\in L$, which follows from: (i) $I_{F_0}(x)\geq  I(x)$ for all $x$ and (ii) $I'(\underline c) \in \partial I(\underline c)$. $M$ is nonempty because $I'(\overline c)\in M$, which follows from: (i) $I_{F_0}(x)\geq   I(x)$ for all $x$ and (ii) $I'(\overline c) \in \partial I(\overline  c)$. $L, M$ are closed because $I_{F_0}$ is continuous. $L, M$ are bounded.
			\end{proof}
			\begin{lemma}
				There exists a unique pair of numbers $( \theta_\ell,  \theta_m) \in [\underline c, 1]\times [0, \bar c]$ such that: $y_\ell (\theta_\ell) = I_{F_0} (\theta_\ell ) $ and $ y_m (\theta_m) = I_{F_0} (\theta_m )$.
			\end{lemma}
			\begin{proof}
				Suppose there does not exist such $\theta_\ell$. There exists a sufficiently small $\varepsilon > 0$ such that: (i) $\ell + \varepsilon \in L$ and (ii) $I(\underline c ) + (\ell + \varepsilon) (x-\underline c) < I_{F_0} (x) $ for all $x\in [\underline c, \infty )$; we note that $\theta_\ell =1$ contradicts $\ell \in L$ because $I_{F_0}'(x) <1 $ if $x<1$. Uniqueness of $\theta_\ell $ follows from convexity of $I_{F_0}$.
				
				Suppose there does not exists such $\theta_m$. There exists a sufficiently small $\varepsilon > 0$ such that: (i) $m - \varepsilon \in M$ and (ii) $I(\bar c ) + (m - \varepsilon) (x-\bar c) < I_{F_0} (x) $ for all $x\in [0, \bar c)$; we note that $\theta_m =0$ contradicts $I\ne I_{\overline F}$. Uniqueness of $\theta_m $ follows from convexity of $I_{F_0}$.
			\end{proof}
			\begin{lemma}
				It holds that $\theta_\ell \leq \theta_m$.
			\end{lemma}
			\begin{proof}
				Let's first prove that: it suffices to show that $\ell \leq m$. Suppose $\ell \leq m$, then, from $\ell \in \partial  I_{F_0}(\theta_\ell)$, $m \in \partial I_{F_0}(\theta_m)$, and $I_{F_0}$ being strictly convex, we have: $\theta_\ell \leq \theta_m$.
				
				Next, we show that $\ell \leq \lambda^*$. Suppose that: $\ell > \lambda^*$.   Then: $I(x)  + \ell (x - \underline c) > I(\underline c) + \lambda^* (x -\underline c)$ for all $x > \underline c$. Therefore, because $\ell >0$, we get:
				\begin{align*}
					I_{F_0}(x^*) \geq I(\underline c) + \lambda^* (x^*- \underline c).
				\end{align*}
				We reach a contradiction with the definition of $x^*$, so: $\ell \leq \lambda^*$.
				
				Let's prove that $m \geq \lambda^*$. Suppose $m < \lambda^*$.  Then: $I(x)  + m (x - \overline c) > I(\overline  c) + \lambda^* (x -\overline  c)$ for all $x < \overline c$. Therefore, because $m >0$, we get:
				\begin{align*}
					I_{F_0}(x^*) \geq I(\underline c) + \lambda^* (x^* - \underline c).
				\end{align*}
				We reach a contradiction with the definition of $x^*$, so: $m \geq \lambda^*$. Therefore, we have $m \geq \lambda ^* \geq \ell$, which implies $\theta_m\geq \theta_\ell$.
			\end{proof} 
			We define a candidate $I^\star$ and verify that $I^\star$ has the desired properties.
			\begin{align*}
				I^\star (x):= \begin{cases}
					\max \{ I_{\overline F}(x ), I(\underline c) + \ell (x -\underline c) \} &, x \in [0, \theta_\ell]\\
					I_{F_0}(x) &, x \in [\theta_\ell, \theta_m]\\
					\max \{ I_{\overline F} (x), I(\overline c) + m (x -\overline c) \} &, x \in [\theta_m, \infty)\\
				\end{cases}
			\end{align*}
			Let's first verify that $I^\star$ is well-defined. Recall that $\ell \in \partial I_{F_0}(\theta_\ell)$ and $m \in \partial I_{F_0}(\theta_m)$. Because $I(\underline c) + \ell (0 -\underline c)<I_{\overline F} (0)$ and $I(\underline c)\geq I_{\overline F} (\underline c)$, $\max \{ I_{\overline F} (x ), I(\underline c) + \ell (x -\underline c) \} =I_{\overline F} (x )$ if $x < y_0$; and $ \max \{ I_{\overline F} (x ), I(\underline c) + \ell (x -\underline c) \} = I(\underline c) + \ell (x -\underline c)$ if $x > y_0$; for some $y_0\in [0, \theta_\ell]$. Similarly,  there exists $y_2\in [\theta_m, 1]$ such that: $\max \{ I_{\overline F}(x ), I(\overline c) + m (x -\overline c) \} = I_{\overline F} (x )$ if $x > y_2$, and $ \max \{ I_{\overline F} (x ), I(\overline c) + m (x -\overline c) \} = I(\overline c) + m (x -\overline c)$ if $x < y_2$.
			\begin{enumerate}
				\item CENS follows from the definition of $I^\star$ and the conclusion of the above paragraph.
				\item IMPR on $[\underline c, \theta_\ell]$  and $[\theta_m, \overline c]$ follows from convexity of $I$, and on $[\theta_\ell, \theta_m]$ follows from FEAS of $I$ in that region.
				\item EM follows by construction of $I^\star$.
				\item FEAS is established as in the last step of the proof of Lemma \ref{app:lem:knownzeta}.
			\end{enumerate}
		\end{proof}

		\begin{proof}[Proof of Proposition \ref{prop:unique}]	
			Let's define information policy $J$ by: letting $J$ equal $I^\star$, constructed as in Lemma \ref{app:lem:twosidedpart1} by replacing $ c^*$ with $p$, for $x\in[0,x_m^\circ]$, defining the point $x_m^\circ$ in which $I^\star$ intersects the line $j\colon x\mapsto I(\overline c) + I'(\overline c) (x-\overline c) $; and letting $J$ equal $x\mapsto \max \{ I_ {\overline F}(x), j(x)\}$ on $[x_m^\circ, \infty)$.
			
			It suffices to show that: if the resulting information policy $J$ induces a censorship region at the bottom, then there is an improvement over $J$ that is a bi-upper censorship. Suppose that $I^\star$ is affine on $[x_\ell, \theta_\ell]$ and $I^\star$ equals $I_{\overline F}$ on $[0, x_\ell]$, for $0<x_\ell< \theta_\ell$ (for notation, see Lemma \ref{app:lem:twosidedpart1}.) By construction, $I^\star (\theta_\ell)= I_{F_0}(\theta_\ell)$. Let's define information policy $K$ by 
			\begin{align*}
				K(x) = \begin{cases}
					I_{F_0}(x)&, 0\le x\le \theta_\ell,\\
					J(x) &, x\ge \theta_\ell.
				\end{cases}
			\end{align*}
			We have $K\ge J$, so $K$ induces a weakly lower $\underline c_\lambda$, than $J$. Hence, by $\gamma\ge 0$, it suffices to verify that the expected Receiver's action is weakly higher under $K$ than under $J$. Because $p\ge x_0$, the argument in the proof of Proposition \ref{app:prop:upper} suffices. Specifically, by Lemma \ref{app:lem:uniqueness}, we have
			\begin{align*}
				W(K')-W(J') =&\int _{[0,\theta_\ell]} \big ( V_\lambda (\Delta K (c)) -V_\lambda (\Delta J (c)) \big ) \frac{\partial g_{}}{\partial c}(c|\lambda)\diff c \\ 
				&\ge 0,
			\end{align*}
			in which the inequality follows from the definition of $I^\star$, which includes $p\ge \theta_\ell$. Hence $K$ is a bi-upper censorship that improves upon $I$, for arbitrary $I$, in terms of $U_G$.
		\end{proof}

		\section{Supplementary material} \label{app:sec:supp}
		\subsection{Numerical example for Theorem \ref{thm:mechanisms}}

		In this section, we illustrate Theorem \ref{thm:mechanisms} using an example with a persuason mechanism with two information policies. We assume that the state, the outside option, and the attention cost are binary, and that $k(e)=e$.

		We consider a particular case of the model in Section \ref{sec:model}, in terms of $k$ and the support of the type and state: $\theta\in \{0,1\}$, $c\in\{0.6, 0.75\}$, $\lambda \in\{0,0.1\}$, and $k(e)=e$. We assume that $\theta=1$ with probability $1/2$, and define the information policies $I_A$ and $I_B$ as follows. $I_A$ is the information policy of the experiment $A$, corresponding to the unique Bayes-plausible mean distribution supported in $\{0, 5/6\}$, and $I_B$ is the information policy of the experiment $B$, corresponding to the distribution supported in $\{2/7, 1\}$; see Figure \ref{fig:supex}. The  experiments can be visualized with the following matrices, which record the conditional probability of a (column) message given a (row) state, and are not Blackwell ranked.
		
			 \begin{figure}[t!]
			\centering
			\subfloat[The upper envelope $I_C$ of the information policies $I_A$ and $I_B$.\protect\label{fig:supex:A}]{
				\begin{tikzpicture}[scale=0.8]
					\begin{axis}[
						axis lines = middle,
						axis line style={-latex},
						x label style={at={(axis cs:1.1, -0.125)}},
						xlabel={\scriptsize posterior mean},
						xticklabels={$0$, $2/7$, $x_0$, $0.6$, $0.8$, $1$}, 
						xtick={0, 2/7, 0.5, 0.6, 0.8,  1},
						xticklabel style = {font=\scriptsize},
						ytick = \empty,
						yticklabels = \empty,
						xmax=1.05,
						ymax=0.55,
						samples=400,
						legend style={font=\footnotesize},
						legend pos=north west,
						typeset ticklabels with strut,  
						enlarge x limits=false         
						]
						\addplot[domain=0:1, thick, green] {max(0.0006, x-0.5, (0.4)*x)}; \addlegendentry{$I_A$};
						\addplot[domain=0:1, thick, dotted, red] {max(0.0006, x-0.5, 0.7*(x-2/7))}; \addlegendentry{$I_B$};
						\addplot[domain=0:1, ultra thick, loosely dashed, blue] {max(0.0006, x-0.5, 0.4*x, 0.7*(x-2/7))}; \addlegendentry{$I_C$};
						\addplot[domain=0:1, name path = F, very thick, dashed, opacity=0] {0.5*x}; 
						\addplot[domain=0:1, name path = N, very thick, densely dotted, opacity=0] {max(0.0004,x-0.5)}; 
						\addplot[color=black, fill=black, fill opacity=0.15] fill between [of=F and N,];
					\end{axis}
				\end{tikzpicture}
			}\quad 
			\subfloat[The net informativeness of the information policies $I_A,\,I_B,\,I_C$. For all $\lambda$, type $(0.6, \lambda)$ chooses effort 1, type $(0.75, 0)$ chooses effort 1, and type $(0.75, 0.1)$ chooses effort 0.\protect\label{fig:supex:B}]{
				\begin{tikzpicture}[scale=0.8]
					\begin{axis}[
						axis lines = middle,
						axis line style={-latex},
						x label style={at={(axis cs:1.1, -0.125)}},
						xlabel={ posterior mean},
						xticklabels={$0$, $2/7$, $x_0$, $0.6$, $0.8$, $1$}, 
						xtick={0, 2/7, 0.5, 0.6, 0.8,  1},
						xticklabel style = {font=\footnotesize},
						ytick = {0.1},
						yticklabels = {0.1},
						xmax=1.05,
						ymax=0.55,
						samples=400,
						legend pos=north west,
						typeset ticklabels with strut,  
						enlarge x limits=false         
						]
						\addplot[domain=0:1, thick, green] {max(0.0006, x-0.5, (0.4)*x)-max(0.0006, x-0.5)}; \addlegendentry{$\Delta I_A$};
						\addplot[domain=0:1, thick, dotted, red] {max(0.0006, x-0.5, 0.7*(x-2/7))-max(0.0006, x-0.5)}; \addlegendentry{$\Delta I_B$};
						\addplot[domain=0:1, ultra thick, loosely dashed, blue] {max(0.0006, x-0.5, 0.4*x, 0.7*(x-2/7)) - max(0.0006, x-0.5)}; \addlegendentry{$\Delta I_C$};
						\addplot[domain=0:1, thick, dashed, black] {0.1};
					\end{axis}
				\end{tikzpicture}
			}
			\caption{Panel (a) illustrates the upper envelope $I_C$ of the information policies in the IC persuasion mechanism; panel (b) illustrates the choices of effort. The shaded region illustrates the set of information policies (the state is uniformly distributed in $\{0,1\}$.) The information policies $I_A$ and $I_B$ cross at $x=2/3$.}
			\label{fig:supex}
		\end{figure}

		\[
		A: \quad \begin{array}{c|cc}
			& L_A & H_A \\
			\hline
			0 & 0.8 & 0.2 \\
			1 & 0 & 1 
		\end{array}, \quad 
		\quad B: \quad \begin{array}{c|cc}
			& L_B & H_B \\
			\hline
			0 & 1 & 0 \\
			1 & 0.4 & 0.6
		\end{array}.
		\]
		
		Because only messages $H_A$ and $H_B$ lead to action 1, the expected material payoff $\overline U_i(c)=\int_{[0,1]}(x-c)_+\diff I_i'(x)$ given outside option $c$ and experiment $i$  is given by
		\begin{align*}
			\overline U_A(0.6)=7/50, \quad \overline U_B(0.6)=6/50,\quad \overline U_A(0.75)=0.05, \quad  \overline U_B(0.75)=0.075.
		\end{align*}
		
		The following persuasion mechanism is IC:
		\begin{align*}
			I_{(c, \lambda)} = \begin{cases}
				I_A, &\text{if} \ c=0.6\ \text{and} \ \lambda \in\{0, 0.1\},\\
				I_B, &\text{if} \ c=0.75\ \text{and} \ \lambda \in\{0, 0.1\},
			\end{cases}
		\end{align*}
		and induces the following action and effort choices: for all $\lambda$, type $(0.6, \lambda)$ chooses action 1 with probability 0.6, that is, the probability of $H_A$, and effort 1; type $(0.75, 0)$ chooses action 1 with probability 0.3, that is, the probability of $H_B$, and effort 1; type $(0.75, 0.1)$ chooses 1 with probability 0, which is her optimal action without information, and effort 0.
		
		We claim that the experiment $C$ induces the same effort and action choice of every type.
		\[
		C: \quad \begin{array}{c|ccc}
			& L_C & M_C & H_C \\
			\hline
			0 & 0.8& 0.2 & 0 \\
			1 & 0 & 0.4 & 0.6
		\end{array}
		\]
		We focus on experiment $C$ because it induces the information policy
		\begin{align*}
			I_C \colon x \mapsto \max \{I_A(x), I_B(x)\},
		\end{align*}
		 as in the construction for Theorem \ref{thm:mechanisms}. We start by observing that $C$ induces the mean distribution given by: $0$ with probability $0.4$, $2/3$ and 1 with probability $0.3$ each. Hence, $M_C$ leads to action 1 only for outside option $0.6$, whereas $H_C$ leads to action 1 for both outside options. As an implication, if type $(0.75, 0.1)$ chooses effort 0 and all other types choose effort 1, the mechanism is equivalent to $I_C$. The expected material payoff $\overline U_C(c)=\int_{[0,1]}(x-c)_+\diff I_C'(x)$ is given by
		\begin{align*}
			\overline U_C(0.6)=7/50, \quad \overline U_C(0.75)=0.075.
		\end{align*}
		Therefore, the mechanism is equivalent to $C$ in the sense of Definition \ref{def:mechanisms:eq}.
		 
		\subsection{Symmetric-information benchmark} \label{app:sec:KG}
		
		For this section, the type distribution puts full mass at $(\zeta, \kappa)$, $k$ is linear, $\kappa>0$, and $F_0$ admits a density. The Sender's \emph{problem} is:
		\begin{align*}
			\max _{I\in \mathcal I}  ( 1- I'(\zeta^-))  \mathbf 1_{
				\{I\in \mathcal I \, \mid \, \Delta I (\zeta) \geq \kappa\}
			}(I),
		\end{align*}
		because an experiment $F$ is an equilibrium experiment iff $I_F$ solves the above maximization, due to a generalization of the argument of \citet{gentzkow_rothschild-stiglitz_2016}. 	If $\zeta >1$, any information policy is optimal. If $\zeta < x_0$, $I_{\overline F}$ is optimal. Let $1\geq \zeta \geq x_0$.
		\begin{lemma}\label{app:lem:KG}
			There exists $\theta \in [0,\zeta]$ such that: $I_\theta $ solves the Sender's problem and $\Delta I_{\theta} \leq \kappa$, with equality if $\theta>0$. 
		\end{lemma}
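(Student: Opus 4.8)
The plan is to use Lemma~\ref{app:lem:knownzeta} to collapse the Sender's infinite-dimensional problem to the scalar family of upper censorships $\{I_\theta : \theta \in [0,\zeta]\}$, and then solve that scalar problem.

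\textbf{Reduction.} Since $I \le I_{F_0}$ for every $I\in\mathcal I$, the largest net informativeness attainable at $\zeta$ is $\overline\Delta := I_{F_0}(\zeta) - I_{\overline F}(\zeta) = \Delta I_{F_0}(\zeta)$. If $\kappa > \overline\Delta$ no policy is feasible, the indicator in the Sender's problem is identically zero, every policy is a solution, and $\theta = 0$ (i.e.\ $I_0 = I_{\overline F}$, with $\Delta I_0(\zeta) = 0 \le \kappa$) satisfies the claim. So assume $\kappa \le \overline\Delta$; the constraint set is then nonempty and the Sender's value equals $\sup\{1 - I'(\zeta^-) : I\in\mathcal I,\ \Delta I(\zeta)\ge\kappa\} \ge 0$, with infeasible policies contributing $0$. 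Given any feasible $I$, Lemma~\ref{app:lem:knownzeta} produces $\theta\in[0,\zeta]$ with $I_\theta(\zeta)=I(\zeta)$ --- hence $\Delta I_\theta(\zeta)=\Delta I(\zeta)\ge\kappa$, so $I_\theta$ is feasible --- and $I_\theta'(\zeta^-)\le I'(\zeta^-)$, so $I_\theta$ gives a weakly higher value. Hence it suffices to minimize $\theta\mapsto I_\theta'(\zeta^-)$ over $\{\theta\in[0,\zeta] : \Delta I_\theta(\zeta)\ge\kappa\}$ and check that the minimizer has $\Delta I_\theta(\zeta)=\kappa$.

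\textbf{The scalar objects.} Here I would use the explicit shape of the upper censorship: full revelation on $[0,\theta]$ and, on $(\theta,\infty)$, the pointwise maximum of $I_{\overline F}$ and the line of slope $F_0(\theta)$ through $(\theta, I_{F_0}(\theta))$, which --- by an integration-by-parts identity --- meets $I_{\overline F}$ precisely at the truncated mean $x_\theta := \big(1-F_0(\theta)\big)^{-1}\int_\theta^1 t\,\diff F_0(t)$. Consequently: when $\zeta \in [\theta, x_\theta]$, $I_\theta'(\zeta^-)=F_0(\theta)$ and $\Delta I_\theta(\zeta) = (1-F_0(\theta))(x_\theta - \zeta)$; when $x_\theta < \zeta$, $I_\theta'(\zeta^-) = 1$ and $\Delta I_\theta(\zeta) = 0$. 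Let $\theta^{\max}\in[0,\zeta]$ be the threshold with $x_{\theta^{\max}}=\zeta$, so the first regime is $\theta\in[\theta^{\max},\zeta]$. Because $F_0$ has a density, on $[\theta^{\max},\zeta]$ the map $\theta\mapsto\Delta I_\theta(\zeta)$ is continuous and nondecreasing (its derivative equals $f_0(\theta)(\zeta-\theta)\ge 0$), rising from $0$ at $\theta^{\max}$ to $\overline\Delta$ at $\zeta$, while $\theta\mapsto I_\theta'(\zeta^-)=F_0(\theta)$ is nondecreasing; on $[0,\theta^{\max})$ one has $\Delta I_\theta(\zeta)=0$ and $I_\theta'(\zeta^-)=1\ge F_0(\theta^{\max})$.

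\textbf{Conclusion and main obstacle.} Set $\theta^\star := \min\{\theta\in[\theta^{\max},\zeta] : \Delta I_\theta(\zeta)\ge\kappa\}$; by monotonicity and the intermediate value theorem $\Delta I_{\theta^\star}(\zeta)=\kappa$ and $\theta^\star\ge\theta^{\max}$. If $\kappa>0$ the feasible thresholds are exactly $[\theta^\star,\zeta]\subseteq[\theta^{\max},\zeta]$, where $I_\theta'(\zeta^-)=F_0(\theta)$ is minimized at $\theta^\star$; if $\kappa=0$ every threshold is feasible but $I_\theta'(\zeta^-)\ge F_0(\theta^{\max})=I_{\theta^\star}'(\zeta^-)$ for all $\theta$. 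Either way $I_{\theta^\star}$ minimizes $I_\theta'(\zeta^-)$ over feasible upper censorships, so by the Reduction step $I_{\theta^\star}$ attains the Sender's value, and $\Delta I_{\theta^\star}(\zeta)=\kappa$ yields the rest (in particular equality whenever $\theta^\star>0$). I expect the delicate point to be the Scalar-objects step: $\theta\mapsto I_\theta'(\zeta^-)$ jumps \emph{down}, from $1$ to $F_0(\theta^{\max})$, as $\theta$ crosses $\theta^{\max}$, which is exactly why the optimal censorship need not be the one with the smallest feasible threshold and why $\kappa=0$ must be handled separately. Degenerate configurations ($\overline\Delta=0$, e.g.\ $\zeta=1$ or $\zeta$ above the support of $F_0$; $\zeta=x_0$, giving $\theta^{\max}=0$; priors whose density vanishes on subintervals of $[0,\zeta]$) should be checked but leave the argument intact.
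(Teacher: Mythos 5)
Your proof is correct. The reduction to upper censorships with threshold in $[0,\zeta]$ via Lemma \ref{app:lem:knownzeta} is exactly the paper's first move, so the two arguments share their key lemma and overall architecture; they part ways only in how the resulting one-dimensional problem is closed. The paper supposes a solution $I_{\theta^\star}$ with a slack constraint, parametrizes $\theta^\star$ by $y=I(\zeta)$ via the implicit function theorem, and computes that the Sender's value $1-F_0(\theta)$ is strictly decreasing in $I(\zeta)$, so the constraint must bind (or the solution is payoff-equivalent to $I_{\overline F}$). You instead write $I_\theta'(\zeta^-)$ and $\Delta I_\theta(\zeta)$ in closed form via the truncated mean $x_\theta$ and exploit their global monotonicity in $\theta$ on $[\theta^{\max},\zeta]$. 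Your route is more elementary (no differentiability of $\theta\mapsto I_\theta(\zeta)$ beyond what the density supplies), establishes attainment of the supremum rather than presupposing a solution exists, and confronts head-on the corner cases ($\kappa=0$, $\kappa>\overline\Delta$, and the downward jump of $I_\theta'(\zeta^-)$ at $\theta^{\max}$) that the paper's case ``$\Delta I(\zeta)<\kappa$: Sender indifferent with $I_{\overline F}$'' absorbs silently; what the paper's marginal computation buys is brevity and a derivative formula echoed in the discussion of Proposition \ref{prop:cost}. One small point to make explicit: your identity $\Delta I_\theta(\zeta)=(1-F_0(\theta))(x_\theta-\zeta)$ uses $I_{\overline F}(\zeta)=\zeta-x_0$, i.e.\ $\zeta\ge x_0$, which is indeed the only regime in which the lemma is invoked.
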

		\begin{proof}
			Let $\mathcal I^u := \{I\in \mathcal I \mid  I = I_{\theta},\ \theta\in [0,\zeta] \}$. Without loss of optimality by Lemma \ref{app:lem:knownzeta}, we consider solutions in $\mathcal I^u$. Suppose there exists a solution $I \in \mathcal I^u$, such that $I=I_{\theta^\star}$, for some $\theta^\star \in (0,1)$. We distinguish three cases. 
			
			(1) If $\Delta I (\zeta) < \kappa$, then Sender is indifferent between $I$ and $I_{\overline F}$, so the result holds. (2) If $\Delta I (\zeta) = \kappa$, the result holds. (3) If $\Delta I (\zeta) > \kappa$, then, by definition of $I$ at $y = I (\zeta)$,
			\begin{align*}
				I_{F_0} (\theta^\star) + F_0(\theta^\star) (\zeta - \theta^\star) - y=0.
			\end{align*}
			By the implicit function theorem, there exists a differentiable function $t$ such that $t \colon y \mapsto \theta^\star$ and, for $ t(y) < \zeta $ 
			\begin{align*}
				t'(y) = \frac{1}{(\zeta-t(y))\frac{\partial F_0}{\partial \theta} (t(y))}.
			\end{align*}
			Let's define the value of the Sender's expected utility given $I_{\theta}$ as $v \colon (0,1) \to [0,1]$ such that $ v\colon \theta \mapsto 1- I_{\theta}'(\zeta^-)$.
			Because $ I_{\theta^\star}'(\zeta^-) = F_0(\theta^\star)$, $v$ is differentiable in $\theta$ at $\theta^\star$. The derivative of $v$ with respect to $I(\zeta)$ is:
			\begin{align*}
				-\frac{\partial F_0}{\partial \theta} (t(I(\zeta))) \frac{1}{(\zeta-t(I(\zeta)))\frac{\partial F_0}{\partial \theta} (t(I(\zeta)))},
			\end{align*} 
			if $\zeta > t(I(\zeta))$. It follows that we can consider without loss solutions $I\in \mathcal I^u$ that satisfy: $\Delta I_{\theta}(\zeta) = \kappa$ and $I=I_{\theta}$, or $\Delta I(\zeta) < \kappa$.
		\end{proof}

		\subsection{Auxiliary results and omitted proofs}
		\begin{lemma}[Subdifferential of convex functions] \label{app:fact:convex}
			Let $S\subseteq \mathbb R$, $f\colon S \to \mathbb R$ be convex and $\varphi \colon \mathbb R \to \mathbb R$ be a nondecreasing convex function on the range of $f$. It holds that:
			\begin{enumerate}
				\item The function $\varphi \circ f$ is convex on $S$;
				\item For all $y\in S$, letting $t=f(y)$, we have
				\begin{align*}
					\{\alpha u \mid (\alpha, u) \in \partial \varphi (t)\times \partial f(y)\} =	\partial \varphi \circ f(y).
				\end{align*}
			\end{enumerate}
		\end{lemma}
		\begin{proof}
			See \citet[Proposition 8.21 and Corollary 16.72.]{bauschke_convex_2017}
		\end{proof}
%
%
%
	
		\begin{lemma}[Implications of the Envelope Theorem]\label{app:lem:ET}
			Let $f\colon [0,1]^2 \to \mathbb R$ exhibit increasing differences and be such that: $f(\cdot, a)$ is continuous for all $a\in[0,1]$, $f(e, \cdot)$ is nondecreasing for all $e\in[0,1]$, the derivative with respect to the variable $a$, $\frac{\partial f}{\partial a} (e, \cdot)$, exists, and is bounded for all $e\in[0,1]$. The following hold.
			\begin{enumerate}
				\item We have $\argmax_{e \in [0,1]} f(e, a)\ne \emptyset$ for all $a\in[0,1]$.
				\item The function $a\mapsto \max _{e \in [0,1]} f(e, a)$ is nondecreasing and absolutely continuous.
				\item If $a\mapsto \frac{\partial f}{\partial a}(e, a)$ is nondecreasing for all $e\in[0,1]$, then $a\mapsto \max _{e \in [0,1]} f(e, a)$ is convex.
				\item If $f$ exhibits strictly increasing differences, $a\mapsto \frac{\partial f}{\partial a}(e, a)$ is nondecreasing, $f(e, \cdot)$ is increasing for all $e\in(0,1]$, $ \argmax_{e \in [0,1]} f(e, a) \cap (0,1] \ne \emptyset$, and $1\ge a' >a\ge 0$, then
				\begin{align*}
					\max _{e \in [0,1]} f(e, a')>\max _{e \in [0,1]} f(e, a).
				\end{align*}
			\end{enumerate}
		\end{lemma}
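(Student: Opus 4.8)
The plan is to handle the four items in order; each reduces to a standard fact once the right observation is isolated, and — worth flagging to the reader — none of the four conclusions actually invokes the increasing-differences hypothesis, which is inherited from the ambient assumptions but does no work here. \emph{Item 1} is the Weierstrass extreme value theorem: for fixed $a$ the map $f(\cdot,a)$ is continuous on the compact interval $[0,1]$, so it attains a maximum and $\argmax_{e\in[0,1]}f(e,a)\neq\emptyset$; this also lets me write $V(a):=\max_{e\in[0,1]}f(e,a)$ as a genuine maximum throughout. For the monotonicity half of \emph{Item 2}, fix $a\le a'$ and pick $e_a\in\argmax_e f(e,a)$; then $V(a)=f(e_a,a)\le f(e_a,a')\le V(a')$, using that $f(e,\cdot)$ is nondecreasing.

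The substantive point in \emph{Item 2} is absolute continuity, which I would get from a \emph{uniform} Lipschitz bound. Let $B:=\sup_{(e,a)}|\tfrac{\partial f}{\partial a}(e,a)|$, finite by hypothesis. For each fixed $e$ the derivative of $f(e,\cdot)$ exists everywhere on $[0,1]$ and is bounded by $B$, so the mean value theorem applied on each subinterval gives $|f(e,a)-f(e,a')|\le B|a-a'|$; thus $f(e,\cdot)$ is $B$-Lipschitz with a constant independent of $e$. Taking the supremum over $e$, $|V(a)-V(a')|\le\sup_e|f(e,a)-f(e,a')|\le B|a-a'|$, so $V$ is $B$-Lipschitz on $[0,1]$, hence absolutely continuous.

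For \emph{Item 3}, under the extra hypothesis that $a\mapsto\tfrac{\partial f}{\partial a}(e,a)$ is nondecreasing, each $f(e,\cdot)$ is differentiable with nondecreasing derivative and therefore convex on $[0,1]$; then $V$, being the pointwise supremum of the convex family $\{f(e,\cdot)\}_{e\in[0,1]}$, is convex. For \emph{Item 4}, pick $e^{\star}\in\argmax_{e\in[0,1]}f(e,a)\cap(0,1]$, nonempty by hypothesis, so $e^{\star}>0$ and $f(e^{\star},\cdot)$ is increasing; then for $a'>a$ we get $V(a')\ge f(e^{\star},a')>f(e^{\star},a)=V(a)$, which is the claim. (Only the existence of a strictly positive maximizer at $a$ is used; the strict increasing differences and the monotonicity of $\tfrac{\partial f}{\partial a}$ are not needed for this particular statement.)

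I do not expect a genuine obstacle. The one place needing more than a one-line citation is the Lipschitz estimate in Item 2: one must note that a function whose derivative exists everywhere and is bounded by $B$ is $B$-Lipschitz, and — crucially — that $B$ is independent of $e$, so the bound survives the supremum defining $V$. Everything else (Weierstrass; a differentiable function with nondecreasing derivative is convex; suprema of convex functions are convex; Lipschitz implies absolutely continuous) is textbook.
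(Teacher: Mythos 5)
Your proof is correct, and it takes a genuinely more elementary route than the paper's. The paper proves items 2--4 through the machinery the lemma is named after: it uses increasing differences to extract a nondecreasing selection $e^\star$ from the argmax (Topkis), invokes the Milgrom--Segal envelope theorem to write $V(a)=V(0)+\int_{[0,a]}\frac{\partial f}{\partial a}(e^\star(\tilde a),\tilde a)\,\mathrm d\tilde a$, gets absolute continuity and monotonicity from that integral representation, gets convexity in item 3 from the integrand being nondecreasing (Rockafellar), and gets item 4 by bounding the integral below. You instead get absolute continuity from a uniform Lipschitz bound passed through the supremum, convexity from the fact that a pointwise supremum of convex functions is convex, and item 4 from the two-line comparison $V(a')\ge f(e^\star,a')>f(e^\star,a)=V(a)$. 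Your observation that increasing differences does no work is accurate for your argument (the paper genuinely uses it, but only because of the route it chose), and your item 4 is arguably cleaner: the paper's final step reduces to $(a'-a)\frac{\partial f}{\partial a}(e',a)$, whose strict positivity is delicate since an increasing function can have zero derivative at a point, whereas your argument uses the increasingness of $f(e',\cdot)$ directly. What the paper's approach buys is the explicit integral formula for $V$, which it reuses elsewhere (e.g.\ in the proofs of Proposition A.2 and Lemma A.3), so the envelope representation is not wasted effort in context. One caveat: your constant $B:=\sup_{(e,a)}\left|\frac{\partial f}{\partial a}(e,a)\right|$ is finite only if ``bounded for all $e$'' is read as a bound uniform in $e$; but the paper's own invocation of the envelope theorem requires the same integrable-envelope condition, so this is a shared reading of the hypothesis rather than a gap in your argument.
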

		\begin{proof}
			By upper semi-continuity of $f$, $\argmax_{e \in [0,1]} f(e, a)\ne \emptyset$, so 1.\ holds. Then, by the increasing-differences property of $f$, there exists a nondecreasing selection $e^\star \colon a\mapsto \argmax_{e \in [0,1]} f(e, a)$ on $[0,1]$ \citep{topkis_minimizing_1978}. By our hypotheses, we apply the envelope theorem \citep{milgrom_envelope_2002}, letting $V(a) := \max _{e \in [0,1]} f(e, a)$, to establish that $V$ is absolutely continuous and
			\begin{align*}
				V(a) = V(0) + \int _{[0,a]} \frac{\partial f}{\partial a} (e^\star(\tilde a), \tilde a) \diff \tilde a.
			\end{align*}
			$V$ is nondecreasing because $\frac{\partial f}{\partial a}\ge0 $. Hence, 2.~holds.
			
			Let's establish that $V$ is convex if $a\mapsto \frac{\partial f}{\partial a}(e, a)$ is nondecreasing. By the increasing-differences property of $f$: (i) $e\mapsto  \frac{\partial f}{\partial a} (e,  a) $ is nondecreasing, and (ii) there exists a nondecreasing $e^\star \colon a\mapsto \argmax_{e \in [0,1]} f(e, a)$. As a result, $a\mapsto \frac{\partial f}{\partial a}( e^\star (a), a)$ is nondecreasing. Thus, $V$ is convex \citep[Theorem 24.8.]{rockafellar_convex_1970} Hence, 3.~holds.
			
			Let $a'>a$, for $a', a\in[0,1]$, and $e' \in \argmax_{e \in [0,1]} f(e, a) \cap (0,1] $. Then: $V(a') - V(a) = \int  _{[a, a']} \frac{\partial f}{\partial a} (e^*(\tilde a), \tilde a) \diff \tilde a$ for every selection $e^*$ of $\argmax_{e \in [0,1]} f(e, a) \cap (0,1]$. We have the following chain of inequalities under the additional hypotheses stated in part 4.:
			\begin{align*}
				V(a') - V(a) &\ge \int_{[a, a']} \frac{\partial f}{\partial a} (e', \tilde a) \diff \tilde a \\
				& \ge  \int_{[a, a']} \frac{\partial f}{\partial a} (e', a) \diff \tilde a,
			\end{align*}
		in which the first inequality follows from the strict increasing-differences property of $f$ and the definition of $e'$, the second inequality holds because $a\mapsto \frac{\partial f}{\partial a}(e, a)$ is nondecreasing (for the first inequality, in particular, we note that: (i) every selection $e^*$ of $\argmax_{e \in [0,1]} f(e, a) \cap (0,1]$ is nondecreasing, (ii) there exists a selection $e^*$ of $\argmax_{e \in [0,1]} f(e, a) \cap (0,1]$ such that $e^*(a)=e'$.) Item 4.~holds because $\int _{[a, a']} \frac{\partial f}{\partial a} (e', a) \diff \tilde a = (a'-a) \frac{\partial f}{\partial a} (e', a) $.
		\end{proof}
	
	\begin{proof}[Proof of Lemma \ref{app:lem:continuity}]
		Let's fix $\lambda$, $F\in \mathcal F$, and $\varepsilon>0$, and define $p_\lambda := \int _{[0,1]} \left | \frac{\partial g_{}}{\partial c}(c|\lambda) \right | \diff c$. Let $\delta := \frac{\varepsilon}{p_\lambda}$ if $p_\lambda>0$, and let $\delta$ be an arbitrary positive number otherwise. Let $H\in \mathcal  F$ be such that $\int _{[0,1]} \left | H(x)-F(x) \right | \diff x <\delta$. 
		
		We first establish the claim that: $\left | V_\lambda(\Delta I_H(c))-V_\lambda(\Delta I_F(c)) \right | <\delta$. By definition of $V_\lambda$ and the envelope theorem (Lemma \ref{app:lem:ET}), there exists a selection $\xi  $ from $a \mapsto \argmax_{e \in [0,1]}ea-\lambda k(e)$ such that:
		\begin{align*}
			\left | V_\lambda(\Delta I_H(c))-V_\lambda(\Delta I_F(c)) \right |  &= \int _{[\min \{\Delta I_H(c), \Delta I_F(c)\}, \max \{\Delta I_H(c), \Delta I_F(c)\}] } \xi  (a)\diff a.
		\end{align*}
		The codomain of $\xi  $ is $[0,1]$, so, by the above equality:
		\begin{align*}
			\left | V_\lambda(\Delta I_H(c))-V_\lambda(\Delta I_F(c)) \right | \le |\Delta I_H(c)- \Delta I_F(c)|.
		\end{align*}
		We have the following chain of inequalities,
		\begin{align*}
			\left | V_\lambda(\Delta I_H(c))-V_\lambda(\Delta I_F(c)) \right |& \le \left | \int _{[0,c]} H(x) - F(x) \diff x \right |			\\
			& \le \int _{[0,c]} \left | H(x) - F(x) \right  | \diff x 			\\
			& \le \delta,
		\end{align*}
		which establishes the claim. 
		
		We establish the continuity of the function $W_\lambda$ on $\mathcal F$. We have the following chain of inequalities,
		\begin{align*}
			\left |W_\lambda (H) - W_\lambda(F) \right | 
			& \le \int _{[0,1]} \left| V_\lambda (\Delta I_{H} (c))- V_\lambda (\Delta I_{F} (c)) \right | \left |  \frac{\partial g_{}}{\partial c}(c|\lambda) \right |\diff c \\
			& \le \delta p_\lambda\\
			&\le \varepsilon.
		\end{align*}
		Thus, $W_\lambda$ is continuous on $\mathcal F$. The result follows from the following chain of inequalities,
		\begin{align*}
			\left | W(H) - W(F) \right | & \le \int_{[0,1]} \left | W_\lambda (H) - W_\lambda(F)  \right | \diff G(\lambda) \\
			&\le \varepsilon.
		\end{align*}
	\end{proof}

		\begin{proof}[Proof of Lemma \ref{app:lem:knownzeta}]
				Let $\zeta \in [0,1]$. Let $M:= \{m \in [0, I'(\zeta^-)] : I(\zeta) + m(x-\zeta) \le I_{F_0} (x) \ \text{for all} \ x \in [0, \zeta]\}$, and $m:=\min M$. We construct an information policy starting from the line $x\mapsto I(\zeta) + m (x-\zeta)$, via the next three claims.
				
				\emph{(1) $m$ is well-defined.} (i) $M$ is nonempty, because $0\le I'(\zeta^-)\le 1$ (which follows from $I\in \mathcal I$), $I'(\zeta^-) \in \partial I(\zeta^-)$ and $I(x) \leq I_{F_0}(x)$ for all $x$; (ii) $M$ is closed, because the mapping $m\mapsto I(\zeta) + m (x-\zeta)$ is a continuous function on $[0, I'(\zeta^-)]$; (iii) $M$ is bounded because $I'(\zeta^-)\le 1$, from $I\in \mathcal I$.
				
				\emph{(2) There exists $\theta \in [0,\zeta]$ such that $I_{F_0}(\theta) = I(\zeta) + m (\theta-\zeta )$.} If $m=0$, then $0=I_{F_0}(0)\ge I(\zeta) \ge 0$. Hence, taking $\theta =0$ verifies our claim. Let $m>0$, and suppose there does not exist  $\theta \in [0, \zeta]$ such that $I_{F_0}(\theta) = I(\zeta) + m (\theta-\zeta )$. There exists $\overline \varepsilon>0$ such that: $I(\zeta) + (m-\varepsilon)(x-\zeta)< I_{F_0}(x)$ for all $x\in[0,\zeta]$ and $0<\varepsilon\le \overline \varepsilon$. Moreover, for a sufficiently small $\varepsilon>0$, we have $m-\varepsilon \in M$. Thus, we have a contradiction with the definition of $m$. 
				
				\emph{(3) $m \in \partial I_{F_0}(\theta)$ and  $I(\zeta) + m(x-\zeta)  = I_{F_0}(\theta) + (x-\theta) {F_0}(\theta) $ for all $x$.} First, we argue that $m \in \partial I_{F_0}(\theta)$. By convexity of $I_{F_0}$ and definition of $\theta$,  $x\mapsto I(\zeta) + m(x-\zeta) $ is tangent to $I_{F_0}$ at $\theta$. Thus, $m$ is a subgradient of $I_{F_0}$ at $\theta$. Now, we argue that  $I(\zeta) + m(x-\zeta)  = I_{F_0}(\theta) + (x-\theta) {F_0}(\theta) $ for all $x$. $m=F_0(\theta)$ because $I_{F_0}$ is differentiable (by the fact that $F_0(x^-)=F_0(x), x\in\mathbb R$.) The equality follows because $x\mapsto I(\zeta) + m(x-\zeta) $ is equal to $I_{F_0}$ at $x=\theta$.

				We define the following function.
				\begin{align*}
					I^u \colon 
					& x \mapsto \begin{cases}
						I_{F_0} (x) &, x \in [0, \theta] \\
						I(\zeta) + m(x-\zeta) &, x \in (\theta, \zeta] \\
						\max \{ I(\zeta) + m(x-\zeta), I_{\overline F}(x)  \} &, x\in (\zeta, \infty).
					\end{cases}
				\end{align*}
				
				Now, we claim that $I^u=I_\theta$. It suffices to show that: (i) for some $x_{u} \in [0,1]$
				\begin{align*}
					I^u(x) = \begin{cases}
						I_{F_0}(x) &, x \in [0, \theta] \\
						I_{F_0}(\theta) + (x-\theta) F_0(\theta) &, x \in (\theta, x_u] \\
						I_{\overline F} (x) &, x \in (x_u, \infty),
					\end{cases}
				\end{align*}
				and (ii) $I^u\in \mathcal I$. We claim that (i) holds by means of the next three claims.

				\emph{There exists $x_u \in [\zeta, 1]$ such that: }
				\begin{align*} 
					I(\zeta) + m(x-\zeta) \geq I_{\overline F}(x) &\quad , x \in [0, x_u]\\
					I(\zeta) + m(x-\zeta) \le I_{\overline F}(x) &\quad, x \in (x_u, 1].
				\end{align*}
				Let's note that: (a) $I(\zeta)  \geq I_{\overline F}(\zeta)$; (b) by $m \in \partial I_{F_0}(\theta)$ and $I_{F_0}(1) = I_{\overline F}(1)$, we have that $I_{\overline F}(1) \geq I(\zeta) + m(1-\zeta)$, and (c) the two functions, $x\mapsto I(\zeta) + m(x-\zeta)$ and $I_{\overline F}$, are affine with slopes, respectively, $m$ and $1$, such that: $m\leq 1$.
				
				We proceed to verify that (ii) holds, i.e.\ $I^u\in \mathcal I$, via the next two claims.

				\emph{(1) $I_{\overline F}(x) \le I^u(x)\le I_{F_0}(x)$ for all $x\in \mathbb R_+$, and $I^u$ is locally convex at all $x\notin \{\theta, x_u\}$.}
				If $x \in[0, \theta)$, then $I^u$ is locally convex and $I_{\overline F} (x) \le I^u(x) \le I_{F_0}(x)$. If $x \in (\theta, \zeta)$, then $I^u$ is affine, $I_{\overline F}(x) \leq I(x) \le I^u(x)$ by construction of $I^u$ and definition of $I$, and $I^u(x)\le I_{F_0}(x)$ by $m\in \partial I_{F_0}(\theta)$. If $x\in [\zeta, \infty)$, then $I$ is locally convex (because it is the maximum of affine functions), $I_{\overline F} (x) \le I^u(x)$ by construction of $I^u$, $I^u(x)\le I_{F_0}(x)$ because: (i) $m\in \partial I_{F_0}(\theta )$ and (ii) $I_{\overline F}(x) \le I_{F_0}(x)$. To verify global convexity, it suffices to verify the next claim.

				\emph{(2) The subdifferential of $I^u$ at $x \in \{\theta, x_u\}$ is nonempty.} First, we argue that $m$ is a subgradient of $I^u$ at $\theta$. This claim follows from the fact that the slope of $I^u$ at $\theta$ is a subgradient of $ I_{F_0}$ at $\theta$, and $I^u(\theta) = I_{F_0}(\theta)$. On $[0, \theta]$, $I^u=I_{F_0}$, and on $[x_u, \infty)$ $I^u $ is above the line $x\mapsto I(\zeta) + m(x-\zeta)$. Thus, $m\in \partial I^u(\theta)$. Second, the fact that $m$ is a subgradient of $I^u$ at $x_u$ follows from the definition of $x_u$.

				We established that $I^u(x)=I_\theta(x)$ for all $x\in[0,1]$. (1.)\ and (2.)\ hold by construction.
			\end{proof}

	\newpage
	\bibliographystyle{te}
	\phantomsection
	\addcontentsline{toc}{section}{\refname}
	\bibliography{extbib.bib}

\end{document}